\documentclass{llncs}
\usepackage[utf8]{inputenc}
\usepackage{amsmath}
\usepackage{xcolor}
\usepackage{amssymb}
\usepackage{mathrsfs} 
\usepackage{enumitem}
\usepackage{hyperref}
\usepackage{cleveref}
\usepackage{algorithm}
\usepackage{algpseudocode}
\usepackage{graphicx}

\usepackage{diagbox}

\usepackage{balance} 
\usepackage{lineno}

\pagestyle{plain}

\date{Jan 21, 2024}

\begin{document}



\title{Equilibrium Analysis of Customer Attraction Games}
\author{
Xiaotie Deng  \inst{1} \and
Hangxin Gan \inst{2} \and
Ningyuan Li \inst{1} \and
Weian Li \inst{3} \and
Qi Qi \inst{4}
}

\institute{Center on Frontiers of Computing Studies, School of Computer Science, Peking University, Beijing, China\\
\email{\{xiaotie, liningyuan\}@pku.edu.cn}\and
School of Mathematical Sciences, Nankai University, Tianjin, China\\ \email{hangxin.gan@mail.nankai.edu.cn}\and
School of Software, Shandong University, Jinan, China\\
\email{weian.li@sdu.edu.cn}
\and
Gaoling School of Artificial Intelligence, Renmin University of China, Beijing, China\\
\email{qi.qi@ruc.edu.cn}
}
\maketitle

\begin{abstract}
  We introduce a game model called ``customer attraction game'' to demonstrate the competition among online content providers. In this model, customers exhibit interest in various topics. Each content provider selects one topic and benefits from the attracted customers. We investigate both symmetric and asymmetric settings involving agents and customers. In the symmetric setting, the existence of pure Nash equilibrium (PNE) is guaranteed, but finding a PNE is PLS-complete. To address this, we propose a fully polynomial time approximation scheme to identify an approximate PNE. Moreover, the tight Price of Anarchy (PoA) is established. In the asymmetric setting, we show the nonexistence of PNE in certain instances and establish that determining its existence is NP-hard. Nevertheless, we prove the existence of an approximate PNE. Additionally, when agents select topics sequentially, we demonstrate that finding a subgame-perfect equilibrium is PSPACE-hard. Furthermore, we present the sequential PoA for the two-agent setting.
\end{abstract}







\section{Introduction}
\label{Sec:Intro} 
The widespread adoption of the Internet has prompted an increasing number of companies to shift their focus towards online marketplaces. The global digital content market, for instance, has expanded significantly, reaching a staggering 169.2 billion in 2022 and estimated to surpass 173.2 billion in 2023\footnote{\url{https://www.marketresearchfuture.com/reports/digital-content-market-11516}}. Concurrently, the rise of social media platforms has resulted in a surge of individual content creators and users engaging in these platforms. YouTube, as of June 2023, boasts 2.68 billion active users, while TikTok accumulates 1.67 billion active users\footnote{\url{https://www.demandsage.com/youtube-stats/},\\ \url{https://www.demandsage.com/tiktok-user-statistics/}}. Given the vast number of users with diverse needs and interests, both companies and content creators must adopt suitable marketing strategies and select relevant content topics to attract their target customers effectively. As video producers aim to generate more traffic and subscribers, they often tailor their content topics based on the preferences of their target audience. However, with users exhibiting multiple interests and limited internet usage time, producers face stiff competition not only from other producers sharing the same topic but also from any producer targeting overlapping users. Similarly, on
digital advertising platforms, online business owners try to choose the right keywords or tags to attract a specific group of potential customers, while facing competition from other advertisers.

Upon examining the present competitive landscape of online customer acquisition, two noteworthy phenomena emerge. On the one hand, in the red ocean market (representing popular topics), excessive competition arises for a limited customer base, resulting in a waste of social resources. On the other hand, in the blue ocean market (representing unpopular topics), existing platforms fail to satisfy the demand of certain customers, leading to a loss of platform users. Inspired by these phenomena, to explore the underlying reasons and investigate the social welfare loss caused by competition, it is necessary to build a behavioral model for the customer attraction scenarios.

Traditionally, the problem of attracting customers is modeled as a location game in the offline markets. Retailers open physical stores with the aim of drawing nearby residents. The store location is determined by considering the number of potential customers within the service range and the competition from neighboring retailers. However,  attracting customers online differ significantly. In a location game, the competition typically occurs in a two-dimensional plane or a one-dimensional interval, primarily relying on distance to attract customers. Conversely, attracting online customers involves significantly more complexity as their preferences correspond to a high-dimensional latent feature vector. Consequently, the outcomes of location games are not directly transferrable to online scenarios.

In this work, we present a game model, called ``customer attraction game'', which captures the common features of online scenarios.  In this model, each customer expresses interest in specific topics, while each content producer (acting as an agent) chooses a topic and receives utility based on the customers they attract. However, if a customer is attracted by multiple agents, her utility is divided proportionally to reflect each agent's competitiveness. Particularly, when agents are symmetric, customers randomly select an agent to whom they are attracted. Moreover, when agents are asymmetric, the selection probability aligns with each agent's weight.

We delve into the simultaneous and sequential topic selection behaviors of agents, observing scenarios where agents select topics simultaneously or in a predetermined order.  The simultaneous model can capture that the film studios conceive the themes of their entries for a film festival, or advertisers who can change the targeted audience at a relatively low cost. Conversely, the sequential model applies to content creators on social media platforms who are cautious about changing topics to avoid losing followers. To understand the strategic behavior of competing agents, we focus on the stable state and raise the following questions: Does such a stable state exist? Can this stable state be reached efficiently? What is the performance of the stable state?

\subsection{Main Results}
Our model is representative and has wide application. Different from previous papers of location games, where customers usually live in one or two dimensions, our model is highly abstract and not restricted by dimensions, which can be regarded as a general version of traditional location games. This means that our results can be applied to other more abstract scenarios beyond facility location or political election contexts, opening up possibilities for broader applications.

(1) In a static symmetric game where all agents have equal weight and strategy space, we establish the existence of Pure Nash Equilibrium (PNE) and prove that finding a PNE is PLS-complete. To address this intractability, we propose a Fully Polynomial Time Approximation Scheme (FPTAS) that computes an $\epsilon$-approximate PNE. Additionally, we provide a tight Price of Anarchy (PoA) for the symmetric setting.

(2) In static asymmetric games, we generalize the results obtained in the symmetric setting when only the strategy spaces differ among agents. However, when the weights assigned to agents are different, we demonstrate the nonexistence of PNE and establish the NP-hardness of determining PNE existence. Specifically, we technically construct an counter-instance and prove that it does not admit a (2-$\epsilon$)-approximate PNE. Nonetheless, we show that there always exist a $O(\log w_{\textnormal{max}})$-approximate PNE which can also achieve the $O(\log W)$-approximately optimal social welfare, where $w_{\textnormal{max}}$ and $W$ are the maximum and sum of all agents' weights. 


(3) For sequential games where agents make decisions in order, we reveal the PSPACE-hardness of finding a subgame-perfect equilibrium (SPE), even in symmetric settings. Additionally, we analyze the sequential Price of Anarchy (sPoA) and show that it equals 3/2 for the two-agent case. Furthermore, we establish a lower bound of sPoA that approximately approaches to 2 for the general $n$-agent case.


We briefly introduce our technical highlights as follows:

(1) For the static symmetric game, we demonstrate the PLS-completeness of finding a PNE through a reduction from the local max-cut problem. Notably, this result remains applicable even for the simplest case where all agents have equal weights and strategy spaces. Additionally, since the total number of nodes is polynomial in the instance of our symmetric game, but the maximum edge weight in a local max-cut instance should be exponentially large, we technically encode exponentially large edge weights with polynomial number of nodes by utilizing the proportional allocation rule.


(2) For the static asymmetric game, we construct a concise example to show the nonexistence of PNE in the asymmetric game, and utilize it as a gadget in our reduction to prove the NP-hardness to determine the existence of PNE. Moreover, we introduce a technical lemma that guarantees the PNE equivalence between the asymmetric case and the case with symmetric strategy spaces, and establishes the non-existence of PNE and NP-hardness results under a nearly-symmetric case where only one agent has a different weight. 


\subsection{Related Work}

In terms of our model, the work of Goemans et al. \cite{GLMT04} and Bil{\`o} et al. \cite{BGM23} are the most relevant. Goemans et al. \cite{GLMT04} introduce a market sharing game that bears similarities to our symmetric static game. However, their focus is primarily on unweighted scenarios, without considering the weighted case. Additionally, they do not investigate the complexity of finding an equilibrium. In the study conducted by Bil{\`o} et al. \cite{BGM23}, a project game is introduced where participants select preferred projects to participate in. Similar to our asymmetric static model, projects are asymmetric and players gain rewards based on different weights. However, an important distinction is that our asymmetric static game allows players to choose multiple projects, whereas in their model, players can only select one project.

In addition, two concepts in recent papers are similar to the description of customers' preferences in our model: attraction interval of agents and tolerance interval of customers. With respect to the former concept, Feldman et al. \cite{FFO16} introduce the idea of limited attraction. They demonstrate that equilibria always exist and analyze the PoA and Price of Stability (PoS). Shen and Wang \cite{SW17} generalize this model to the case of any customer distribution.
For the latter concept, Ben-Porat and Tennenholtz \cite{BT17} investigate customers' choice rules based on a specific probability function. Cohen and Peleg \cite{CP19} focus on a model where customers have a tolerance interval and only visit shops within this range.
In comparison to these models, the attraction ranges in our model reflect both features simultaneously and can be viewed as a highly abstract version of attraction interval and tolerance interval.


Our paper is mainly related to two topics: location games and congestion games. The offline location game is commonly modeled by the classic Hotelling-Downs model \cite{H29,D57}. Some comprehensive surveys on this topic have been published in \cite{ELT93,B10,E11}. Additionally, the Hotelling-Downs model has a wide range of applications in the field of political elections, e.g., \cite{SS08,BC15,OEPR15,FGLLM16,HLST21,DEG22}. In recent decades, location problems have also been studied from the perspective of mechanism design (see \cite{PT13,FFG16,ACLLW20,CFLLW21,FP21}).


For the literature focusing on discrete location spaces, discrete customer location space is first introduced by \cite{S61}. Huang \cite{H11} examines the mixed strategies of three players with inelastic utility functions. N{\'u}{\~n}ez and Scarsini \cite{NS16} consider the discrete locations of agents and customers, where customers choose a shop based on their preferences.
Iimura and von Mouche \cite{IM21} focus on the general case of non-increasing utility functions with two players.
Recently, two-stage facility location games with strategic agents and clients have been studied \cite{KLM21,KLS23}, where both agents and clients are strategic and agents' utilities depend on the clients' equilibrium.
Unlike the utility functions in most previous papers, which were related to the distance between customers and agents, we consider customers being allocated based on the weights of agents.


The sequential version of the location game, known as the Voronoi game, is introduced by \cite{ACCG04}. In this game, two players compete for customers in a specific area by alternately locating points. The winning strategy for the second player is given. A series of papers have studied the Voronoi game on different graphs, such as cycles \cite{MMP08} and general networks \cite{DT07,BBDS15}.


In the field of algorithmic game theory, techniques used in location games can be related to the topic of congestion games, which is first introduced by \cite{R73}.
Monderer and Shapley \cite{MS96} prove that any finite potential game is equivalent to a congestion game. Gairing \cite{G09} introduces a variant of congestion games called covering games. It shows that in a covering game with a specially designed utility sharing function, every PNE can approximate maximum covering with a constant factor. The covering game corresponds to a special case of our model where agents are unweighted, but it does not consider weighted agents. Furthermore, Gairing \cite{G09} focuses more on the problem of mechanism design, while we mainly investigate equilibrium analysis and evaluation under a proportional allocation rule.
In addition, the fruitful research results of congestion games have contributed numerous techniques to classic concepts in game theory, such as the existence of PNE (\cite{ADKTTR08,HK12,KR15,GKK20}), PoA or PoS (\cite{RT02,CV07,CN09,CR09,ADGM11}), and sequential congestion games (\cite{CK05,LST12,JU14,CGMMP20}).


\subsection{Roadmap}
In Section \ref{Sec:Pre}, we formally introduce our customer attraction game. In Section \ref{Sec:basic}, we mainly focus on the symmetric setting and show a series of results about PNE. In Section \ref{Sec:asymmetric}, the asymmetric setting is investigated. We discuss the sequential version of our game in Section \ref{Sec:sequential}. In Section \ref{Sec:conclusion}, we give a summary of the whole paper and propose the directions for future work. 

\section{Model and Preliminaries}
\label{Sec:Pre}

In this section, we formally introduce our customer attraction game (CAG) (see Figure \ref{fig:pic1}). 
Assume that there are $n$ types of customers represented by $n$ discrete nodes\footnote{In the following, we sometimes use ``node'' to represent customers in language.} and define the set of nodes as $N = \{1,2, \cdots, n\}$. For each $j \in N$, its value $v_j$ denotes the number or importance of customers represented by node $j$. Without loss of generality, we assume that $v_j\in\mathbb{Z}_{\geq 1}$ for any $j\in N$.

There are $L$ candidate topics for agents. For each topic $l$, its attraction range 
is defined as $s_l\subseteq N$, 
i.e., the set of customers who are interested in this topic.
Define $\mathcal{S}= \{s_1, s_2, \cdots, s_L\}$ as the collection of attraction ranges of all candidate topics. We assume that $\bigcup_{l=1}^L s_l = N$, that is, each customer is attracted by at least one topic.

There are $m$ agents 
and the set of agents is denoted by $A=\{1,\cdots,m\}$. Each agent $i\in A$ has a weight $w_i$ and we suppose that $w_i\in\mathbb{Z}_{\geq 1}$. The strategy space of agent $i$ is defined as $\mathcal{S}_i\subseteq\mathcal{S}$ representing the available topics provided for agent $i$. 
Let the joint strategy space of all agents be $\vec{\mathcal{S}}=\times_{i=1}^m\mathcal{S}_i$. When each agent chooses a pure strategy $S_i\in\mathcal{S}_i$, a pure strategy profile is denoted by $\vec{S}=(S_1,\cdots,S_m)\in\vec{\mathcal{S}}$.

\begin{figure}[ht]
    \centering
    \includegraphics[width=0.6\textwidth]{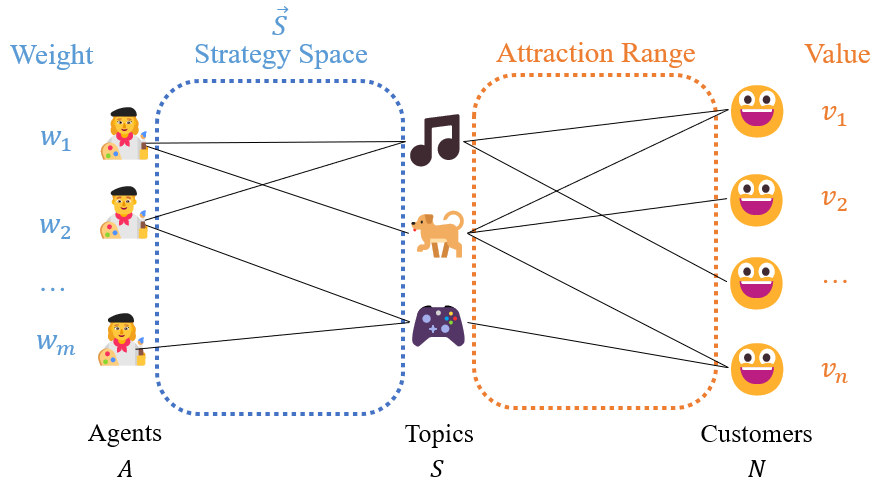}
    \caption{ 
    An example of customer attraction game. 
    Each agent is linked to all her available topics. A line from a topic to a customer means this customer is in the attraction range of this topic.}
    \label{fig:pic1}
\end{figure}

Given a pure strategy profile $\vec{S}$, we say that agent $i$ attracts the node $j$ if and only if $j\in S_i$, that is, customers on node $j$ are interested in the topic which agent $i$ selects. For each node $j\in N$, define the load function $c(j,\vec{S})$ as the total weight of the agents attracting $j$, formally,
$$c(j,\vec{S})=\sum_{i\in A:j\in S_i} w_i.$$
When node $j$ is attracted by at least one agent, each customer on it selects an agent attracting node $j$ with the probability proportional to each agent's weight. 
Therefore, in expectation, node $j$ distributes the $v_j$ customers to all agents attracting $j$ in proportion to their weights. The utility of each agent $i\in A$ is defined as the expected number of attracted customers, that is,
$$U_i(\vec{S})=\sum_{j\in S_i}\frac{w_i}{c(j,\vec{S})}v_j.$$

Now we formally define an instance of the customer attraction game.
\begin{definition}\label{def:ADHDG}
An instance of customer attraction game is defined by a tuple $\mathcal{I}=(N,$ $A,\vec{\mathcal{S}},\vec{w},\vec{v})$, where
\begin{itemize}
    \item $N$ is the set of all nodes;
    \item $A$ is the set of all agents; 
    \item $\vec{\mathcal{S}}=\times_{i=1}^m\mathcal{S}_i$, where $\mathcal{S}_i$ is the strategy space of agent $i$;
    \item $\vec{w}=(w_1, w_2, \cdots, w_m) \in \mathbb{Z}_{\geq1}^m$: the weights of all agents;
    \item $\vec{v}=(v_1, v_2, \cdots, v_n) \in \mathbb{Z}_{\geq1}^n$: the values of all nodes. 
\end{itemize}
\end{definition}

We also consider a special case called the symmetric CAG, where all agents are symmetric and all nodes are unit-value. Formally, 
\begin{itemize}
\item 
All agents share the same strategy space, that is, for all $i\in A$, $\mathcal{S}_i=\mathcal{S}$.
\item 
All agents have unit weight, that is, for all $i\in A$, $w_i=1$.
\item 
All nodes have unit value, that is, for all $j\in N$, $v_j=1$.
\end{itemize}
To distinguish, we call the general model defined in \Cref{def:ADHDG} as the asymmetric CAG. Sometimes we also discuss the settings in which only part of the above three components are restricted to be symmetric. For convenience, we denote such settings by listing the asymmetric components in the model as the prefix. For example, the notion ($\vec{\mathcal{S}},\vec{v}$)-asymmetric CAG means a CAG where the strategy spaces $\vec{\mathcal{S}}$ and the node values $\vec{v}$ can be different, while the agents' weights $\vec{w}$ are restricted to be the same.

Next we give the formal definition of pure Nash equilibria (PNE) in the CAG.

\begin{definition}
Given an instance $\mathcal{I}$ of customer attraction game, a strategy profile $\vec{S}$ is a pure Nash equilibrium if, for any $i\in A$ and $S'_i \in \mathcal{S}_{i}$, 
$$U_i(S_i,\vec{S}_{-i}) \geq U_i(S'_i,\vec{S}_{-i}).$$
We define PNE$(\mathcal{I})$ as the set containing all pure Nash equilibria of instance $(\mathcal{I})$.
\end{definition}  

Sometimes, PNE may not exist for some instances of CAG. For these instances, we focus on the approximate pure Nash equilibrium which is defined formally as

\begin{definition}\label{def:APNE}
    Given an instance of customer attraction game $\mathcal{I}=(N,A,\vec{\mathcal{S}},\vec{w},\vec{v})$, for any $\alpha \geq 1$, a strategy profile $\vec{S}$ is an $\alpha$-approximate pure Nash equilibrium if, for any $i\in A$ and $S'_i \in \mathcal{S}_i$,
    $$\alpha U_i(\vec{S}) \geq U_i(S'_i,\vec{S}_{-i}).$$
\end{definition}

In this model, for any instance of game $\mathcal{I}$,  we define the social welfare with respect to strategy profile, $\vec{S}$, as the sum of the utilities of all $m$ agents, which is also equal to the number of attracted customers under profile $\vec{S}$,
\begin{align*}
\text{SW}(\vec{S})= \sum_{i=1}^m U_i(\vec{S})= \sum_{\{j:\exists i, \text{such that} j\in S_i | \vec{S}\}} v_j.
\end{align*}
Let $\vec{S}^{*}$ be the strategy profile that achieves the optimal social welfare. We define the price of anarchy (PoA) (\cite{KP99}) for PNE of an instance as 
$$\text{PoA}(\mathcal{I})=\max_{\vec{S}^{NE}\in \text{PNE}(\mathcal{I})}\frac{\text{SW}(\vec{S}^{*})}{\text{SW}(\vec{S}^{NE})}.$$
Consequently, the PoA for CAG among all instances is defined as 
$$\text{PoA}=\sup_{\mathcal{I}} \max_{\vec{S}^{NE}\in \text{PNE}(\mathcal{I})}\frac{\text{SW}(\vec{S}^{*})}{\text{SW}(\vec{S}^{NE})}.$$

\section{Warm-up: The Symmetric Static Game}
\label{Sec:basic}

As a warm-up, we focus on the symmetric CAG, where all agents are symmetric and all nodes have unit value, as described in Section \ref{Sec:Pre} previously. This section aims to establish some fundamental insights about static CAG, as symmetric CAG falls under the broader category of asymmetric CAG (discussed in Section \ref{Sec:asymmetric}). Since we find that symmetric CAG is an exact potential game \cite{MS96}, some of the results in this section can be derived using the formal techniques of potential games. However, our main result in this section is the PLS-completeness of computing a PNE, which requires a different approach. This section is organized as: we first prove the existence of PNE by showing that any instance of symmetric CAG is an exact potential game. However, we demonstrate that finding a PNE is PLS-complete. Thus, we propose a fully polynomial-time approximation scheme (FPTAS) to output a ($1+\epsilon$)-approximate PNE. Finally, we give a tight PoA for the symmetric CAG.

Since all agents have identical weights of $1$ in the symmetric setting, the load function of one node degenerates to the number of agents attracting this node. That is,
$$c(j,\vec{S})=\left|\{i\in A:j\in S_i\}\right|$$ for any $j\in N$ and any $\vec{S}\in\vec{\mathcal{S}}$. To show the existence of PNE, we can construct a Rosenthal's potential function \cite{R73}, 
$$\Phi(\vec{S})=\sum_{j\in N}\sum_{k=1}^{c(j, \vec{S})} \frac{1}{k},$$
for any $\vec{S}\in\vec{\mathcal{S}}$ and show that any instance $\mathcal{I}$ is an exact potential game. In an exact potential game, whenever an agent changes its strategy to get a better utility, the potential function will also be increased. It means that any PNE is corresponding to a local maximum of $\Phi(\vec{S})$. Since the joint strategy space is finite, there must exist a joint strategy profile such that the potential function cannot be improved by changing the strategy of any agent, which guarantees the existence of PNE.

\begin{theorem}
\label{thm:PNE-existence-by-potential}
    For any instance $\mathcal{I}$ of symmetric customer attraction game, $\mathcal{I}$ is an exact potential game with respect to the potential function $\Phi(\vec{S})$ and the pure Nash equilibrium always exists.
\end{theorem}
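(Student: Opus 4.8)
The plan is to verify directly that $\Phi$ is an \emph{exact} potential function for every instance $\mathcal{I}$, i.e.\ that for every agent $i$, every profile $\vec{S}\in\vec{\mathcal{S}}$, and every $S_i'\in\mathcal{S}_i$, writing $\vec{S}'=(S_i',\vec{S}_{-i})$,
$$\Phi(\vec{S}')-\Phi(\vec{S}) \;=\; U_i(\vec{S}')-U_i(\vec{S}).$$
Granting this identity, the existence of a PNE follows at once: $\vec{\mathcal{S}}$ is finite, so $\Phi$ attains a maximum at some profile $\vec{S}^{\circ}$; were some agent $i$ to have a profitable deviation $S_i'$ at $\vec{S}^{\circ}$, the right-hand side above would be strictly positive, giving $\Phi(S_i',\vec{S}^{\circ}_{-i})>\Phi(\vec{S}^{\circ})$ and contradicting maximality. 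Hence $\vec{S}^{\circ}\in\mathrm{PNE}(\mathcal{I})$.

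To establish the identity, fix $i$, $\vec{S}$, and $S_i'$. Since only agent $i$'s choice changes, $c(j,\vec{S}')=c(j,\vec{S})$ for every node $j$ except those in the symmetric difference of $S_i$ and $S_i'$: if $j\in S_i'\setminus S_i$ then $c(j,\vec{S}')=c(j,\vec{S})+1$, and if $j\in S_i\setminus S_i'$ then $c(j,\vec{S}')=c(j,\vec{S})-1$. Because each node $j$ contributes $\sum_{k=1}^{c(j,\cdot)}\frac{1}{k}$ to $\Phi$, all unchanged nodes cancel in $\Phi(\vec{S}')-\Phi(\vec{S})$, leaving
$$\Phi(\vec{S}')-\Phi(\vec{S}) \;=\; \sum_{j\in S_i'\setminus S_i}\frac{1}{c(j,\vec{S})+1}\;-\;\sum_{j\in S_i\setminus S_i'}\frac{1}{c(j,\vec{S})}.$$
On the utility side, split both $S_i$ and $S_i'$ according to whether a node lies in $S_i\cap S_i'$: the common nodes have unchanged load and hence contribute equally to $U_i(\vec{S})$ and $U_i(\vec{S}')$, so they cancel, while the remaining terms reproduce exactly the right-hand side above. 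This gives $U_i(\vec{S}')-U_i(\vec{S})$ equal to the same quantity, proving the claim.

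The computation is routine; the one place to be careful is the bookkeeping of the load function — in particular that $c(j,\vec{S})$ already includes agent $i$ whenever $j\in S_i$, so moving agent $i$ off such a node decreases both its load and its potential contribution by precisely $\frac{1}{c(j,\vec{S})}$, which matches the utility agent $i$ forfeits there. (Exactness is stronger than needed for the existence claim, since an ordinal potential already suffices, but it holds and is convenient to record for later use, e.g.\ in the FPTAS argument.)
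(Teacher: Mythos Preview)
Your proposal is correct and follows essentially the same approach as the paper: both verify the exact-potential identity by tracking how the load $c(j,\cdot)$ changes on the symmetric difference $S_i\triangle S_i'$, observe that common nodes cancel, and then invoke finiteness of $\vec{\mathcal{S}}$ to conclude that a maximizer of $\Phi$ is a PNE. The only cosmetic difference is that the paper expresses the potential increment via $c(j,\vec{S}')$ whereas you write it via $c(j,\vec{S})+1$; these are of course identical on $S_i'\setminus S_i$.
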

\begin{proof}
We first show that $\mathcal{I}$ is an exact potential game, that is, for any strategy profile $\vec{S}\in\vec{\mathcal{S}}$ and any agent $i\in A$, when agent $i$ unilaterally deviates from $S_i$ to any strategy $S'_i\in\mathcal{S}_i$, the following equation holds, 
\begin{equation}U_i(S'_i,\vec{S}_{-i})-U_i(\vec{S})=\Phi(S'_i,\vec{S}_{-i})-\Phi(\vec{S}). \label{eq:potentialgame0}\end{equation}

Let $\vec{S'}=(S'_i,\vec{S}_{-i})$. By definition of $\Phi$, we have \begin{align}
\Phi(\vec{S'})-\Phi(\vec{S})&=\sum_{j\in N}\sum_{k=1}^{c(j,\vec{S'})} \frac{1}{k}-\sum_{j\in N}\sum_{k=1}^{c(j, \vec{S})} \frac{1}{k} 
=\sum_{j\in N}\left(\sum_{k=1}^{c(j,\vec{S'})} \frac{1}{k}-\sum_{k=1}^{c(j, \vec{S})} \frac{1}{k}\right). \label{eq:potentialgame1}
\end{align}
Observe that $\vec{S'}$ and $\vec{S}$ only differ in the strategy of agent $i$. Therefore, each $j\in N$ fits into one of the following three cases: (a) If $j\in S'_i\setminus S_i$, then $c(j,\vec{S'})=c(j,\vec{S})+1$; (b) If $j\in S_i\setminus S'_i$, then $c(j,\vec{S'})=c(j,\vec{S})-1$; (c) If $j\in (S'_i\cap S_i)\cup(N\setminus(S'_i\cup S_i))$, then $c(j,\vec{S'})=c(j,\vec{S})$. Thus, we have
\begin{align}
\sum_{j\in N}\left(\sum_{k=1}^{c(j,\vec{S'})} \frac{1}{k}-\sum_{k=1}^{c(j, \vec{S})} \frac{1}{k}\right)&=\sum_{j\in S'_i\setminus S_i}\frac{1}{c(j,\vec{S'})}-\sum_{j\in S_i\setminus S'_i}\frac{1}{c(j,\vec{S})} \nonumber\\
&=\sum_{j\in S'_i}\frac{1}{c(j,\vec{S'})}-\sum_{j\in S_i}\frac{1}{c(j,\vec{S})} 
=U_i(\vec{S'})-U_i(\vec{S}). \label{eq:potentialgame2}
\end{align}
The second line holds because $\sum_{j\in S'_i\cap S_i}\frac{1}{c(j,\vec{S'})}=\sum_{j\in S'_i\cap S_i}\frac{1}{c(j,\vec{S})}$.
Combining \Cref{eq:potentialgame1} and \Cref{eq:potentialgame2}, we immediately get \Cref{eq:potentialgame0}, which shows that $\mathcal{I}$ is an exact potential game with potential function $\Phi$. 

Define the neighborhood of $\vec{S}$ as all strategy profiles obtained by changing the strategy of at most one agent in $\vec{S}$, 
$$\mathcal{N}(\vec{S})=\{(S'_i,\vec{S}_{-i})\in\vec{\mathcal{S}}:S'_i\in\mathcal{S}_i,i\in A\}.$$ 
By definition, we know that $\vec{S}$ is a PNE if and only if $U_i(S'_i,\vec{S}_{-i})-U_i(\vec{S})\leq 0$ holds for any $i\in A$ and $S'_i\in \mathcal{S}_i$, which is equivalent to that $\Phi(\vec{S}')-\Phi(\vec{S})\leq 0$ holds for any $\vec{S}'\in\mathcal{N}(\vec{S})$ and $\vec{S}$ is a local maximum. Since $|\vec{\mathcal{S}}|$ is finite, there exists an $\vec{S}\in\vec{\mathcal{S}}$ such that $\Phi(\vec{S})=\max_{\vec{S}'\in\vec{\mathcal{S}}}\Phi(\vec{S}')$, which means that  $\Phi(\vec{S})\geq\Phi(\vec{S}')$ for any $\vec{S}'\in\mathcal{N}(\vec{S})$. It means that $\vec{S}$ is a PNE.
\end{proof}

Note that the proof of \Cref{thm:PNE-existence-by-potential} only requires the symmetry of agents' weights. Generally, for any instance $\mathcal{I}$ of ($\vec{\mathcal{S}},\vec{v}$)-asymmetric CAG, we can similarly extend the definition of $\Phi$ to  $$\Phi(\vec{S})=\sum_{j\in N}v_j\sum_{k=1}^{c(j,\vec{S})} \frac{1}{k},$$ and one can check that $\Phi(\vec{S})$ is still a potential function.
\begin{corollary}
\label{coro:PNE-existence-by-potential}
    For any instance $\mathcal{I}$ of ($\vec{\mathcal{S}},\vec{v}$)-asymmetric customer attraction game, the pure Nash equilibrium always exists.
\end{corollary}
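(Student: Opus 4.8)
The plan is to verify that $\Phi(\vec{S})=\sum_{j\in N}v_j\sum_{k=1}^{c(j,\vec{S})}\frac{1}{k}$ is an exact potential function for every instance $\mathcal{I}$ of $(\vec{\mathcal{S}},\vec{v})$-asymmetric CAG, and then to conclude exactly as in the proof of \Cref{thm:PNE-existence-by-potential}. The only structural fact we exploit is that all weights equal $1$, so $c(j,\vec{S})=|\{i\in A:j\in S_i\}|$ and a unilateral deviation changes the load of any single node by at most one.

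First I would fix an agent $i$, a profile $\vec{S}_{-i}$ of the remaining agents, and two strategies $S_i,S_i'\in\mathcal{S}_i$. For each node $j$ write $c_{-i}(j)=|\{i'\neq i:j\in S_{i'}\}|$ for the load contributed by the other agents; this quantity is independent of $i$'s choice. The core step is a node-by-node comparison: partition $N$ according to whether $j$ lies in both, exactly one, or neither of $S_i,S_i'$, and in each case compare the change in the term of $U_i$ coming from $j$ with the change in the $j$-th summand $v_j\sum_{k=1}^{c(j,\cdot)}1/k$ of $\Phi$. If $j$ lies in both or in neither, both the load of $j$ and the contribution of $j$ to $U_i$ are unchanged, so both differences vanish. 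If $j\in S_i\setminus S_i'$, agent $i$ loses the term $v_j/(c_{-i}(j)+1)$ from its utility, while the load of $j$ drops from $c_{-i}(j)+1$ to $c_{-i}(j)$, so the $j$-th summand of $\Phi$ decreases by exactly $v_j/(c_{-i}(j)+1)$; the symmetric case $j\in S_i'\setminus S_i$ yields a matching increase of $v_j/(c_{-i}(j)+1)$ on both sides. Summing over $j\in N$ gives
$$U_i(S_i',\vec{S}_{-i})-U_i(S_i,\vec{S}_{-i})=\Phi(S_i',\vec{S}_{-i})-\Phi(S_i,\vec{S}_{-i}),$$
which is the exact potential property.

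Once $\Phi$ is an exact potential function, the conclusion is immediate: the joint strategy space $\vec{\mathcal{S}}=\times_{i=1}^m\mathcal{S}_i$ is finite, so $\Phi$ attains its maximum at some profile $\vec{S}^\star$, and any profitable unilateral deviation would, by the identity above, strictly increase $\Phi$, contradicting maximality; hence $\vec{S}^\star\in\text{PNE}(\mathcal{I})$.

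I do not expect a genuine obstacle here — it is the standard Rosenthal potential computation — and the only point deserving care is the observation that unit weights force every load change caused by a single deviation to be exactly $\pm 1$, which is what makes the per-node contributions to $U_i$ and to $\Phi$ line up term for term; the node values $v_j$ enter merely as a common multiplicative factor in both expressions and therefore do not interfere with the telescoping.
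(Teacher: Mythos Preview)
Your proposal is correct and follows essentially the same approach as the paper: the paper simply remarks that the proof of \Cref{thm:PNE-existence-by-potential} only uses symmetry of weights, extends $\Phi$ by inserting the factor $v_j$, and notes that one can check it remains an exact potential function. Your node-by-node verification is exactly the check the paper leaves to the reader, and your observation that the $v_j$ enter only as a common multiplicative factor is the key point.
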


Knowing the existence of PNE, a natural way to find a PNE is the best-response dynamics, roughly speaking, which starts with some strategy profile, and iteratively picks up one agent 
and let this agent deviate to the most beneficial strategy. By the existence of the potential function, it guarantees that the best-response dynamics can find a PNE in finite steps. However, this may take exponential number of steps. We prove that, finding a PNE in the symmetric CAG is PLS-complete. Note that this is a strong hardness result (analogous to the concept of strong NP-hardness), since it do not require the node values to be exponentially large, and the result even holds for the simplest CAG where all agents have the same weight and strategy space, and all nodes have unit value. 
\begin{theorem}\label{thm:PLS-complete}
Finding a Pure Nash Equilibrium of symmetric CAG is PLS-complete.
\end{theorem}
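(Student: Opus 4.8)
The plan is to prove the two halves of PLS-completeness separately. Membership in PLS is the easy half and rests on the potential function of \Cref{thm:PNE-existence-by-potential}: cast the search for a PNE as the PLS problem whose feasible solutions are the strategy profiles $\vec{S}\in\vec{\mathcal{S}}$, whose objective is $\Phi(\vec{S})=\sum_{j\in N}\sum_{k=1}^{c(j,\vec{S})}\tfrac1k$, and whose neighbourhood of $\vec{S}$ is the set of profiles obtained by switching one agent's topic. An initial solution (e.g.\ every agent choosing its first available topic) is computable in polynomial time; $\Phi(\vec{S})$ is a rational with polynomially many bits and is computable in polynomial time; and, given $\vec{S}$, one can in polynomial time either produce an agent $i$ and a topic $S_i'\in\mathcal{S}_i$ with $U_i(S_i',\vec{S}_{-i})>U_i(\vec{S})$ — which, since $\mathcal{I}$ is an exact potential game, is a strictly better neighbour — or certify that no such pair exists, i.e.\ that $\vec{S}$ is a PNE. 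Hence the problem lies in PLS and its local optima are exactly the pure Nash equilibria.

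For the hard half I would give a PLS-reduction from Local-Max-Cut (finding, in an edge-weighted graph $G=(V,E,w)$, a vertex bipartition that no single-vertex flip improves), which is PLS-complete. The construction turns each vertex $u$ into an agent with two distinguished ``side'' topics $t_u^0,t_u^1$, and each edge $e=(u,v)$ into a small gadget of unit-value customer nodes — one attracted by $t_u^0$ and $t_v^0$, another by $t_u^1$ and $t_v^1$ — together with a controlled number of extra agents pinned to fixed topics. The gadget is tuned so that when $u$ and $v$ lie on the same side, agent $u$ loses a utility amount (relative to the opposite-side configuration) that plays the role of $w_e$; the cleanest way to make this precise is to arrange that the instance's potential $\Phi$ equals, up to an additive constant, the cut value of the corresponding bipartition for a suitable vector of ``effective weights''. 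Then a profile is a PNE iff no vertex can improve its cut contribution by flipping, i.e.\ iff the induced bipartition is a local max cut, and the map $g$ reading off each agent's side is the required reduction map. It is simplest to carry this out first for the $(\vec{\mathcal{S}})$-asymmetric CAG, where agent $u$ can just be given strategy space $\{t_u^0,t_u^1\}$, and then upgrade to the fully symmetric model by attaching to each agent $u$ a private bundle of unit-value nodes attracted only by $t_u^0$ and $t_u^1$, of sufficiently large polynomial total value, so that in every PNE agent $u$ plays one of $t_u^0,t_u^1$; a slightly perturbed choice of these bundles ensures no agent profits by ``squatting'' on another agent's side topics, collapsing the PNE structure to that of the asymmetric-strategy-space instance.

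The technical heart, and the step I expect to be the main obstacle, is the edge gadget: Local-Max-Cut is only PLS-complete in the regime of exponentially large weights (with polynomially bounded integer weights, standard local search terminates in polynomially many steps), whereas the constructed symmetric CAG must have only polynomially many unit-value nodes and polynomially many agents. The idea is to exploit the proportional allocation rule — pre-loading a gadget node with a chosen number of pinned agents makes the marginal share $\tfrac1c-\tfrac1{c+1}$ that $u$ sees when $v$ joins that node take values from the family $\{\tfrac1{c(c+1)}\}$, and combining a few such nodes per edge with pre-loads chosen coherently across the whole graph yields ``effective weights'' whose relative sizes reproduce what the flip condition needs. The delicacy is that for $c$ polynomially bounded this family spans only a polynomial range, so realizing the exponential spread of the original weights requires extra care — exploiting that only certain polyhedral inequalities among the effective weights incident to a single vertex must hold (so bounded vertex degree suffices to keep per-vertex ratios small), and/or starting the reduction from a bounded-degree variant of Local-Max-Cut — and then verifying faithfulness over \emph{all} profiles, not merely those of the intended form: that improving unilateral deviations in the CAG correspond exactly to improving single-vertex flips in $G$ and conversely. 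The ``$\Phi$ equals a constant plus the effective-weight cut'' identity handles profiles in which every vertex-agent sits on a side topic, while the private-bundle and pinned-agent gadgets exclude all other profiles from being equilibria; combining this hardness with the PLS-membership argument above yields PLS-completeness.
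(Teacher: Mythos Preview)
Your high-level architecture matches the paper's exactly: PLS-membership via the Rosenthal potential, PLS-hardness via Local Max-Cut, first into the $(\vec{\mathcal{S}})$-asymmetric model and then upgraded to symmetric by attaching to each agent a private bundle of nodes so large that in any PNE each agent must play one of ``its own'' strategies. That last step is precisely the paper's \Cref{lem:PLS-unionizing-strategy}.

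The genuine gap is exactly where you flag it, the edge gadget, and your proposed remedies do not close it. You rightly note that Local Max-Cut is PLS-complete only with exponentially large weights, while a symmetric CAG instance has only polynomially many unit-value nodes, and that the available fractions $\tfrac1{c(c+1)}$ span only a polynomial range when $c$ is polynomial. But passing to a bounded-degree variant of Local Max-Cut bounds the \emph{number} of weights incident to each vertex, not their \emph{ratios}: PLS-hardness of degree-bounded Max-Cut still needs exponentially disparate edge weights, and already a degree-$2$ vertex with incident weights $1$ and $2^n$ defeats your gadget. The ``polyhedral inequalities'' suggestion is not made concrete enough to carry any argument.

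What the paper actually does is a number-theoretic encoding you are missing. It fixes a global scale $\lambda=1/\prod_{i=1}^{n}p_i$ with $n\approx\log\bar W$, and by the Chinese Remainder Theorem plus the Extended Euclidean algorithm writes every scaled weight $\lambda W(e)\in(0,1)$ as $\sum_i C_i/B_i$ with all $|C_i|$, $B_i$, and the number of terms polynomial in $\log\bar W$ (\Cref{lem:PLS-decompose-fraction}). Each term $\pm\tfrac1{b}$ is then expanded telescopically into terms of the form $\pm(\tfrac1{d+1}-\tfrac1{d+2})$, and each such term is realised by a pair of nodes pre-loaded with $d$ dummy agents, so that the pair's contribution to the potential is $\tfrac{1-x_{i_1}x_{i_2}}{2}(\tfrac1{d+1}-\tfrac1{d+2})$ plus a constant (\Cref{lem:PLS-edge-gadget}). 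Summed over an edge's terms this gives the identity $\Phi(\vec S(\vec x))=\lambda\cdot\mathrm{CUTWEIGHT}(\vec x)+\text{const}$ that you wanted, with a polynomial-size instance. The CRT/prime decomposition is the key idea; without it the reduction does not go through.
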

\begin{proof}
We have shown that finding a PNE of symmetric CAG (and ($\vec{\mathcal{S}},\vec{v}$)-asymmetric CAG) is equivalent to finding a local maximum of the potential function $\Phi(\vec{S})$. Therefore, the problem of searching a PNE of symmetric CAG is in the class PLS. To prove the PLS-completeness, we first give a reduction from the local max-CUT problem \cite{NRTV07} to finding a PNE in ($\vec{\mathcal{S}}$)-asymmetric CAG, as stated in Lemma \ref{lem:PLS-ADHDG}. Then, we reduce searching the PNE of any ($\vec{\mathcal{S}}$)-asymmetric CAG to searching the PNE of a symmetric CAG by Lemma \ref{lem:PLS-unionizing-strategy}. 

To prove the PLS-hardness of computing a PNE, a common idea is to reduce from the local max-cut problem, as adopted by \cite{G09} to establish the PLS-completeness of equilibrium computation in the covering games. However, in the symmetric CAG, a crucial difference is that each node is unit-valued, representing a single customer. It is important to note that the total number of nodes in a CAG instance is polynomial, whereas the PLS-hardness of the local max-cut problem requires the maximum edge weight in an instance to be exponentially large. Otherwise, a local max-cut can be found in polynomial time by standard local search. Therefore, it is unfeasible to express the edge weights directly by the number of nodes. Instead, the main focus of our proof lies in utilizing the proportional allocation rule of node values to encode exponentially large edge weights, using only polynomial number of nodes. Technically, we show that one can proportionally decrease the exponentially large integers into rational numbers in $[0,1]$, and express each number as the sum or difference of a series of fractions, such that all numerators and denominators, as well as the number of fractions, are polynomially bounded. This fact enables the construction of an edge gadget, which is employed to complete the reduction.



\begin{lemma}\label{lem:PLS-ADHDG}
Finding a Pure Nash Equilibrium of ($\vec{\mathcal{S}}$)-asymmetric CAG is PLS-hard.
\end{lemma}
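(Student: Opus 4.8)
The plan is to give a polynomial-time PLS-reduction from the local max-cut problem with the FLIP neighborhood, which is PLS-complete \cite{NRTV07}. Recall that such an instance is an edge-weighted graph $G=(V,E,(w_e)_{e\in E})$ with positive integer weights, a solution is a bipartition $\pi:V\to\{+,-\}$, and $\pi$ is locally optimal if moving any single vertex to the other side does not increase the total weight of cut edges; crucially, the PLS-hard instances must have exponentially large weights, as otherwise local search terminates in polynomially many steps. First I would encode this as a ($\vec{\mathcal S}$)-asymmetric CAG: for each $v\in V$ create a \emph{vertex agent} $a_v$ with exactly two strategies $S_v^{+},S_v^{-}$ (the two sides); for each edge $e=(u,v)$ create a private collection of unit-value \emph{gadget nodes} together with polynomially many \emph{dummy agents}, each of which has a single strategy and whose sole purpose is to place a prescribed base load $B$ on a node. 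A gadget node of base load $B$ that is attracted by $a_u$ exactly when $u$ lies on a designated side and by $a_v$ exactly when $v$ lies on a designated side, paired with a "mirror" node governed by the opposite sides, contributes to the Rosenthal potential $\Phi$ an amount that is larger by exactly $\frac{1}{(B+1)(B+2)}=\frac1{B+1}-\frac1{B+2}$ when $u$ and $v$ are separated than when they coincide (and this gain is the same whether $u,v$ sit jointly on the $+$ side or jointly on the $-$ side). By swapping which side triggers one of the two agents, such a pair instead contributes this quantity with the opposite sign.

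Having the gadgets, correctness is essentially automatic. By \Cref{coro:PNE-existence-by-potential}, any ($\vec{\mathcal S},\vec v$)-asymmetric CAG is an exact potential game with potential $\Phi$, so a strategy profile is a PNE iff it is a local maximum of $\Phi$ under single-agent deviations. Dummy agents cannot deviate, and a deviation of a vertex agent $a_v$ is exactly a FLIP of $v$; moreover, by the gadget computation above, at any vertex-agent profile one has $\Phi=\mathrm{const}+\tfrac1D\cdot(\text{cut weight})$, where $D$ is the global scaling factor introduced below and the constant collects the base-load terms. Hence a PNE of the constructed instance corresponds precisely to a locally optimal bipartition, and the solution map in the reverse direction — read off the side chosen by each $a_v$ — clearly sends PNEs to local optima. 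It remains to check that the instance is built in polynomial time and has polynomial size, which reduces to the one nontrivial point below.

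The step I expect to be the real obstacle is keeping the construction polynomial despite the exponentially large weights $w_e$. Since every node has unit value, one node shifts any agent's utility by at most $1$, so the raw weights cannot appear as utilities; instead I would scale all weights uniformly by a common integer $D$ — which preserves the set of locally optimal bipartitions — and then realize each scaled weight $w_e/D\in[0,1]\cap\mathbb Q$ \emph{exactly} as a signed sum $\sum_t \pm\bigl(\tfrac1{B_t+1}-\tfrac1{B_t+2}\bigr)$ with polynomially many terms and with every $B_t$ bounded by a polynomial, using the positive- and negative-sign gadgets above. This is where the arithmetic core lies: although $w_e/D$ may have an exponentially large denominator, the least common multiple of the polynomially bounded integers $\{1,2,\dots,\mathrm{poly}\}$ is already exponential, so choosing $D$ to be such an lcm makes every $w_e/D$ simultaneously expressible via a CRT/mixed-radix decomposition into a short signed combination of unit fractions with small denominators. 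Proving this decomposition lemma — controlling at once the number of summands and the magnitudes of numerators and denominators so that the total node and dummy-agent count stays polynomial — is the technical heart of the argument; once it is in place, the gadget wiring and the equivalence between PNEs and local optima are routine.
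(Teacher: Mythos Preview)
Your proposal is correct and follows essentially the same route as the paper: the same reduction from local max-cut via vertex agents with two strategies, the same edge gadget built from mirrored node pairs with base loads created by dummy agents so that each pair contributes $\pm\bigl(\tfrac1{B+1}-\tfrac1{B+2}\bigr)$ to the potential, and the same identification of the real difficulty as encoding exponential edge weights with polynomially many unit-value nodes. The paper resolves that difficulty exactly along the lines you sketch, taking $D=\prod_{i\le n}p_i$ (the first $n$ primes, equivalently the lcm you suggest) and using a CRT-style decomposition to write each $w/D$ as a short signed sum of unit fractions with polynomially bounded denominators.
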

\begin{proof}
We reduce the local max-cut problem to finding a PNE of an instance of ($\vec{\mathcal{S}}$)-asymmetric CAG.

Given an edge-weighted undirected graph $G(V,E,W)$, where the weight of each edge $e\in E$ is $W(e)\in\mathbb{Z}_{\geq1}$, a cut $(V',V'')$ is a partition of the vertex set $V$, such that $V'\cup V''=V,V'\cap V''=\emptyset$, and the objective function is the total weights of edges in a cut, defined as $\sum_{e=(u_1,u_2)\in E}W(e)\mathrm{I}[(u_1\in V'\land u_2\in V'')\vee(u_1\in V''\land u_2\in V')]$. The local max-cut problem asks to find a cut $(V',V'')$ such that moving any single vertex from one side to another does not increase the weight of the cut.

Denote $n_V=|V|$ and assume that the vertices are labeled as $V=\{u_1,\cdots,u_{n_V}\}$.
For convenience, any partition can be encoded as a vector $\vec{x}\in\{1,-1\}^{n_V}$, such that $V'(\vec{x})=\{u_i\in V:x_i=1\}$ and $V''(\vec{x})=\{u_i\in V:x_i=-1\}$. Let $CUTWEIGHT(\vec{x})$ denote the weight of the cut $(V'(\vec{x}),V''(\vec{x}))$, then we have $CUTWEIGHT(\vec{x})=\sum_{e=(u_{i_1},u_{i_2})\in E}\frac{1-x_{i_1}x_{i_2}}{2}W(e)$.

The total value of nodes in an ($\vec{\mathcal{S}}$)-asymmetric CAG is $|N|$, which cannot be exponentially large. This causes some technical difficulty to represent the edge weights in $G$. We will build a gadget consisting of a group of nodes and dummy players for each edge to encode its weight.

We construct an instance $\mathcal{I}=(N,A,\vec{\mathcal{S}},\vec{w},\vec{v})$ of ($\vec{\mathcal{S}}$)-asymmetric CAG, where
\begin{itemize}
    \item The node set $N$ consists of $|E|$ groups of nodes. Specifically, for each $e\in E$, we construct a group of nodes, called the edge gadget $GAD_e$.
    \item Let $A=A_1\cup A_2$, where $A_1=\{1,\cdots,n_V\}$. Each agent $i\in A_1$ corresponds with the vertex $u_i\in V$. Each agent $i'\in A_2$ is a dummy agent forced to attract a single node $POS(i')\in N$, by setting the strategy space of $i'$ as $\mathcal{S}_{i'}=\{\{POS(i')\}\}$. $A_2$ consists of $|E|$ groups of dummy players, each associated with an edge gadget.
    \item Let $\mathcal{S}_i=\{s_{i,1},s_{i,-1}\}$ for all $i\in A_1$. The elements in each $s_{i,1}$ and $s_{i,-1}$ will be determined later. Note that we can define a one-to-one correspondence between $\vec{\mathcal{S}}$ and $\{-1,1\}^{n_V}$, denoted as $\vec{S}(\vec{x})=(S_i(\vec{x}))_{i\in A}$, such that $S_i(\vec{x})=s_{i,x_i}$ for any $i\in A_1$, and $S_{i'}(\vec{x})=\{POS(i')\}$ for any $i'\in A_2$.
    \item $\vec{w}\equiv 1$ and $\vec{v}\equiv 1$ since they are symmetric.
\end{itemize}
Now we give the construction of the edge gadget $GAD_e$ for each $e=(u_{i_1},u_{i_2})\in E$. To build $GAD_e$, we add a group of nodes to $N$, which may be attracted by agents $i_1$ and $i_2$, and we add a group of dummy players to $A_2$.

Intuitively, suppose we build two nodes $r$, $q$ with $v_r=v_q=1$ and add them to be attracted as that $r$ is in $s_{i_1,1}$ and $s_{i_2,1}$, $q$ is in $s_{i_1,-1}$ and $s_{i_2,-1}$. Then when $x_{i_1}\neq x_{i_2}$, each of $r$ and $q$ is attracted by exactly one of $i_1$ and $i_2$; when $x_{i_1}=x_{i_2}$, one of $r$ and $q$ is attracted by both $i_1$ and $i_2$, while the other one is attracted by neither. In the former case, $r$ and $q$ make a total contribution of $1+1=2$ to the potential function. In the latter case, $r$ and $q$ make a total contribution of $\frac32+0=\frac32$ to the potential function. This difference can be utilized to encode the edge weight $W(e)$. Note that since $W(e)$ can be exponentially large, we cannot not feasible to simply repeat this structure for $W(e)$ times to get a total difference proportional to $W(e)$. To improve the flexibility of the construction, suppose that we add $d$ dummy agents with unit weights on both $r$ and $q$, then the difference between the cases that $x_{i_1}\neq x_{i_2}$ and $x_{i_1}=x_{i_2}$ in the contribution to the potential function becomes $\frac1{d+1}-\frac1{d+2}$. This structure helps us to encode rational numbers.

$GAD_e$ is specified by some non-negative integers $\{d^{e+}_k\}_{k=1,\cdots,L^{e+}}$ and $\{d^{e-}_k\}_{k=1,\cdots,L^{e-}}$, such that:
\begin{itemize}
    \item For each $k=1,\cdots,L^{e+}$, we create two nodes $r^{e+,k}_{0}$ and $r^{e+,k}_{1}$, and then we create $d^{e+}_k$ dummy agents forced to attract $r^{e+,k}_{0}$, and create $d^{e+}_k$ dummy agents forced to attract $r^{e+,k}_{1}$. We add $r^{e+,k}_{0}$ into $s_{i_1,1}$ and $s_{i_2,1}$, add $r^{e+,k}_{1}$ into $s_{i_1,-1}$ and $s_{i_2,-1}$.
    \item For each $k=1,\cdots,L^{e-}$, we create two nodes $r^{e-,k}_{0}$ and $r^{e-,k}_{1}$, and then we create $d^{e-}_k$ dummy agents forced to attract $r^{e-,k}_{0}$, and create $d^{e-}_k$ dummy agents forced to attract $r^{e-,k}_{1}$. We add $r^{e-,k}_{0}$ into $s_{i_1,1}$ and $s_{i_2,-1}$, add $r^{e-,k}_{1}$ into $s_{i_1,-1}$ and $s_{i_2,1}$.
\end{itemize}
We illustrate the nodes and involved strategies in \Cref{fig:PLSgadget}.
\begin{figure}
    \centering
    \includegraphics[width=0.7\linewidth]{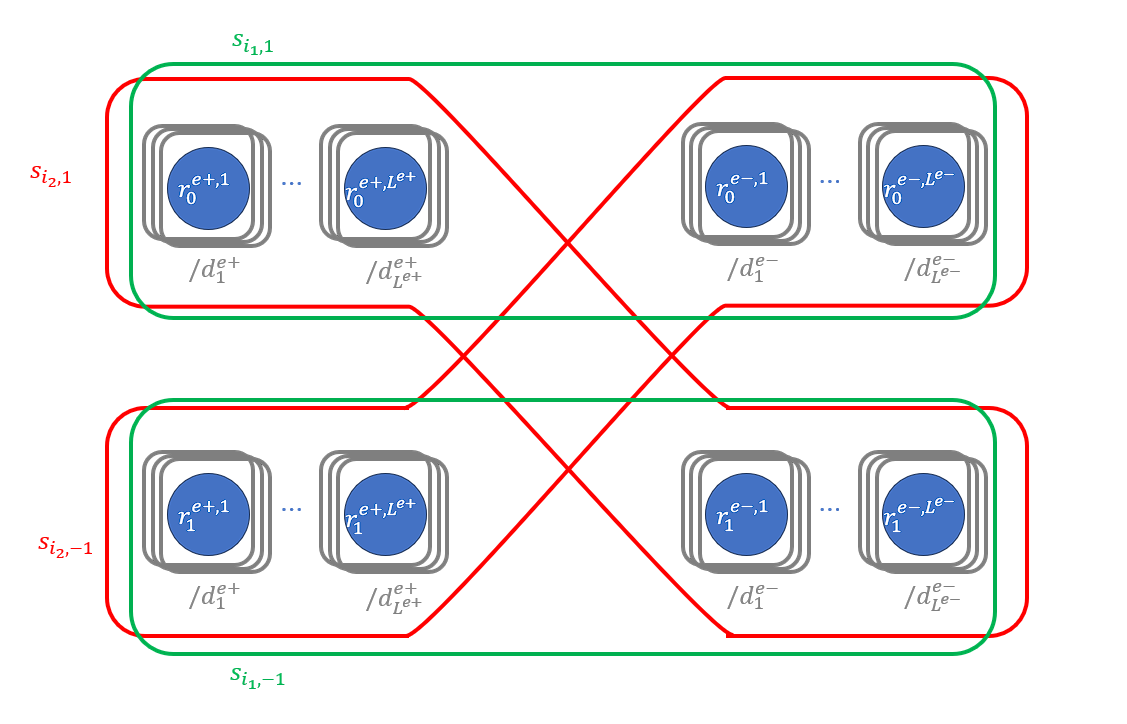}
    \caption{An illustration of the construction of $GAD_e$. The nodes are illustrated as blue circles. The grey rectangles represent the dummy agents attrcting each node, with each grey label $/d_k^{e+}$ and $/d_k^{e-}$ representing that the number of dummy agents on the node is $d_k^{e+}$ or $d_k^{e-}$ respectively. The two strategies of $i_1$ are represented as green boxes, and two strategies of $i_2$ are represented as red boxes.}
    \label{fig:PLSgadget}
\end{figure}

We properly design two series: $\{d^{e+}_k\}_{k=1,\cdots,L^{e+}}$ and $\{d^{e-}_k\}_{k=1,\cdots,L^{e-}}$ so that $GAD_e$ can represent the edge weight $W(e)$.

Define $\phi(j,\vec{S}(\vec{x}))=\sum_{i=1}^{c(j,\vec{S}(\vec{x}))}\frac{1}i$ for any $j\in N$, then $\Phi(\vec{S})=\sum_{j\in N}\phi(j,\vec{S})$. We know that $\vec{S}(\vec{x})$ is a PNE if and only if it is a local maximum of $\Phi$, i.e., $\Phi(\vec{S}(\vec{x}))$ cannot be increased by changing the value of any single $x_i$.

For each $k=1,\cdots,L^{e+}$, we calculate $\phi(r^{e+,k}_{0},\vec{S}(\vec{x}))+\phi(r^{e+,k}_{1},\vec{S}(\vec{x}))=\sum_{i=1}^{c(r^{e+,k}_{0},\vec{S}(\vec{x}))}\frac1{i}+\sum_{i=1}^{c(r^{e+,k}_{1},\vec{S}(\vec{x}))}\frac1{i}$. We have two cases:

(a) If $x_{i_1}x_{i_2}=1$, i.e., $x_{i_1}=x_{i_2}$, we know that either $c(r^{e+,k}_{0},\vec{S}(\vec{x}))=d^{e+}_k+2,c(r^{e+,k}_{1},\vec{S}(\vec{x}))=d^{e+}_k$, or $c(r^{e+,k}_{0},\vec{S}(\vec{x}))=d^{e+}_k,c(r^{e+,k}_{1},\vec{S}(\vec{x}))=d^{e+}_k+2$. Therefore $$\phi(r^{e+,k}_{0},\vec{S}(\vec{x}))+\phi(r^{e+,k}_{1},\vec{S}(\vec{x}))=\sum_{i=1}^{d^{e+}_k+2}\frac1{i}+\sum_{i=1}^{d^{e+}_k}\frac1{i}.$$

(b) If $x_{i_1}x_{i_2}=-1$, i.e. $x_{i_1}\neq x_{i_2}$, we know that  $c(r^{e+,k}_{0},\vec{S}(\vec{x}))=c(r^{e+,k}_{1},\vec{S}(\vec{x}))=d^{e+}_k+1$.
It follows that $$\phi(r^{e+,k}_{0},\vec{S}(\vec{x}))+\phi(r^{e+,k}_{1},\vec{S}(\vec{x}))=\sum_{i=1}^{d^{e+}_k+1}\frac1{i}+\sum_{i=1}^{d^{e+}_k+1}\frac1{i}.$$
Combining cases (a) and (b), we get
\begin{align*}
    &\phi(r^{e+,k}_{0},\vec{S}(\vec{x}))+\phi(r^{e+,k}_{1},\vec{S}(\vec{x}))\\
    =&\frac{1-x_{i_1}x_{i_2}}{2}(\frac1{d^{e+}_k+1}-\frac1{d^{e+}_k+2})+2\sum_{i=1}^{d^{e+}_k+1}\frac1{i}.
\end{align*}

For each $k=1,\cdots,L^{e-}$, by similar calculation, we have
\begin{align*}
    &\phi(r^{e-,k}_{0},\vec{S}(\vec{x}))+\phi(r^{e-,k}_{1},\vec{S}(\vec{x}))\\
    =&\frac{x_{i_1}x_{i_2}-1}{2}(\frac1{d^{e-}_k+1}-\frac1{d^{e-}_k+2})+2\sum_{i=1}^{d^{e-}_k+1}\frac1{i}.
\end{align*}

Let $N^{GAD_e}$ denote the set of all $2(L^{e+}+L^{e-})$ nodes in $GAD_e$. Let $\rho(e)=\sum_{k=1}^{L^{e+}}2\sum_{i=1}^{d^{e+}_k+1}\frac1{i}+\sum_{k=1}^{L^{e-}}2\sum_{i=1}^{d^{e-}_k+1}\frac1{i}$. By summing over $N^{GAD_e}$, we get
\begin{align*}\sum_{j\in N^{GAD_e}}\phi(j,\vec{S}(\vec{x}))
=&\sum_{k=1}^{L^{e+}}(\frac{1-x_{i_1}x_{i_2}}{2}(\frac1{d^{e+}_k+1}-\frac1{d^{e+}_k+2})+2\sum_{i=1}^{d^{e+}_k+1}\frac1{i})\\
+&\sum_{k=1}^{L^{e-}}(\frac{x_{i_1}x_{i_2}-1}{2}(\frac1{d^{e-}_k+1}-\frac1{d^{e-}_k+2})+2\sum_{i=1}^{d^{e-}_k+1}\frac1{i})\\
=&\frac{1-x_{i_1}x_{i_2}}{2}(\sum_{k=1}^{L^{e+}}(\frac1{d^{e+}_k+1}-\frac1{d^{e+}_k+2})-\sum_{k=1}^{L^{e-}}(\frac1{d^{e-}_k+1}-\frac1{d^{e-}_k+2}))+\rho(e)
\end{align*}

Our goal is to take a rational number $\lambda\in(0,1)$ and design $\{d^{e+}_k\}_{k=1,\cdots,L^{e+}}$ and $\{d^{e-}_k\}_{k=1,\cdots,L^{e-}}$ for each $e\in E$ so that \begin{equation}
    \lambda W(e)=\sum_{k=1}^{L^{e+}}(\frac1{d^{e+}_k+1}-\frac1{d^{e+}_k+2})-\sum_{k=1}^{L^{e-}}(\frac1{d^{e-}_k+1}-\frac1{d^{e-}_k+2}). \label{eq:PLS-gadget1}
\end{equation}
When \Cref{eq:PLS-gadget1} holds for all $e\in E$, we have 
\begin{align*}
\Phi(\vec{S}(\vec{x}))=&\sum_{e\in E}\sum_{j\in N^{GAD_e}}\phi(j,\vec{S}(\vec{x}))\\
=&\sum_{e=(i_1,i_2)\in E}\bigg(\frac{1-x_{i_1}x_{i_2}}{2}\bigg(\sum_{k=1}^{L^{e+}}\Big(\frac1{d^{e+}_k+1}-\frac1{d^{e+}_k+2}\Big) \\
&-\sum_{k=1}^{L^{e-}}\Big(\frac1{d^{e-}_k+1}-\frac1{d^{e-}_k+2}\Big)\bigg)+\rho(e)\bigg)\\
=&\sum_{e\in E}\frac{1-x_{i_1}x_{i_2}}{2}\lambda W(e)+\sum_{e\in E}\rho(e)\\
=&\lambda\cdot CUTWEIGHT(\vec{x})+\sum_{e\in E}\rho(e).
\end{align*}
This implies that $\Phi(\vec{S}(\vec{x}))$ is locally maximal if and only if $\vec{x}$ is a local maximum of $CUTWEIG$ $HT(\vec{x})$, i.e., $(V'(\vec{x}),V''(\vec{x}))$ is a local maximum cut. 


Now we only need to construct $\{d^{e+}_k\}_{k=1,\cdots,L^{e+}}$ and $\{d^{e-}_k\}_{k=1,\cdots,L^{e-}}$ for each $e\in E$. We demonstrate that this is possible in \Cref{lem:PLS-edge-gadget} based on the technical result \Cref{lem:PLS-decompose-fraction}, both of which will be proved later.
By \Cref{lem:PLS-edge-gadget}, let $\bar{W}=\max_{e\in E}W(e)$,  we can calculate $\lambda\in(0,1)$, so that for each $e\in E$, the two series $\{d^{e+}_k\}_{k=1,\cdots,L^{e+}}$ and $\{d^{e-}_k\}_{k=1,\cdots,L^{e-}}$ which satisfies \Cref{eq:PLS-gadget1} can be computed in polynomial time in $\log \bar{W}$. Note that the total number of nodes and dummy agents used to construct each $GAD_e$ are $2(L^{e+}+L^{e-})$ and $2(\sum_{k=1}^{L^{e+}}d^{e+,k}+\sum_{k=1}^{L^{e-}}d^{e-,k})$, respectively, and they are both bounded polynomially. Thus $|N|$, $|A|$, and $\sum_{i\in A}|\mathcal{S}_i|$ are polynomial in $|V|$, $|E|$, and $\log\bar{W}$.

To conclude, the instance $\mathcal{I}$ can be computed within polynomial time in the description length of $G$, and each PNE of $\mathcal{I}$ can be mapped to a local maximum cut of $G$. This completes the proof.
\end{proof}
\begin{lemma}\label{lem:PLS-edge-gadget}
Given an upper bound $\bar{W}\in\mathbb{N}^*$ on edge weights, there exists $\lambda\in(0,1)$, such that for any integer $w\in[0,\bar{W}]$, one can construct $\{d^{+}_k\}_{k=1,\cdots,L^{+}}$ and $\{d^{-}_k\}_{k=1,\cdots,L^{-}}$, such that $\lambda w=\sum_{k=1}^{L^{+}}(\frac1{d^{+}_k+1}-\frac1{d^{+}_k+2})-\sum_{k=1}^{L^{-}}(\frac1{d^{-}_k+1}-\frac1{d^{e-}_k+2})$. 
Moreover, $\sum_{k=1}^{L^{+}}(d^{+}_k+1)+\sum_{k=1}^{L^{-}}(d^{-}_k+1)$ is bounded by a polynomial in $\log \bar{W}$, and the construction can be computed in polynomial time.
\end{lemma}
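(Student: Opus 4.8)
The plan is to reduce the construction to \Cref{lem:PLS-decompose-fraction} together with a single telescoping identity. The first observation is that every admissible gadget term has the special shape $\frac1{d+1}-\frac1{d+2}=\frac1{(d+1)(d+2)}$, i.e.\ it is the reciprocal of a product of two consecutive integers, so the task is to write $\lambda w$ as a signed sum of such reciprocals with small total ``size'' $\sum_k(d_k+1)$. Accordingly, I would set $n:=\lceil\log_2\max(\bar W,2)\rceil=\Theta(\log\bar W)$, so that every integer in $[0,\bar W]$ lies in $[0,2^n]$, and let $\lambda:=\lambda_n\in(0,1)$ be the number supplied by \Cref{lem:PLS-decompose-fraction}; note $\lambda$ depends only on $\bar W$, not on $w$. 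For the given $w$, that lemma produces $\{(B_i,C_i)\}_{i=1,\dots,L}$ with $w\lambda=\sum_{i=1}^L C_i/B_i$, where $\sum_i|C_i|$ and $\max_iB_i$ are polynomial in $n$ and everything is polynomial-time computable.

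The next step is to turn each term $C_i/B_i$ into gadget terms. The only identity needed is the telescoping expansion $\frac1B=\frac12-\sum_{k=2}^{B-1}\frac1{k(k+1)}$, valid for every integer $B\ge 2$ (it follows by iterating $\frac1k=\frac1{k-1}-\frac1{(k-1)k}$ down to $\frac12$, or directly from $\sum_{k=1}^{B-1}\frac1{k(k+1)}=1-\frac1B$). Here $\frac12=\frac1{(0+1)(0+2)}$ is the gadget term with $d=0$, and each $\frac1{k(k+1)}=\frac1{(d+1)(d+2)}$ with $d=k-1\ge 1$. For a term with $B_i\ge 2$, I would multiply this identity by $C_i$ and route the resulting $|C_i|$ copies of the $d=0$ gadget and $|C_i|$ copies of each $d=k-1$ gadget ($k=2,\dots,B_i-1$) into the positive list $\{d^+_k\}$ or the negative list $\{d^-_k\}$ according to the sign of $C_i$ and the sign that term carries in the identity. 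The corner case $B_i=1$ is handled by $\frac1{B_i}=1=\frac12+\frac12$, i.e.\ $2|C_i|$ copies of the $d=0$ gadget placed in the list matching the sign of $C_i$ (and nothing if $C_i=0$). Concatenating over all $i$ produces lists with $\lambda w=\sum_k\frac1{(d^+_k+1)(d^+_k+2)}-\sum_k\frac1{(d^-_k+1)(d^-_k+2)}$, which is precisely the required form once each reciprocal product is re-expanded as the difference $\frac1{d+1}-\frac1{d+2}$.

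Finally I would verify the size bound. A term with $B_i=1$ contributes $2|C_i|$ to $\sum_k(d_k+1)$, and a term with $B_i\ge 2$ contributes $|C_i|\bigl(1+\sum_{k=2}^{B_i-1}k\bigr)\le|C_i|B_i^2$; summing over $i$ gives $\sum_k(d^+_k+1)+\sum_k(d^-_k+1)\le (\max_iB_i)^2\sum_i|C_i|$, which is polynomial in $n=\Theta(\log\bar W)$. Since each gadget term contributes at least $1$ to this quantity, the number of terms is bounded by the same expression, and the construction is plainly polynomial-time. I do not expect a genuine obstacle: the argument is essentially bookkeeping layered on top of \Cref{lem:PLS-decompose-fraction}, and the only points requiring a little care are that the gadget terms are constrained to be reciprocals of products of two \emph{consecutive} integers (which is exactly why the telescoping identity is used rather than an arbitrary unit-fraction expansion) and that the $B_i=1$ contributions, being integers, still admit a finite gadget representation via $1=\frac12+\frac12$.
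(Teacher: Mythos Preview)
Your proposal is correct and follows essentially the same approach as the paper: both invoke \Cref{lem:PLS-decompose-fraction} with $n=\Theta(\log\bar W)$ and then convert each $\frac{C_i}{B_i}$ into gadget terms via the telescoping identity $\sum_{k=1}^{b-1}\frac1{k(k+1)}=1-\frac1b$. The only cosmetic difference is that the paper writes $\frac1b=\frac1{1\cdot2}+\frac1{1\cdot2}-\sum_{i=0}^{b-2}\frac1{(i+1)(i+2)}$, which covers $b=1$ uniformly, whereas you telescope from $\frac12$ and treat $B_i=1$ as a separate case; the resulting size bounds are equivalent.
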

\begin{proof}
First, observe that for any $b\in\mathbb{N}^*$, \begin{align*}
\frac1b=1-\sum_{i=1}^{b-1}\frac1{i(i+1)}=\frac1{1\cdot 2}+\frac1{1\cdot 2}-\sum_{i=0}^{b-2}\frac{1}{(i+1)(i+2)}.
\end{align*}
Thus any $\frac1{b}$ or $-\frac1{b}$ can be written as the sum of $b+2$ terms in the form of $\pm(\frac1{d+1}-\frac1{d+2})$.

According to \Cref{lem:PLS-decompose-fraction} which will be proved later, let $\bar{n}=\lceil\log_2(\bar{W})\rceil$, we can calculate $\lambda=\lambda_{\bar{n}}$. Given any $w\in [0,\bar{W}]$, we can calculate $\{(B_i,C_i)\}_{i=1,\cdots,L}$ such that $\sum_{i=1}^L\frac{C_i}{B_i}=w\lambda$. Each term $\frac{C_i}{B_i}$ can be represented by $|C_i|(B_i+2)$ terms in the form of $\pm(\frac1{d+1}-\frac1{d+2})$, where $d\leq \max\{B_i,2\}$. By collecting all these terms, we get $\{d^{+}_k\}_{k=1,\cdots,L^{+}}$ and $\{d^{-}_k\}_{k=1,\cdots,L^{-}}$ as desired.

Since $\sum_{i=1}^L|C_i|$ and $\max_{i=1,\cdots,L}B_i$ are bounded by a polynomial in $\bar{n}=O(\log\bar{W})$, we know that $\sum_{k=1}^{L^{+}}(d^{+}_k+1)+\sum_{k=1}^{L^{-}}(d^{-}_k+1)$ is polynomial in $\log\bar{W}$. It is easy to see that the construction is polynomial-time computable.
\end{proof}

\begin{lemma}\label{lem:PLS-decompose-fraction}
For any integer $n>0$, there exists $\lambda_n\in(0,1)$ such that, for any integer $w\in[0,2^n]$, there exist a series of numbers, $\{(B_i,C_i)\}_{i=1,\cdots,L}$, so that $$w\lambda_n=\sum_{i=1}^L \frac{C_i}{B_i},$$
where $B_i\in\mathbb{Z}_{\geq1},C_i\in\mathbb{Z}$, $L\in\mathbb{Z}_{\geq 1}$, and $\sum_{i=1}^L|C_i|$ and $\max_i B_i$ is bounded by a polynomial in $n$ \footnote{During the statement and proof of this lemma, polynomial means polynomial in $n$ when not specified.}. Moreover, $\lambda_n$ and $\{(B_i,C_i)\}_{i=1,\cdots,L}$ can be computed in polynomial time.
\end{lemma}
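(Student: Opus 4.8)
\begin{proofsketch}
The obstacle is that $w$ may be as large as $2^n$, so writing $w\lambda_n$ through the binary expansion of $w$ would force denominators $2,4,\ldots,2^n$ that are exponentially large. The plan is instead to spread the magnitude of $w$ over many pairwise coprime \emph{small} moduli and use a partial-fraction (Chinese Remainder) decomposition. Concretely, I would set $M=\mathrm{lcm}(1,2,\ldots,n+1)$ and take $\lambda_n=1/M\in(0,1)$. Factoring $M=\prod_{t=1}^{r}q_t$ as the product of the maximal prime powers $q_t\le n+1$, the $q_t$ are pairwise coprime, each at most $n+1$, and $r\le n$. The point of this choice of $M$ is the elementary bound $\mathrm{lcm}(1,\ldots,n+1)\ge 2^n$ (Nair's inequality; alternatively $\prod_{p\le 2n}p\ge 2^n$ from Chebyshev's estimates, with tiny $n$ checked by hand), which gives $M\ge 2^n\ge w$.

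Given this, since $\gcd_{t}(M/q_t)=1$, iterated B\'ezout computes in time polynomial in $n$ integers $e_1,\ldots,e_r$ of polynomial bit-length with $\sum_{t}e_t\,(M/q_t)=1$, hence $1/M=\sum_t e_t/q_t$ and $w\lambda_n=w/M=\sum_{t}(we_t)/q_t$. For each $t$, I would perform floor division $we_t=q_t\lfloor we_t/q_t\rfloor+h_t$ with $h_t\in\{0,1,\ldots,q_t-1\}$, and collect all integer parts into $G=\sum_t\lfloor we_t/q_t\rfloor\in\mathbb{Z}$. This yields
\[
w\lambda_n=\frac{G}{1}+\sum_{t=1}^{r}\frac{h_t}{q_t},
\]
so the output list is $(B_0,C_0)=(1,G)$ together with $(B_t,C_t)=(q_t,h_t)$ for $t=1,\ldots,r$, giving $L=r+1$ fractions with denominators $1$ and $q_t\le n+1$.

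\textbf{The crux} is controlling $|G|$: each individual $\lfloor we_t/q_t\rfloor$ can be exponentially large, but their sum is forced to be small. Indeed $G=w/M-\sum_t h_t/q_t$; since $0\le w/M\le 1$ (here the bound $M\ge 2^n$ is used) and $0\le\sum_t h_t/q_t<r$, the integer $G$ lies in $(-r,1]$, so $|G|\le r\le n$. Consequently $\sum_i|C_i|=|G|+\sum_t h_t\le n+\sum_t(q_t-1)\le n+r(n+1)=O(n^2)$ and $\max_i B_i\le n+1$, both polynomial in $n$. Finally, all steps --- the sieve producing $q_1,\ldots,q_r$, the B\'ezout computation, the $r$ floor divisions on integers of polynomial bit-length, and the summation --- run in time polynomial in $n$, establishing the computability claim; the only genuinely external ingredient I would invoke without reproving is the lower bound $M\ge 2^n$.
\end{proofsketch}
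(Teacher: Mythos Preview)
Your argument is correct and follows essentially the same CRT/partial-fraction strategy as the paper: choose $\lambda_n=1/M$ for a product of small pairwise coprime moduli whose product exceeds $2^n$, use B\'ezout to write $1/M=\sum_t e_t/q_t$, multiply by $w$, reduce each numerator modulo its denominator, and collect the integer parts into a single term whose size is bounded because $w/M\le 1$.

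The one substantive difference is the choice of $M$. The paper takes $M=\prod_{i=1}^n p_i$, the product of the first $n$ primes, so the denominators are the primes $p_1,\ldots,p_n$ and the bound $M\ge 2^n$ is immediate since each $p_i\ge 2$. You take $M=\mathrm{lcm}(1,\ldots,n+1)$, which makes the denominators prime powers at most $n+1$ and yields slightly sharper final bounds ($\max_i B_i\le n+1$ and $\sum_i|C_i|=O(n^2)$ versus the paper's $O(n\log n)$ and $O(n^2\log n)$), but at the cost of invoking Nair's inequality to get $M\ge 2^n$. The paper's choice is thus more self-contained, while yours is marginally tighter; otherwise the two proofs are the same.
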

\begin{proof}
Let $p_i$ denote the $i$-th smallest prime number. By the prime number theorem, we have $p_n=O(n\log n)$. Therefore, given any $n>0$, $p_1,\cdots,p_n$ can be computed in polynomial time by simply enumerating all the natural numbers below $O(n\log n)$ and checking the primality of each number.
We let $M=\prod_{i=1}^np_i$ and take $\lambda_n=\frac1{M}$.

Define $r_1,\cdots,r_n$ such that $r_i=\prod_{i'\neq i}p_{i'}=\frac{M}{p_i}$. Note that $1$ is the only common factor of $r_1,\cdots,r_n$, so there exists $t_1,\cdots,t_n\in\mathbb{Z}$ such that $\sum_{i=1}^nt_ir_i\equiv 1\pmod{M}$. In other words, $\sum_{i=1}^nt_ir_i=1+TM$ for some $T\in\mathbb{Z}$.

For any integer $w\in[0,2^n]$, we have $$\sum_{i=1}^n\frac{wt_i}{p_i}=\sum_{i=1}^n\frac{wt_ir_i}{M}=\frac{w(1+TM)}M=\frac{w}M+wT.$$
Let $C_i'=(wt_i\bmod{p_i})$ for $i=1,\cdots,n$ and  $C''=\sum_{i=1}^n\frac{wt_i-C_i'}{p_i}-wT$, then one can see that $C''$ is an integer. In addition, $$C''+\sum_{i=1}^n\frac{C_i'}{p_i}=\sum_{i=1}^n\frac{wt_i}{p_i}-wT=\frac{w}{M}.$$
Note that $M\geq 2^n\geq w$, and $|C_i'|< p_i$ for any $i=1,\cdots,n$. Therefore $$|C''|=\left|\frac{w}{M}-\sum_{i=1}^n\frac{C_i'}{p_i}\right|\leq \left|\frac{w}{M}\right|+\sum_{i=1}^n\frac{|C_i'|}{p_i}\leq n+1.$$

Define $L=n+1$. For each $i=1,\cdots,n$, let $C_i=C'_i, B_i=p_i$, and let $C_{n+1}=C'', B_{n+1}=1$. Then, it follows that $\sum_{i=1}^L\frac{C_i}{B_i}=\frac{w}{M}=w\lambda_n$.

Since $|C_i|< p_i$, $B_i=p_i$ holds for $i=1,\cdots,n$, we have $\sum_{i=1}^L|C_i|\leq \sum_{i=1}^np_i+|C''|\leq np_n+n+1=O\left(n^2\log n\right)$ and $\max_{i=1,\cdots,L}B_i=O(n\log n)$,
which are polynomially bounded.

Moreover, we note that $t_1,\cdots,t_n$ can be constructed by iteratively applying the Extended Euclidean Algorithm. For $i=1,\cdots,n-1$, given $\gcd(r_1,\cdots,r_i)$, one can use the Extended Euclidean Algorithm to calculate $\gcd(r_1,\cdots,r_{i+1})=\gcd(\gcd(r_1,\cdots,r_i),r_{i+1})$, as well as two coefficients $x_i,y_{i+1}$ such that $x_i\cdot \gcd(r_1,\cdots,r_i)+y_{i+1}\cdot r_{i+1}=\gcd(r_1,\cdots,r_{i+1})$. Herein, the absolute values of $x_i$ and $y_{i+1}$ are bounded by $M$. Let $y_1=1$, we can take $t_i=(y_i\prod_{j=i}^{n-1}x_j\bmod{M})$, and it can be verified that $\sum_{i=1}^nt_ir_i\equiv 1\pmod{M}$. Since $\log M=O(n\log n)$, all the addition, multiplication and division with remainder operations can be done in polynomial time. The total number of operations is also polynomial. In a summary, $\lambda_n$ and $\{(B_i,C_i)\}_{i=1,\cdots,L}$ can be computed in time polynomial in $n$.
\end{proof}
So far, we have shown that finding a PNE of an ($\vec{\mathcal{S}}$)-asymmetric CAG is PLS-hard. Then, the following lemma assures that the equivalent hardness of finding a PNE between ($\vec{\mathcal{S}}$)-asymmetric CAG and symmetric CAG.
\begin{lemma}
\label{lem:PLS-unionizing-strategy}
Any instance of ($\vec{\mathcal{S}}$)-asymmetric CAG $\mathcal{I}=(N,A,\vec{\mathcal{S}},\vec{w},\vec{v})$ can be mapped to an instance of symmetric CAG $\mathcal{I}'=(N',A',\vec{\mathcal{S}}',\vec{w}',\vec{v}')$, such that any PNE of $\mathcal{I}'$ can be mapped back to a PNE of $\mathcal{I}$, where both mapping are polynomial-time computable.
\end{lemma}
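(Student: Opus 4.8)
The plan follows the outline already sketched in the paper: pad each agent's strategies with a private block of fresh unit-value nodes that is so large that, in any equilibrium, every agent is effectively forced onto a strategy inherited from her own strategy space; the agents then just replay the original game on the remaining nodes. Concretely, I would put $M := 2|N|+1$, and for each $i \in A$ create a block $D_i$ of $M$ fresh nodes, the blocks being pairwise disjoint and disjoint from $N$. Let $N' := N \cup \bigcup_{i\in A} D_i$ (all unit value), $A' := A$ (all unit weight), and let the common strategy space be $\mathcal{S}' := \bigcup_{i\in A}\{ s \cup D_i : s \in \mathcal{S}_i \}$ with $\mathcal{S}'_j = \mathcal{S}'$ for every $j$. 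Since each $s \subseteq N$ is disjoint from every $D_i$, each strategy of $\mathcal{S}'$ contains exactly one block, so $|\mathcal{S}'| = \sum_i |\mathcal{S}_i|$ and $|N'| = |N| + mM$; hence $\mathcal{I}'$ is a symmetric CAG of size polynomial in that of $\mathcal{I}$. (We may assume every $\mathcal{S}_i \neq \emptyset$.)

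\emph{Step 1: blocks are used bijectively in a PNE.} The first claim is that for any PNE $\vec{S}'$ of $\mathcal{I}'$, the map $\pi$ sending agent $i$ to the unique index $k$ with $D_k \subseteq S'_i$ is a permutation of $A$. If it were not, some block $D_j$ would be attracted by at least two agents while some block $D_{j'}$ is attracted by none. An agent $i$ attracting $D_j$ earns at most $M/2$ from $D_j$ and at most $|N|$ from the real nodes, so $U_i(\vec{S}') \le M/2 + |N| < M$; but deviating to any $s \cup D_{j'}$ with $s \in \mathcal{S}_{j'}$ makes her the sole owner of $D_{j'}$ and therefore earns at least $M$ — a profitable deviation, contradiction.

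\emph{Step 2: the backward map and its correctness.} Write a PNE $\vec{S}'$ of $\mathcal{I}'$ as $S'_i = s_i \cup D_{\pi(i)}$ with $s_i \in \mathcal{S}_{\pi(i)}$, and define $\vec{S}$ in $\mathcal{I}$ by $S_j := s_{\pi^{-1}(j)} \in \mathcal{S}_j$; this is clearly polynomial-time computable. Since $\pi$ is a bijection, $\vec{S}$ and $\vec{S}'$ induce the same multiset of real-node strategies, so $c(q,\vec{S}) = c(q,\vec{S}')$ for every $q \in N$, and consequently $U_j(\vec{S}) = U_{\pi^{-1}(j)}(\vec{S}') - M$, the $-M$ being agent $\pi^{-1}(j)$'s private-block income. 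If some agent $j$ could profitably deviate to $S^{\dagger}_j \in \mathcal{S}_j$ in $\mathcal{I}$, then agent $\pi^{-1}(j)$ could deviate to $S^{\dagger}_j \cup D_j \in \mathcal{S}'$ in $\mathcal{I}'$: she keeps the full block $D_j$ (no one else attracts it) while her real-node payoff shifts exactly as $j$'s does, so her utility strictly increases — contradicting that $\vec{S}'$ is a PNE. Hence $\vec{S}$ is a PNE of $\mathcal{I}$, which completes the lemma; note that both $\mathcal{I}$ and $\mathcal{I}'$ admit PNEs by \Cref{coro:PNE-existence-by-potential} and \Cref{thm:PNE-existence-by-potential}, so the reduction is well-behaved for transferring PLS-hardness.

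\emph{Main obstacle.} The technical heart is Step 1 — ruling out any PNE in which two agents share a block — together with the bookkeeping in Step 2 of carrying the permutation $\pi$ through the identity relating the two instances' loads and utilities. Choosing $M$ just above $2|N|$, so that seizing a whole unclaimed block strictly beats sharing one, is exactly what makes the incentive argument go through while keeping $|N'| = O(m|N|)$ polynomial; a subtlety to keep in mind is that, because $\mathcal{S}'$ is common to all agents, an agent may a priori grab another agent's block, which is precisely why the conclusion of Step 1 is "permutation" rather than "identity" and why the relabelling by $\pi$ in Step 2 is needed.
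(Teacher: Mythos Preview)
Your proposal is correct and essentially identical to the paper's own proof: the paper likewise sets $M=2|N|+1$, creates a private block $N^i$ of $M$ fresh unit-value nodes per agent, takes the common strategy space $\bar{\mathcal{S}}=\bigcup_i\{s\cup N^i:s\in\mathcal{S}_i\}$, argues by the same $M/2+|N|<M$ inequality that any PNE of $\mathcal{I}'$ must be ``perfectly-matched'' (your permutation $\pi$), and then maps back to a PNE of $\mathcal{I}$ exactly as you do. The only difference is notational: the paper writes $\mu(i',\vec{S}')$ where you write $\pi$.
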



\begin{proof}

Suppose $|N|=n,|A|=m$. Without loss of generality, assume the agents are labeled as $A=\{1,\cdots,m\}$. Note that $n$ is an upper bound of any agent's utility under any strategy profile in $\mathcal{I}$.

We construct $\mathcal{I}'$ as follows. Let $M=2n+1$. We create a group of $M$ nodes $r_{i,1},\cdots,r_{i,M}$ with big value for each $i\in A$. Let $N^i$ denote $\{r_{i,1},\cdots,r_{i,M}\}$ and $N'=N\cup\bigcup_{i=1}^m N^i$.

For each $i\in A$, we add $N^i$ to each of agent $i$'s strategy, and obtain $\bar{\mathcal{S}}_i=\{s\cup N^i:s\in\mathcal{S}_i\}$. Define the common strategy space $$\bar{\mathcal{S}}=\bigcup_{i\in A}\bar{\mathcal{S}}_i.$$
Let $\mathcal{S}'_i=\bar{\mathcal{S}}$ for all $i\in A'$, and $\vec{\mathcal{S}}'=\times_{i\in A'}\bar{\mathcal{S}}$.

It is obvious that the total computation time for constructing $\mathcal{I'}$ and splitting the nodes in $\mathcal{N'}$ is polynomial in $|N|$, $|A|$, and $\sum_{i\in A}|\mathcal{S}_i|$.

By the construction of $\bar{\mathcal{S}}$, we know that for any strategy profile $\vec{S}'\in\vec{\mathcal{S}}'$ and any agent $i'\in A'$, there is exactly one $i\in A$, such that $S'_{i'}\in\bar{\mathcal{S}}_i$. We denote such correspondence as $i=\mu(i',\vec{S}')$. It follows that $i'$ attracts the nodes $N^i$ under $\vec{S}'$ if and only if $\mu(i',\vec{S}')=i$. Intuitively, each agent $i'$ in $\mathcal{I}'$ takes on the role of agent $\mu(i',\vec{S}')$ in $\mathcal{I}$.

We say a strategy profile $\vec{S}'\in\vec{\mathcal{S}}'$ is perfectly-matched, if for all $i\in A$, there is exactly one $i'\in A'$, such that $\mu(i',\vec{S}')=i$.

Observe that any perfectly-matched strategy profile $\vec{S}'\in\vec{\mathcal{S}}'$ induces a strategy profile $\vec{S}\in\vec{\mathcal{S}}$ in $\mathcal{I}$, such that for any $i'\in A'$ and $i=\mu(i',\vec{S}')$, it holds that $$S'_{i'}=S_{i}\cup N^i.$$
For any $j\in N$, it can be seen that $c(j,\vec{S}')=c(j,\vec{S})$. Therefore, for any agent $i'\in A'$, we have
$$U_{i'}^{\mathcal{I}'}(\vec{S}')=M+\sum_{j\in N}\frac{1}{c(j,\vec{S}')}=M+U_{\mu(i',\vec{S}')}^{\mathcal{I}}(\vec{S}).$$

Given a PNE $\vec{S}^{*\prime}$ of $\mathcal{I}'$, we construct a PNE $\vec{S}^*$ of $\mathcal{I}$. 
We first show that $\vec{S}^{*\prime}$ must be perfectly-matched. Suppose for the sake of contradiction, that $\vec{S}^{*\prime}$ is not perfectly-matched. Then there exists $i_1,i_2\in A$, $i'_1,i'_2\in A'$, such that $i'_1\neq i'_2$, $\mu(i'_1,\vec{S}^{*\prime})=\mu(i'_2,\vec{S}^{*\prime})=i_1$, and $\mu(i',\vec{S}^{*\prime})\neq i_2$ for all $i'\in A'$. It implies that  $c(r_{i_2,k},\vec{S}^{*\prime})=0$ for all $k=1,\cdots,M$. Observe that $U_{i'_1}^{\mathcal{I}'}(\vec{S}^{*\prime})\leq \frac1{2}M+n<M$. If $i'_1$ deviates to an arbitrary strategy in $\bar{\mathcal{S}}_{i_2}$, its utility will improve to at least $M$, which contradicts with the assumption that $\vec{S}^{*\prime}$ is a PNE. Therefore $\vec{S}^{*\prime}$ is perfectly-matched.

Let $\vec{S}^*$ be the strategy profile in $\mathcal{I}$ induced by $\vec{S}^{*\prime}$, we prove that $\vec{S}^*$ is a PNE by contradiction. Suppose that there is $i\in A$ and $s_i\in\mathcal{S}_i$ such that $U_{i}^{\mathcal{I}}(\vec{S}_{s_i})> U_{i}^{\mathcal{I}}(\vec{S}^*)$, where $\vec{S}_s=(s_i,\vec{S}^*_{-i})$. 
Take $i'\in A'$ such that $\mu(i',\vec{S}^{*\prime})=i$ and let $\vec{S}'_{s}=(s\cup N^i,\vec{S}^{*\prime}_{-i'})$ denote the strategy profile obtained by changing the strategy of $i'$ to $s\cup N^i$ in $\vec{S}^{*\prime}$. Since both $\vec{S}^{*\prime}$ and $\vec{S}'_{s}$ are perfectly-matched, we obtain
$$U_{i'}^{\mathcal{I}'}(\vec{S}'_{s})=M+U_{i}^{\mathcal{I}}(\vec{S}_{s})>M+U_{i}^{\mathcal{I}}(\vec{S}^*)=U_{i'}^{\mathcal{I}'}(\vec{S}^{*\prime}),$$
which contradicts with the assumption that $\vec{S}^{*\prime}$ is a PNE of $\mathcal{I}$. 
Note that the construction of $\vec{S}^*$ from any $\vec{S}^{*\prime}$ is also polynomial-time computable.
\end{proof}

In this way, we prove that for a symmetric CAG, finding a pure Nash equilibrium is PLS-hard. This completes the proof of \Cref{thm:PLS-complete}.
\end{proof}

Since the problem of finding a PNE is PLS-complete, we turn to designing the efficient algorithm to compute the approximate PNE. Fortunately, for any $\epsilon>0$, a $(1+\epsilon)$-approximate pure Nash equilibrium can be found in polynomial time by the best-response dynamics. 
That is, in each step, if the current strategy profile $\vec{S}$ is not a $(1+\epsilon)$-approximate PNE, an agent $i$ is chosen and deviates her strategy to $S'_i$, such that $$U_i(S'_i,\vec{S}_{-i})-U_i(\vec{S})=\max_{{i'}\in A,S'_{i'}\in \mathcal{S}_{i'}}(U_{i'}(S'_{i'},\vec{S}_{-i'})-U_{i'}(\vec{S})).$$ We provide the detailed proof in appendix. 

\begin{theorem}\label{thm:fptas}
Given $0<\epsilon<1$, for any instance $\mathcal{I}=(N,A,\vec{\mathcal{S}},\vec{w},\vec{v})$ of symmetric CAG, the best-response dynamics finds an $\epsilon$-pure Nash equilibrium in $O(\epsilon^{-1}|N||A|\log{|A|})$ steps, which implies that it is an FPTAS to compute an $(1+\epsilon)$-approximate PNE.
\end{theorem}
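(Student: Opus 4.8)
The plan is to argue that the best-response dynamics, where in each step the agent with the largest possible utility gain deviates, must terminate quickly whenever the current profile is not yet a $(1+\epsilon)$-approximate PNE. The central quantity to track is the potential function $\Phi(\vec{S})=\sum_{j\in N}\sum_{k=1}^{c(j,\vec{S})}\frac{1}{k}$ from \Cref{thm:PNE-existence-by-potential}. Since the game is an exact potential game, a deviation by agent $i$ from $S_i$ to $S_i'$ increases $\Phi$ by exactly $U_i(S_i',\vec{S}_{-i})-U_i(\vec{S})$. So I need two ingredients: (i) a lower bound on the per-step increase of $\Phi$ when $\vec S$ is not a $(1+\epsilon)$-approximate PNE, and (ii) an upper bound on the total range of $\Phi$ over the whole strategy space.

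For ingredient (ii), note that $\Phi$ is nonnegative and, since $c(j,\vec S)\le |A|$ for every node, $\Phi(\vec S)\le \sum_{j\in N} H_{|A|}\le |N|\,(1+\ln|A|) = O(|N|\log|A|)$, where $H_k$ is the $k$-th harmonic number. For ingredient (i), suppose $\vec S$ is not a $(1+\epsilon)$-approximate PNE: then some agent $i'$ has a deviation with $U_{i'}(S'_{i'},\vec S_{-i'}) > (1+\epsilon)U_{i'}(\vec S)$, i.e. $U_{i'}(S'_{i'},\vec S_{-i'}) - U_{i'}(\vec S) > \epsilon\, U_{i'}(\vec S)$. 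The chosen deviating agent $i$ has gain at least this large, so $\Phi$ increases by more than $\epsilon\,U_{i'}(\vec S)$. It remains to lower-bound $U_{i'}(\vec S)$ away from zero. Here I would observe that $U_{i'}(\vec S) = \sum_{j\in S_{i'}}\frac{v_j}{c(j,\vec S)} \ge \frac{|S_{i'}|}{|A|}\cdot\min_j v_j \ge \frac{1}{|A|}$ provided $S_{i'}$ is nonempty (which holds since $\bigcup_l s_l = N$ forces every topic to be nonempty and $v_j\ge 1$). Actually the cleaner route: the post-deviation utility $U_{i'}(S'_{i'},\vec S_{-i'})\ge \frac{1}{c(j,\cdot)}\ge \frac{1}{|A|}$ for any node $j\in S'_{i'}$, and the gain exceeds $\frac{\epsilon}{1+\epsilon}U_{i'}(S'_{i'},\vec S_{-i'}) \ge \frac{\epsilon}{1+\epsilon}\cdot\frac{1}{|A|} \ge \frac{\epsilon}{2|A|}$. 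Hence each step increases $\Phi$ by at least $\Omega(\epsilon/|A|)$.

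Combining, the number of steps is at most $\Phi_{\max}/(\text{min increase}) = O(|N|\log|A|)\big/\Omega(\epsilon/|A|) = O(\epsilon^{-1}|N||A|\log|A|)$, matching the claimed bound. Since each step requires computing, for every agent and every strategy in its space, the resulting utility — which is polynomial in the input size — the overall running time is polynomial in the input and in $1/\epsilon$, so this is an FPTAS for a $(1+\epsilon)$-approximate PNE.

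The main obstacle is pinning down ingredient (i) cleanly: one must be careful that the ``reference'' agent witnessing the violation of the approximate-equilibrium condition has utility (or post-deviation utility) bounded below by a quantity like $1/|A|$, rather than something that could be arbitrarily small. In the symmetric CAG all $v_j=1$, so this is immediate from $c(j,\vec S)\le |A|$; but if one wants the statement for the $(\vec{\mathcal{S}},\vec v)$-asymmetric case the bound degrades by a factor $\max_j v_j$, so I would phrase the lemma for the symmetric setting (as the theorem does) to keep the harmonic-number range bound and the $1/|A|$ lower bound both clean, and only remark on the extension. A secondary, purely bookkeeping point is to make sure the ``max-gain'' selection rule is what guarantees the per-step progress — any agent with a gain at least $\epsilon/(2|A|)$ would do, and the max-gain agent certainly has such a gain whenever the profile is not a $(1+\epsilon)$-approximate PNE.
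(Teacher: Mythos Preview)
Your proposal is correct and follows essentially the same approach as the paper: bound $\Phi$ above by $|N|H_{|A|}=O(|N|\log|A|)$, lower-bound the per-step potential increase by $\Omega(\epsilon/|A|)$ using that every agent's current utility is at least $1/|A|$, and divide. The paper uses your route (a) directly---it observes $U_{i'}(\vec S)\ge 1/|A|$ from nonempty strategies and $c(j,\vec S)\le|A|$, yielding a gain $>\epsilon/|A|$---rather than your route (b) via the post-deviation utility, so it avoids the extra factor of $2$, but this is inconsequential.
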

\begin{proof}
Let $M=|N|\sum_{k=1}^{|A|}\frac1k$. Thus, for any $\vec{S}\in\vec{\mathcal{S}}$, the potential function is upper bounded by $M$: $\Phi(\vec{S})\leq\sum_{j\in N}\sum_{k=1}^{|A|}\frac1k=M.$

Next, without any loss of generality, we assume that each agent has at least one strategy, and each strategy is non-empty. Then the utility of any agent is lower bounded: for any $\vec{S}\in\vec{\mathcal{S}}$ and any $i\in A$, $U_i(\vec{S})=\sum_{j\in S_i}\frac{1}{c(j,\vec{S})}\geq \frac1{|A|}$.
 

Assume that the current $\vec{S}$ is not an $\epsilon$-PNE, we know that there exists $i'\in A$ and $S'_{i'}\in \mathcal{S}_{i'}$ such that $U_{i'}(S'_{i'},\vec{S}_{-i'})>(1+\epsilon) U_{i'}(\vec{S})\geq U_{i'}(\vec{S})+\frac{\epsilon}{|A|}$. By the rule of best-response dynamics, the benefit of the chosen agent satisfies $U_i(\vec{S}')-U_i(\vec{S})=\max_{{i'}\in A,S'_{i'}\in \mathcal{S}_{i'}}(U_{i'}(S'_{i'},\vec{S}_{-i'})$ $-U_{i'}(\vec{S}))\geq U_{i'}(S'_{i'},\vec{S}_{-i'})-U_{i'}(\vec{S})>\frac{\epsilon}{|A|}$.

By the definition of potential game, we have $\Phi(\vec{S}')-\Phi(\vec{S})=U_i(\vec{S}')-U_i(\vec{S})>\frac{\epsilon}{|A|}$. Since $0\leq \Phi(\vec{S})\leq M$ for any $\vec{S}\in\mathcal{S}$, this happens no more than $M/\frac{\epsilon}{|A|}=O(\epsilon^{-1}|N||A|\log{|A|})$ times.
In conclusion, the best-response dynamics from any initial strategy profile finds an $\epsilon$-pure Nash equilibrium in $O(\epsilon^{-1}|N||A|\log{|A|})$ steps.
\end{proof}

In the rest of this section, we concentrate on the efficiency of PNE, i.e., the PoA in symmetric setting. Based on the results of literature \cite{V02}, it is not difficult to check that customer attraction games (even in asymmetric case) belong to the valid utility system introduced by \cite{V02}, which implies that the upper bound of PoA is 2 directly. On the other hand, we also use another technique to prove a more detailed upper bound, $\min\{2, $ $\max\{n/m,1\}\}$, on any instance of symmetric CAG. Combining with the examples of lower bound, we finally show that the tight PoA is 2 for the customer attraction games.
\begin{theorem}\label{lem:up_poa}
For any instance $\mathcal{I}=(N,A,\vec{\mathcal{S}},\vec{w},\vec{v})$ of symmetric CAG, given $|N|=n$ and $|A|=m$, the price of anarchy is $\min\{2, $ $\max\{n/m,1\}\}$. Generally, the price of anarchy for CAG is tight and equal to 2.
\end{theorem}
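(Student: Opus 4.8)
The plan is to establish the statement in three pieces: (i) $\text{PoA}(\mathcal{I})\le 2$ for every instance (even an asymmetric one), (ii) $\text{PoA}(\mathcal{I})\le\max\{n/m,1\}$ for every symmetric instance, which together give the claimed upper bound $\min\{2,\max\{n/m,1\}\}$, and (iii) a family of symmetric instances whose PoA tends to $2$, proving that the overall PoA of CAG equals $2$.

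For piece (i) I would check that every CAG instance is a \emph{valid utility system} of Vetta \cite{V02}: the social welfare $\text{SW}(\vec{S})$ is the (weighted) number of covered nodes, which is a monotone submodular function of the agents' topic choices; one always has $\sum_i U_i(\vec{S})=\text{SW}(\vec{S})$; and for each agent $i$, since all weights are positive integers, the only nodes of $S_i$ that disappear when $i$ leaves are those with $c(j,\vec{S})=w_i$, whence $U_i(\vec{S})=\sum_{j\in S_i}\tfrac{w_i}{c(j,\vec{S})}v_j\ge\sum_{j\in S_i:\,c(j,\vec{S})=w_i}v_j=\text{SW}(\vec{S})-\text{SW}(\vec{S}_{-i})$, i.e.\ each agent earns at least her marginal contribution. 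The standard argument then applies: at a PNE $\vec{S}^{NE}$, the no-deviation inequality for the strategy $S_i^{*}$ together with submodularity gives $U_i(\vec{S}^{NE})\ge U_i(S_i^{*},\vec{S}^{NE}_{-i})\ge\text{SW}(S_i^{*},\vec{S}^{NE}_{-i})-\text{SW}(\vec{S}^{NE}_{-i})\ge \text{SW}(S_i^{*}\cup\vec{S}^{NE})-\text{SW}(\vec{S}^{NE})$, where $S_i^{*}\cup\vec{S}^{NE}$ denotes the configuration $\vec{S}^{NE}$ augmented with the extra topic $S_i^{*}$; summing over $i$ and using submodularity once more yields $\text{SW}(\vec{S}^{NE})\ge\text{SW}(\vec{S}^{*})-\text{SW}(\vec{S}^{NE})$, i.e.\ $\text{PoA}(\mathcal{I})\le 2$.

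For piece (ii), in the symmetric setting write $k=\text{SW}(\vec{S}^{NE})$, the number of covered nodes. The key claim is $k\ge\min\{m,\text{SW}(\vec{S}^{*})\}$. Suppose instead $k<m$ and $k<\text{SW}(\vec{S}^{*})$; then the optimum covers some node $j^{*}$ left uncovered by $\vec{S}^{NE}$, and the topic attracting $j^{*}$ in $\vec{S}^{*}$ belongs to the common strategy space $\mathcal{S}$. Since $\sum_i U_i(\vec{S}^{NE})=k$, some agent $i_0$ has $U_{i_0}(\vec{S}^{NE})\le k/m<1$; deviating to that topic makes $i_0$ the unique agent attracting $j^{*}$, so she earns at least $v_{j^{*}}=1$, contradicting the PNE property. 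Given the claim, $\text{PoA}(\mathcal{I})=\text{SW}(\vec{S}^{*})/k$ equals $1$ when $\text{SW}(\vec{S}^{*})\le m$ and is at most $\text{SW}(\vec{S}^{*})/m\le n/m$ otherwise, so $\text{PoA}(\mathcal{I})\le\max\{n/m,1\}$; combined with piece (i) this is the stated bound.

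For piece (iii) I would build, for each $m$, a symmetric instance with $m$ agents, one ``big'' topic $T=\{1,\dots,m^2\}$ and $m$ ``spread'' topics $T_1,\dots,T_m$, where $T_i$ is the $i$-th length-$m$ block of $T$ together with a private block of $m-1$ fresh nodes; thus $n=2m^2-m$, and every agent may choose any of $T,T_1,\dots,T_m$. The profile in which all agents pick $T$ is a PNE: each agent earns $m^2/m=m$, while switching to any $T_j$ yields $m\cdot\tfrac1m+(m-1)=m$, no improvement. Its social welfare is $m^2$, whereas the profile assigning $T_i$ to agent $i$ covers all $2m^2-m$ nodes, so $\text{PoA}(\mathcal{I})\ge 2-\tfrac1m$. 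With piece (i) this gives $\sup_{\mathcal{I}}\text{PoA}(\mathcal{I})=2$, and the same instances, being symmetric, are valid asymmetric instances, so the PoA of CAG in general is also $2$. I expect piece (iii) to be the main obstacle: one must tune the sizes of the shared and fresh parts of each $T_i$ so finely that unilaterally moving from $T$ to $T_i$ is exactly break-even rather than strictly profitable — this is precisely what keeps the all-on-$T$ profile a genuine equilibrium while still leaving a factor-$2$ gap to the optimum.
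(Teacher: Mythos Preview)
Your proposal is correct and follows the same three-part skeleton as the paper (upper bound $2$, refined upper bound $\max\{n/m,1\}$, lower-bound examples). Two points of comparison are worth noting.

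For piece~(i), the paper itself remarks that the Vetta valid-utility-system route you take is available, but in its proof it gives a short self-contained ``doubling'' argument: augment $\vec{S}^*$ with $m$ extra copies of the agents playing $\vec{S}^{NE}$, observe $\text{SW}$ can only increase, then split the $2m$ agents into two groups of $m$ and bound each group's total utility by $\text{SW}(\vec{S}^{NE})$ using the Nash condition. Both arguments are equally elementary here, but the paper's avoids invoking the submodularity/marginal-contribution machinery.

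For piece~(iii), the paper's construction is considerably simpler and buys a little more: take $\mathcal{S}=\{\{q_1,\dots,q_m\},\{q_{m+1}\},\dots,\{q_n\}\}$. All agents on the big topic is a PNE with welfare $m$ (deviating to a singleton gives exactly $1$), while the optimum covers $\min\{n,2m-1\}$ nodes. For each $m<n<2m$ this yields $\text{PoA}=n/m$ exactly, so the bound $\min\{2,\max\{n/m,1\}\}$ is tight pointwise in $(1,2)$, not just in the limit; and with $n=2m-1$ it gives $2-1/m\to 2$. Your block-plus-fresh-nodes construction is correct but only delivers the asymptotic statement $\sup_{\mathcal{I}}\text{PoA}(\mathcal{I})=2$, whereas the theorem also asserts the per-instance bound is tight; the singleton construction does that double duty for free. (Incidentally, piece~(iii) is not the obstacle you anticipate: the break-even tuning is automatic once the big topic has exactly $m$ nodes.)
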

\begin{proof}
    We first show the upper bound of 2. Denote the optimal profile and an NE profile by $\vec{S}^*$ and $\vec{S}^{NE}$, respectively. Then, we have
        \begin{align}
            \text{SW}(\vec{S}^*) & = \sum_{i=1}^{m} U_i(\vec{S}^*) 
                 \leq \sum_{i=1}^{2m} U_i\big((\vec{S}^*,\vec{S}^{NE})\big) \label{ieq:add NE to opt}\\
                & \leq \sum_{i=1}^m U_i(\vec{S}_i^*,\vec{S}^{NE}_{-i}) + \sum_{i=1}^m U_i(\vec{S}^{NE}) \label{ieq: split 2K}\\
                & \leq \sum_{i=1}^m U_i(\vec{S}^{NE}) + \sum_{i=1}^m U_i(\vec{S}^{NE}) = 2\text{SW}(\vec{S}^{NE}). \nonumber
        \end{align}
    The inequality (\ref{ieq:add NE to opt}) holds because we add $m$ more agents and let them choose the NE profile, which will attract more nodes. The inequality (\ref{ieq: split 2K}) is based on that we split $2m$ agents into two groups and calculate their utility in each group with only $m$ agents, which does not make utility decrease. The last inequality is due to the property of NE. 
    
    
    Next we show the upper bound of $\max\{n/m,1\}$. 
    When $n/m \leq 1$, since the number of agents is greater than the number of nodes, it is easy to check that all nodes are attracted in the optimal and NE profile, which means that $\text{PoA}(\mathcal{I})= 1$.  
    When $n/m > 1$, w.l.o.g, suppose $\text{PoA}(\mathcal{I}) > 1$. There must exist a PNE $\vec{S}^{NE}$, such that at least one node $j\in N$ is not attracted, i.e., $c(j,\vec{S}^{NE})=0$. Take any $S'_i\in\mathcal{S}$ such that $j\in S'_i$. Since $\vec{S}^{NE}$ is a Nash equilibrium, for any agent $i\in A$, we have $U_i(\vec{S}^{NE})\geq U_i(S'_i,\vec{S}^{NE}_{-i})\geq 1$. Therefore, we have $\text{SW}(\vec{S}^{NE})=\sum_{i\in A}U_i(\vec{S}^{NE})\geq m$.
    On the other hand, we have $\text{SW}(\vec{S}^*)\leq |N|=n$. Thus, we know that either $\text{PoA}(\mathcal{I})=1$ or $\text{PoA}(\mathcal{I})\leq n/m$, so $\text{PoA}(\mathcal{I})\leq\max\{1,n/m\}$.

We have shown that when $n/m<1$, the PoA is equal to $1$. To prove the tightness of the above bound of PoA, we still provide two examples for constructing the lower bounds of $2$ and $n/m$.
\begin{lemma}\label{lem:lb_poa}
    For the symmetric CAG, we can find two instances whose PoA are exactly $2$ and $n/m$, respectively. 
\end{lemma}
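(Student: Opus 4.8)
The plan is to exhibit two explicit families of symmetric CAG instances and compute their price of anarchy directly, isolating the single parameter that controls the ratio. In both constructions the worst equilibrium is the profile in which all $m$ agents pile onto one ``popular'' strategy, and the decisive feature is that every other available strategy is a single fresh node, so that a unilateral deviation yields utility exactly equal to the current one. This makes the crowded profile a (weak) pure Nash equilibrium, while the optimum can still spread the agents out. First I would fix the construction, and then verify, in order: (i) the crowded profile is a PNE; (ii) it is the \emph{worst} PNE; (iii) the optimum covers every node.

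For the $n/m$ example, take $m$ agents, nodes $N=\{q_1,\dots,q_m\}\cup\{p_1,\dots,p_{n-m}\}$ with $m\le n\le 2m-1$, and common strategy space $\mathcal{S}=\{s_0\}\cup\{\{p_t\}:t=1,\dots,n-m\}$ where $s_0=\{q_1,\dots,q_m\}$. In the profile where every agent chooses $s_0$, each $q_j$ has load $m$, so every agent's utility is $m\cdot\frac1m=1$ and $\text{SW}=m$; a deviation to any singleton $\{p_t\}$ gives utility $1$, so no agent strictly gains and the profile is a PNE. For the worst-PNE claim I would reuse the argument from the proof of \Cref{lem:up_poa}: in any PNE with an uncovered node every agent has utility at least $1$ (it could grab that node at load $1$), hence $\text{SW}\ge m$, while a fully covering PNE has $\text{SW}=n\ge m$; thus the crowded profile attains the minimum $\text{SW}=m$. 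Finally the optimum assigns one agent to $s_0$ and the remaining $m-1\ge n-m$ agents to distinct singletons, covering all $n$ nodes for $\text{SW}=n$. Hence $\text{PoA}(\mathcal{I})=n/m$ exactly.

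For the factor-$2$ claim I would push the same family to its extreme by taking $n=2m-1$, the largest value still admitting full coverage by $m$ agents. The computation above then gives $\text{SW}(\vec{S}^{NE})=m$ and $\text{SW}(\vec{S}^{*})=2m-1$, so $\text{PoA}=\frac{2m-1}{m}=2-\frac1m$. Letting $m\to\infty$ drives this ratio to $2$, and together with the upper bound $\text{PoA}\le 2$ from \Cref{lem:up_poa} this certifies that the tight price of anarchy of the customer attraction game equals $2$.

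The main obstacle is the gap between \emph{approaching} $2$ and \emph{attaining} $2$, and I would confront it explicitly rather than paper over it. Writing $D$ for the set of nodes covered by an equilibrium $\vec{S}^{NE}$, a short equality analysis of the upper-bound chain in \Cref{lem:up_poa} shows that $\text{SW}(\vec{S}^{*})=2\,\text{SW}(\vec{S}^{NE})$ forces two incompatible conditions at once: every equilibrium-covered node must also be covered by the optimum (otherwise some equilibrium welfare is strictly ``lost''), yet each optimal strategy must be disjoint from $D$ (otherwise the tight use of $U_i(S_i^{*},\vec{S}^{NE}_{-i})\ge |S_i^{*}\setminus D|$ fails, since every node of $S_i^{*}\cap D$ contributes a strictly positive $\frac{1}{c(j,\vec{S}^{NE}_{-i})+1}$). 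These cannot both hold unless $D=\emptyset$. Consequently no single finite instance has $\text{PoA}$ equal to $2$: the value $2$ is the supremum over the family $\{\mathcal{I}_m\}_{m\ge 2}$, which is precisely the sense in which the tight $\text{PoA}$ is $2$. I would therefore present the factor-$2$ result as this supremum statement, witnessed by the explicit $2-\frac1m$ family, while the $n/m$ result is an exact per-instance equality.
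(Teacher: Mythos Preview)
Your construction is essentially identical to the paper's: the same ``one popular strategy of size $m$ plus $n-m$ singleton strategies'' family, the same all-on-$s_0$ profile as the worst PNE with welfare $m$, and the same spread-out optimum with welfare $n$ (for $n\le 2m-1$) or $2m-1$ (for $n\ge 2m$), yielding $\text{PoA}=n/m$ in the first regime and $(2m-1)/m$ in the second. The paper does not bother to argue that the crowded profile is the \emph{worst} PNE---it simply reads the equality off against the already-proved upper bound $\min\{2,\max\{n/m,1\}\}$---so your step (ii) is redundant once you invoke \Cref{lem:up_poa}, but harmless.

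Your additional paragraph showing that no single finite instance attains $\text{PoA}=2$, and that $2$ is only the supremum over the family $\{\mathcal{I}_m\}$, is a genuine refinement: the paper's own proof of this lemma also only produces $(2m-1)/m\to 2$, so the phrase ``exactly $2$'' in the lemma statement is loose, and your reading of it as a supremum is the correct one.
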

\begin{proof}
For any $n$ and $m$ such that $n>m$, we construct the following instance with $n$ nodes and $m$ agents. Define the node set and agent set as $N=\{q_1,q_2,\cdots,q_n\}$ and $A=\{1,\cdots,m\}$, respectively. Let $\mathcal{S}=\{\{q_1,\cdots,q_m\},\{q_{m+1}\},\{q_{m+2}\},\cdots,$ $\{q_{n}\}\}$ and $\mathcal{S}_i=\mathcal{S}$ for all $i\in A$. 

When $n\leq 2m-1$, since $|\mathcal{S}|=n-m+1\leq m$, all sets are selected in the optimal profile,,and we have $\text{SW}(\vec{S^*})=n$. When $n\geq 2m$, the social welfare is maximized by letting one agent select $\{q_1,\cdots,q_m\}$ and each of the other agents select a different node, respectively. We have $\text{SW}(\vec{S^*})=m+m-1=2m-1$.

Let $\vec{S}^{NE}$ be the strategy profile where $S^{NE}_i=\{q_1,\cdots,q_m\}$ for all $i\in A$. It is obvious that $\vec{S}^{NE}$ is a PNE and $\text{SW}(\vec{S}^{NE})=m$. Therefore, the PoA is $n/m$ when $n<2m$ and $(2m-1)/m$ when $n\geq 2m$.
\end{proof}
In summary, the bound of PoA, $\min\{2,\max\{n/m,1\}\}$, is tight. In general, for the symmetric CAG, the PoA is $2$ over all instances.  
\end{proof}
In reality, some instances assuring $n/m<2$ can achieve a PoA better than 2. When $n/m\geq 2$, the worst PNE of above example leads to one half loss of social welfare, which reflects the competition phenomena where agents pursue the hot topics in the red and blue ocean markets, introduced in \Cref{Sec:Intro}. 

\section{The Asymmetric Static Game}
\label{Sec:asymmetric}

In this section, we begin to investigate the asymmetric CAG. Compared to some elegant results in symmetric CAG (e.g., the existence of PNE is guaranteed under the symmetric CAG, as well as any $(\vec{\mathcal{S}},\vec{v})$-asymmetric CAG where the weights of agents are restricted to be symmetric, yet), the general asymmetric case becomes trickier, which is due to that the potential function is no longer applicable. In fact, a $(\vec{w})$-asymmetric CAG is generally not an exact potential game, as shown in the following lemma. 
\begin{lemma}\label{lem:not potential}
There is an instance $\mathcal{I}=(N,A,\vec{\mathcal{S}},\vec{w},\vec{v})$ of $(\vec{w})$-asymmetric CAG, such that $\mathcal{I}$ is not an exact potential game.
\end{lemma}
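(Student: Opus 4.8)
The plan is to exhibit an explicit small instance of a $(\vec{w})$-asymmetric CAG together with a cyclic sequence of profitable unilateral deviations, which immediately rules out the existence of any exact potential function (indeed even a generalized ordinal potential). Recall that in an exact potential game, if a single agent switches strategy, the change in her utility equals the change in the potential; chaining several such deviations around a cycle forces the total utility change to telescope to zero. So it suffices to find a closed ``improvement cycle'' $\vec{S}^{(0)} \to \vec{S}^{(1)} \to \cdots \to \vec{S}^{(k)} = \vec{S}^{(0)}$, where at each step exactly one agent deviates and her utility strictly increases, while the sum of the utility increments around the cycle is nonzero (for a strict improvement cycle it is automatically positive, which already contradicts telescoping to zero).

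Concretely, I would try the smallest nontrivial configuration: two agents with distinct weights, say $w_1 = 1$ and $w_2 = 2$ (or whatever small integers make the arithmetic work), a handful of topics, and a handful of nodes with unit or small integer values. The key tension to exploit is that the proportional allocation rule is asymmetric in the weights: when the light agent and the heavy agent share a node of value $v$, the light agent gets $\frac{v}{3}$ and the heavy gets $\frac{2v}{3}$, whereas alone each gets the full $v$. By tuning which topics cover which nodes and the node values, one can arrange that agent $1$ wants to move onto the topic agent $2$ currently occupies (because the residual share still beats her current topic), which then makes agent $2$ want to move away, which then makes agent $1$ want to follow again, and so on, closing up into a $4$-cycle (two moves by each agent). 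I would present the instance as a table of topic attraction ranges versus nodes, list the four profiles in the cycle explicitly, and for each transition verify the single deviating agent's utility strictly increases — these are just a few fraction comparisons like $\frac{2}{3} < 1$ and similar.

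Having the cycle, the conclusion is one line: summing the (strictly positive) utility differences around the cycle gives a positive number, but if $\Phi$ were an exact potential it would equal $\sum (\Phi(\vec{S}^{(t)}) - \Phi(\vec{S}^{(t-1)})) = 0$, a contradiction; hence no exact potential function exists, so $\mathcal{I}$ is not an exact potential game. (In fact the same argument shows it is not even a generalized ordinal potential game, which one could remark in passing, though the lemma only asks for the weaker statement.)

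The main obstacle is purely in the search/construction: finding the right combination of weights, attraction ranges, and node values so that all four deviations are simultaneously profitable. This is a small finite search, but it requires care because the proportional shares must line up so that each agent's incentive to move exists in the relevant profile and vanishes (or reverses) in the others — it is easy to write down an instance where the ``cycle'' actually has a profile that is a PNE, killing the argument. I would anchor the search on the intuition that the heavy agent is always ``chased'' by the light agent onto whatever valuable node the heavy agent sits on, and the heavy agent keeps fleeing to a slightly less valuable but uncontested node; getting the value gaps narrow enough that the flee-and-chase both pay off, yet wide enough that the moves are strict, is the delicate part. Everything after the instance is produced is routine arithmetic.
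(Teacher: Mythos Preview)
Your telescoping argument is exactly the standard cycle test for exact potential games, and that is indeed what the paper uses. However, your plan has a genuine obstacle: you insist that each step of the cycle be a \emph{strict improvement}, and you propose to do this with \emph{two} agents. That combination cannot work. As the paper itself shows (Theorem~\ref{thm:exist asym two agent}), every two-agent asymmetric CAG admits a weighted potential $H$ satisfying $H(\vec{S}')-H(\vec{S})=w_i\big(U_i(\vec{S}')-U_i(\vec{S})\big)$ whenever agent $i$ deviates. Since $w_i>0$, every strictly improving deviation strictly increases $H$, so no closed strict-improvement cycle can exist with two agents. Your ``finite search'' for such a cycle will therefore never terminate successfully, and your parenthetical remark that the instance would fail to be even a generalized ordinal potential game is false for any two-agent instance.

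The fix is to drop the profitability requirement: the cycle test only needs the \emph{sum} of the deviating agents' utility changes around a closed loop of unilateral deviations to be nonzero, with no constraint on individual signs. The paper's proof exploits this and gets away with a trivially small instance: one node $N=\{q_1\}$ with $v_{q_1}=1$, two agents with weights $(1,2)$, common strategy space $\{N,\emptyset\}$. Computing $\Phi(N,N)-\Phi(\emptyset,\emptyset)$ along the two paths $(\emptyset,\emptyset)\to(N,\emptyset)\to(N,N)$ and $(\emptyset,\emptyset)\to(\emptyset,N)\to(N,N)$ gives $1+\tfrac{2}{3}=\tfrac{5}{3}$ versus $1+\tfrac{1}{3}=\tfrac{4}{3}$, a contradiction. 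No chase-and-flee dynamics, no tuning of node values; once you stop demanding improvements, the construction becomes a two-line computation rather than a delicate search.
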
 

\begin{proof}
Consider the following asymmetric instance $\mathcal{I}$: there is one node $N=\{q_1\}$ with the value $v_{q_1}=1$, and two agents $A=\{1,2\}$ with the weights $\vec{w}=(1,2)$ and the same strategy space $\mathcal{S}_1=\mathcal{S}_2=\{N,\emptyset\}$. 

We show this lemma by contradiction. Assume that $\mathcal{I}$ is an exact potential game with respect to a potential function $\Phi(S_1,S_2)$. By the definition of exact potential game, we have 
\begin{align*}
    &U_1(N,S_2)-U_1(\emptyset,S_2)=\Phi(N,S_2)-\Phi(\emptyset,S_2), ~~ \forall S_2\in\{N,\emptyset\}, \\
    \text{and}~~ &U_2(S_1,N)-U_2(S_1,\emptyset)=\Phi(S_1,N)-\Phi(S_1,\emptyset), ~~\forall S_1\in\{N,\emptyset\}. 
\end{align*}
On the one hand, we know that $U_1(N,\emptyset)-U_1(\emptyset,\emptyset)=1$ and $U_2(N,N)-U_2(N,\emptyset)=\frac2{3}$. Therefore, we obtain 
\begin{align*}
    &\Phi(N,N)-\Phi(\emptyset,\emptyset)=[\Phi(N,N)-\Phi(N,\emptyset)]+[\Phi(N,\emptyset)-\Phi(\emptyset,\emptyset)]\\
    = &[U_2(N,N)-U_2(N,\emptyset)]+[U_1(N,\emptyset)-U_1(\emptyset,\emptyset)]=\frac53.
\end{align*}
On the other hand, we have that $U_2(\emptyset,N)-U_2(\emptyset,\emptyset)=1$ and $U_1(N,N)-U_1(\emptyset,N)=\frac1{3}$. Similarly, we get 
\begin{align*}
&\Phi(N,N)-\Phi(\emptyset,\emptyset)=[\Phi(N,N)-\Phi(\emptyset,N)]+[\Phi(\emptyset,N)-\Phi(\emptyset,\emptyset)]\\
=&[U_1(N,N)-U_1(\emptyset,N)]+[U_2(\emptyset,N)-U_2(\emptyset,\emptyset)]=\frac43,
\end{align*}
which is a contradiction. Consequently, $\mathcal{I}$ is not an exact potential game.
\end{proof}

A natural question is, whether the PNE is still guaranteed to exist when the agents have asymmetric weights. Interestingly, we find that when there are only two agents, PNE still exists by Theorem \ref{thm:exist asym two agent}. 
\begin{theorem}\label{thm:exist asym two agent}
    For any instance $\mathcal{I}$ of asymmetric customer attraction game with two agents, a pure Nash equilibrium always exists.
\end{theorem}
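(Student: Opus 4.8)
The plan is to analyze the best-response structure directly, exploiting the fact that with only two agents the "congestion" on any node can take only two levels: a node $j\in S_1\cap S_2$ is shared (agent $i$ gets $\frac{w_i}{w_1+w_2}v_j$), and a node in exactly one $S_i$ is taken solely (agent $i$ gets $v_j$). So the utility of agent $i$ playing $S_i$ against a fixed $S_{-i}$ decomposes as $U_i(S_i,S_{-i}) = \sum_{j\in S_i\setminus S_{-i}} v_j + \frac{w_i}{w_1+w_2}\sum_{j\in S_i\cap S_{-i}} v_j$. Writing $V(T)=\sum_{j\in T}v_j$ and $\lambda_i=\frac{w_i}{w_1+w_2}$, a best response of agent $i$ to $S_{-i}$ is a topic $S_i\in\mathcal{S}_i$ maximizing $V(S_i) - (1-\lambda_i)\,V(S_i\cap S_{-i})$, i.e.\ the agent wants large total value but is penalized (by a factor $1-\lambda_i$ per unit value) for overlap with the opponent.

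First I would set up the natural better-response graph on the finite set $\vec{\mathcal{S}}_1\times\vec{\mathcal{S}}_2$ and argue that a PNE is exactly a sink of the best-response dynamics, so it suffices to rule out best-response \emph{cycles}. Any such cycle, by finiteness and the two-player structure, reduces to a 4-cycle $(S_1,S_2)\to(S_1',S_2)\to(S_1',S_2')\to(S_1,S_2')\to(S_1,S_2)$ with $S_1\ne S_1'$ in $\mathcal{S}_1$ and $S_2\ne S_2'$ in $\mathcal{S}_2$, where each arrow is a strict improvement for the unique deviating agent. I would then write out the four strict inequalities. For agent $1$'s two moves: $V(S_1') - (1-\lambda_1)V(S_1'\cap S_2) > V(S_1) - (1-\lambda_1)V(S_1\cap S_2)$ and $V(S_1) - (1-\lambda_1)V(S_1\cap S_2') > V(S_1') - (1-\lambda_1)V(S_1'\cap S_2')$; symmetrically for agent $2$ with $\lambda_2$ in place of $\lambda_1$ and the roles of the coordinates swapped. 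Adding agent $1$'s two inequalities kills the $V(S_1),V(S_1')$ terms and leaves $(1-\lambda_1)\big[V(S_1\cap S_2)+V(S_1'\cap S_2') - V(S_1'\cap S_2) - V(S_1\cap S_2')\big] > 0$, and since $1-\lambda_1>0$ this says the "discrete mixed second difference" $D := V(S_1\cap S_2)+V(S_1'\cap S_2')-V(S_1'\cap S_2)-V(S_1\cap S_2')$ is strictly positive. Doing the same with agent $2$'s inequalities gives $(1-\lambda_2)\cdot D < 0$ after accounting for the direction of agent $2$'s moves around the cycle, hence $D<0$ — a contradiction. (One must be careful with signs: the key point is that the \emph{same} quantity $D$, built only from pairwise intersections of the four relevant topics, controls both agents because each agent's penalty term depends on $S_i\cap S_{-i}$ and is linear in $V$; the opposite orientations of the two agents around the 4-cycle force the two strict sign conclusions to be opposite.)

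I would expect the main obstacle to be precisely the bookkeeping in the step above: verifying that a generic best-response cycle can be shortened to a 4-cycle of the stated form (handling degenerate cases where a best response is not unique, or where the cycle visits more than two strategies for one agent — one can pick any cycle and restrict attention to a minimal one, or note that between two consecutive visits of agent $1$ to the same state of coordinate $2$ the analysis localizes), and then getting every sign right so that the two additive combinations yield $(1-\lambda_1)D>0$ and $(1-\lambda_2)D<0$. A cleaner route that avoids the cycle-shortening subtlety is to exhibit an explicit (generalized-ordinal) potential: define $\Psi(S_1,S_2) = V(S_1)+V(S_2) - V(S_1\cap S_2)\cdot\big(\text{something symmetric}\big)$ — the natural guess $\Psi(S_1,S_2)=V(S_1\cup S_2) = V(S_1)+V(S_2)-V(S_1\cap S_2)$ is \emph{not} an exact potential (consistent with Lemma \ref{lem:not potential}), but one checks whether the sign of $\Delta U_i$ under a unilateral deviation always matches the sign of $\Delta\Psi$ for this or a mildly reweighted $\Psi$; if so, finiteness of $\vec{\mathcal{S}}$ gives the PNE immediately. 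I would try the generalized-ordinal-potential approach first and fall back on the explicit 4-cycle elimination if the sign of $\Delta U_i$ and $\Delta\Psi$ fail to agree in all regimes of $\lambda_1,\lambda_2$.
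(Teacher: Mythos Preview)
Your 4-cycle computation is correct and illuminating, but the step ``any such cycle \ldots\ reduces to a 4-cycle'' is a genuine gap: in general two-player games this is not automatic, and you give no argument for it here. Fortunately you do not need it. Run the same summation over an arbitrary alternating improvement cycle of length $2k$: after dividing each of agent~1's strict inequalities by $1-\lambda_1>0$ and each of agent~2's by $1-\lambda_2>0$, the $V(S_i)$ terms telescope to $0$ by periodicity, and the remaining intersection terms telescope as well, yielding $V(S_1^0\cap S_2^0)-V(S_1^k\cap S_2^k)=0$, contradicting the strict positivity of the sum. This is precisely the weighted-potential argument, and it is what the paper does.

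For your second route, the ansatz $\Psi=V(S_1)+V(S_2)-c\,V(S_1\cap S_2)$ with a single symmetric constant $c$ cannot serve even as an ordinal potential when $w_1\neq w_2$: matching the sign of $\Delta U_1$ forces $c=1-\lambda_1$, while matching $\Delta U_2$ forces $c=1-\lambda_2$. The right object is the \emph{weighted} potential the paper writes down,
\[
H(S_1,S_2)\;=\;w_1V(S_1)+w_2V(S_2)-\frac{w_1w_2}{w_1+w_2}\,V(S_1\cap S_2),
\]
which satisfies $H(\vec{S}')-H(\vec{S})=w_i\bigl(U_i(\vec{S}')-U_i(\vec{S})\bigr)$ for a unilateral deviation of agent~$i$; since $w_i>0$, any maximizer of $H$ over the finite profile set is a PNE. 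Note that this $H$ is exactly what your own 4-cycle algebra produces once you normalize agent~$i$'s inequality by $(1-\lambda_i)^{-1}=\frac{w_1+w_2}{w_{-i}}$ and clear denominators, so you were one line away from the paper's proof.
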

\begin{proof}
    To show the existence of PNE, we construct a new potential function to help us. Define the weights of agent 1 and 2 as $w_1$ and $w_2$, respectively. Given a strategy profile $\Vec{S}=(S_1,S_2)$, for any node $j$, define $h_j(\Vec{S})$ as
    \begin{equation*}
    h_j(\Vec{S})=\begin{cases}
        v_j(w_1+w_2-\frac{w_1w_2}{w_1+w_2}), & \text{if} ~ j\in S_1 ~ \text{and} ~ j\in S_2 \\
        v_jw_1, & \text{if} ~ j\in S_1 ~ \text{and} ~ j\notin S_2 \\
        v_jw_2, & \text{if} ~ j\notin S_1 ~ \text{and} ~ j\in S_2 \\
        0, & \text{Otherwise.}
        \end{cases}
\end{equation*}
Define the potential function $H(\Vec{S})=\sum_j h_j(\Vec{S})$. W.l.o.g., assume that agent 1 changes strategy to $S'_1$ and denote $\Vec{S}'= (S'_1, S_2)$ by the current strategy profile. We calculate the difference from $h_j(\Vec{S}')$ and $h_j(\Vec{S})$:
\begin{equation*}
    h_j(\Vec{S}')-h_j(\Vec{S})=\begin{cases}
        v_jw_1, & \text{if} ~ j\in S'_1 \backslash (S_1 \cup S_2)  \\
        -v_jw_1, & \text{if} ~ j\in S_1 \backslash (S'_1 \cup S_2) \\
        v_j(w_1-\frac{w_1w_2}{w_1+w_2}), & \text{if} ~ j\in (S'_1\cap S_2) \backslash S_1 \\
        -v_j(w_1-\frac{w_1w_2}{w_1+w_2}), & \text{if} ~ j\in (S_1\cap S_2) \backslash S'_1 \\
        0, & \text{otherwise.}
        \end{cases}
\end{equation*}
It is not hard to check that $H(\Vec{S}')-H(\Vec{S})= w_1(U_1(\Vec{S}')-U_1(\Vec{S}))$ and the similar result also holds for agent 2. Since $w_1,w_2\geq 0$, it implies that when the agent deviates for better utility, the potential function will be improved. Due to the finite space of strategy profiles, PNE always exists.
\end{proof}

However, once that the number of agents increases to three, we can construct a counterexample to demonstrate that the nonexistence of PNE can be caused by a single weighted agent, i.e., the PNE may not exist, even when all but one of the agents are identically weighted $1$, and the strategy spaces and node values are symmetric. 
To build such example, we first construct an instance of $(\vec{\mathcal{S}},\vec{w},\vec{v})$-asymmetric CAG (the fully asymmetric model) in \Cref{exa:twoagent-noPNE}, which has no pure Nash equilibrium. Then we convert it to an instance of $(\vec{w})$-asymmetric CAG, preserving the nonexistence of PNE.

Intuitively, to construct a fully asymmetric model without PNE, we start with two active agents with different weights to form a two-agent $2\times 2$ matrix game. Then, by adding some dummy agents, we can adjust the payoff matrix of these two agents so that the beneficial deviations form a cycle (This action is feasible because the current game is no longer a potential game), which avoids the existence of a PNE. 
\begin{example}[Nonexistence of PNE]
\label{exa:twoagent-noPNE}

Under the $(\vec{\mathcal{S}},\vec{w},\vec{v})$-asymmetric CAG, we construct an instance $\mathcal{I}=(N,A,$ $\vec{\mathcal{S}},\vec{w},\vec{v})$. There are $4$ nodes labeled as $N=\{q_1,q_2,q_3,q_4\}$. Let  $v_{q_1}=2,v_{q_2}=1,$ $v_{q_3}=1$ and $v_{q_4}=2$. There are $3$ agents, labeled as $A=\{1,2,3\}$, including two active agents 1,2 and one dummy agent 3. The weights of agents are $\vec{w}=(4,1,1)$. Both active agents $1$ and $2$ have two strategies: $s_{1,1}=\{q_1,q_2\},s_{1,2}=\{q_3,q_4\},s_{2,1}=\{q_1,q_3\},s_{2,2}=\{q_2,q_4\}$, i.e., the strategy spaces of the active agents are $\mathcal{S}_1=\{s_{1,1},s_{1,2}\}$ and $\mathcal{S}_2=\{s_{2,1},s_{2,2}\}$. Dummy agent 3 has one strategy $\mathcal{S}_3=\{\{q_{1},q_4\}\}$. Since agent 3 is dummy, the game can be viewed as a two-player game between agents $1$ and $2$, each of whom has two strategies. We calculate the utilities of agents $1$ and $2$, presented as a payoff matrix in \Cref{tab:nonexistence-PNE-payoff}. The weights and values are properly chosen so that agent $1$ has incentive to deviate in states $(s_{1,1},s_{2,1})$ and $(s_{1,2},s_{2,2})$, while agent $2$ has incentive to deviate in states $(s_{1,1},s_{2,2})$ and $(s_{1,2},s_{2,1})$. Therefore, there is no PNE in this instance.
\begin{table}[htbp]
\centering
\begin{tabular}{|c|c|c|}
\hline
\diagbox{$S_1$}{$U_1,U_2$}{$S_2$}& $s_{2,1}$ & $s_{2,2}$\\\hline
$s_{1,1}$ &
(7/3, 4/3) & (2.4, 1.2) \\ \hline
$s_{1,2}$ &
(2.4, 1.2) & (7/3, 4/3) \\ \hline
\end{tabular}
\caption{The payoff matrix of the two-player game between the two active agents}
\label{tab:nonexistence-PNE-payoff}
\end{table}
\end{example}
Then, we convert the instance in Example \ref{exa:twoagent-noPNE} to an instance of $(\vec{w})$-asymmetric CAG, while preserving the nonexistence of PNE. Intuitively, to obtain symmetric strategy spaces, 
we can modify the original strategy space for each agent by creating a large group of new nodes and adding them to every strategy of this agent, and then join all resulting strategy spaces together to get a common strategy space. When the total value of each large group is designed properly, each agent in the constructed instance will play the role of one agent in the original instance, and thus the constructed instance will be equivalent to the original instance with respect to the existence of PNE. To obtain symmetric node values, for any node $j$ whose value is greater than one, we can simply split node $j$ into a group of nodes with unit value. The summarized results are in the following lemma. 


\begin{lemma}
\label{lem:convert-to-symmetric-sandv}
Given any instance $\mathcal{I}=(N,A,\vec{\mathcal{S}},\vec{w},\vec{v})$ of $(\vec{\mathcal{S}},\vec{w},\vec{v})$-asymmetric CAG, if there is one agent $i\in A$ such that $w_i\geq 1$ and $w_{i'}=1$ for all $i'\neq i$, then an instance $\mathcal{I}'=(N',A',\vec{\mathcal{S}}',\vec{w}',\vec{v}')$ of $(\vec{w})$-asymmetric CAG can be computed, such that $\mathcal{I}$ has a PNE if and only if $\mathcal{I}'$ has a PNE. The computation time is polynomial in $|N|$, $|A|$, $\sum_{i\in A}|\mathcal{S}_i|$, $\max_{i\in A}w_i$, and $\sum_{j\in N}v_j$. Moreover, the construction can preserve the weights of agents, in other words, $A'=A$ and $\vec{w}'=\vec{w}$.
\end{lemma}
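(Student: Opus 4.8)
The plan is to perform the two transformations announced in the paragraph preceding the lemma — first making the strategy spaces symmetric, then making the node values symmetric — and to verify that each step preserves the existence of a PNE while keeping the instance size polynomially bounded.

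\textbf{Step 1: symmetrizing the strategy spaces.} Starting from $\mathcal{I}=(N,A,\vec{\mathcal{S}},\vec{w},\vec{v})$, I would introduce, for each agent $i\in A$, a fresh group of nodes $G_i$ disjoint from $N$ and from one another. The new node set is $N'' = N \cup \bigcup_{i\in A} G_i$. For each agent $i$ I replace every strategy $S\in\mathcal{S}_i$ by $S\cup G_i$, and then declare the common strategy space of every agent to be $\mathcal{S}'' = \bigcup_{i\in A}\{\,S\cup G_i : S\in\mathcal{S}_i\,\}$. The role of $G_i$ is to make ``impersonating agent $i$'' strictly optimal: I would set the total value $\sum_{j\in G_i} v''_j$ of each group to be large enough that, regardless of what the other agents do, any agent playing a strategy of the form $S\cup G_i$ strictly prefers some strategy of that same form over any strategy $S'\cup G_{i'}$ with $i'\neq i$ — the gain from the huge private group $G_i$ dominates everything contributed by the original nodes. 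Concretely, because all original node values and all loads are bounded by $\mathrm{poly}(|N|,|A|,\sum_j v_j, \max_i w_i)$, choosing $\sum_{j\in G_i} v''_j$ to exceed (say) $2\sum_{j\in N} v_j$ suffices, and this can be realized with a polynomial number of unit-or-small-value nodes. The key claim is then a bijection between PNE of $\mathcal{I}$ and PNE of the new instance: in any PNE of the new instance the $m$ agents must occupy the $m$ distinct ``slots'' $G_1,\dots,G_m$ (if two agents shared a slot $G_i$, at least one could profitably jump to an unoccupied slot $G_{i'}$ by the value choice), and once agent assigned to slot $i$ is forced to play some $S\cup G_i$, its incentives among $\{S\cup G_i : S\in\mathcal{S}_i\}$ are exactly the incentives of agent $i$ in $\mathcal{I}$ among $\mathcal{S}_i$, since the $G_i$ nodes are attracted only by that one agent and contribute a fixed additive constant to its utility and nothing to anyone else's. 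This yields the ``$\mathcal{I}$ has a PNE iff the intermediate instance has a PNE'' equivalence, and crucially it does not touch the weights, so $A$ and $\vec w$ are preserved.

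\textbf{Step 2: symmetrizing the node values.} Given the intermediate $(\vec{\mathcal{S}},\vec w)$-asymmetric instance (now with a common strategy space but possibly non-unit node values), I would replace each node $j$ by $v_j$ parallel copies $j^{(1)},\dots,j^{(v_j)}$, each of unit value, and in every strategy that contained $j$ put all $v_j$ copies. Since $v_j\in\mathbb{Z}_{\ge 1}$, this is well-defined, and since a strategy either contains all copies of $j$ or none, the load on every copy of $j$ equals the old load on $j$ under the corresponding profile; hence each agent's utility is unchanged and the game is isomorphic. The blow-up in the number of nodes is $\sum_{j} v_j$, which is within the claimed polynomial budget, and again weights are untouched. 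Composing the two steps gives the instance $\mathcal{I}'$ of $(\vec w)$-asymmetric CAG with $A'=A$, $\vec w' = \vec w$, the stated PNE-equivalence, and the stated running time, since both constructions are explicit and each dimension of $\mathcal{I}'$ is a polynomial in $|N|$, $|A|$, $\sum_{i\in A}|\mathcal{S}_i|$, $\max_{i\in A} w_i$, and $\sum_{j\in N} v_j$.

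\textbf{Anticipated main obstacle.} The routine parts — the node-splitting of Step 2 and the size bookkeeping — are straightforward; the delicate part is the correctness of Step 1, specifically proving that \emph{every} PNE of the constructed instance has the agents in distinct slots and respecting the intended agent-to-slot identification. The hypothesis that at most one agent has weight $>1$ is what I expect to lean on here: with arbitrary weights, a heavy agent and a light agent could in principle both want the same slot in a way that no single-agent deviation fixes, so the slot-occupancy argument could fail; when all but one weight equal $1$, a careful case analysis on which slot the heavy agent occupies shows the deviation argument always goes through. Making that argument airtight — rather than merely making the slot values ``large enough'' heuristically — is the crux of the proof, and it is also where I would be most careful to state the exact inequality the group values must satisfy.
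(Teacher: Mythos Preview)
Your two-step plan matches the paper's, and Step~2 (node splitting) is correct. The gap is in Step~1: with \emph{equal} slot values you cannot force the heavy agent into its designated slot $G_1$, and without that the direction ``$\mathcal{I}'$ has a PNE $\Rightarrow$ $\mathcal{I}$ has a PNE'' fails. Once the slot values are large, every PNE of $\mathcal{I}'$ does have the $m$ agents in $m$ distinct slots, but nothing prevents the heavy agent from sitting in $G_j$ for some $j\neq 1$ while a light agent sits in $G_1$; no one can profitably swap because every swap means sharing a slot and losing a large chunk of value. Such a profile corresponds to a PNE of the \emph{permuted} game in which the weight-$T$ player uses $\mathcal{S}_j$ and a weight-$1$ player uses $\mathcal{S}_1$, and that permuted game can have a PNE when $\mathcal{I}$ does not. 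Concretely, take the paper's three-agent no-PNE instance and put the heavy agent in the dummy's slot: the two remaining active players are then both weight~$1$, so their residual game is an exact potential game and admits a PNE, which lifts to a PNE of your $\mathcal{I}'$ with the heavy agent in the wrong slot. Your equivalence therefore breaks on exactly the instance the lemma is meant to handle.

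The paper's remedy is to make the slot values \emph{asymmetric}: the heavy agent's slot gets value $(2T+1)M'$ while each light slot gets value $2M'$, where $T=w_1$ and $M'=(T+1)(M+1)$ with $M=\sum_j v_j$. These constants are tuned so that (i) the heavy agent strictly prefers owning its own slot to sharing any light slot, and (ii) any light agent strictly prefers owning a light slot to sharing the heavy slot with the weight-$T$ agent (since $\tfrac{1}{T+1}(2T+1)M'+M<2M'$). Together these force the intended agent-to-slot identification in \emph{every} PNE of $\mathcal{I}'$, which is precisely the obstacle you flagged but did not resolve. As a secondary quantitative point, your threshold $2\sum_j v_j$ is also too small for the forward direction once $T>1$: the heavy agent loses only a $\tfrac{1}{T+1}$-fraction of a slot's value by invading an occupied light slot, so you need the slot value to exceed roughly $(T+1)\sum_j v_j$ just to rule that deviation out.
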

\begin{proof}
Without loss of generality, we assume that the agents are relabeled as $A=\{1,\cdots,|A|\}$, such that $w_1\geq 1$ and $w_i=1$ for $i=2,\cdots,|A|$. 

Note that for any instance with different values of nodes, we can always split each node into a group of nodes with unit value: for any node $j\in N$, we can replace it by $v_j$ nodes with value $1$, and replace the appearance of node $j$ in any agent's strategy by this group of new nodes. It can be seen easily that this action does not change the utility of any agent.  Therefore, we only need to construct an instance $\mathcal{I}'=(N',A',\vec{\mathcal{S}}',\vec{w}',\vec{v}')$ under $(\vec{w},\vec{v})$-asymmetric CAG, and then split all the nodes in $N'$ to get an instance of $(\vec{w})$-asymmetric CAG.

Let $M=\max_{i\in A}\max_{s\in\mathcal{S}_i}~\sum_{j\in s}v_j$ as an upper bound on the utility gained by any strategy in $\mathcal{I}$, then we have $M\leq \sum_{j\in N}v_j$. Let $T=w_1\in\mathbb{Z}_{\geq 1}$, and $M'=(T+1)(M+1)$. We construct $\mathcal{I}'=(N',A',\vec{\mathcal{S}}',\vec{w}',\vec{v}')$ as follows:
\begin{itemize}
    \item Keep the agent set unchanged, i.e., $A'=A$.
    \item For each agent $i\in A$, we create a new node $r_i$. Define the new node set as $N'=N\cup\{r_i:i\in A\}$.
    \item Define a new common strategy space as $$\bar{\mathcal{S}}=\bigcup_{i\in A}\bar{\mathcal{S}}_i,$$
    where $\bar{\mathcal{S}}_i=\{s\cup\{r_i\}:s\in\mathcal{S}_i\}$ for any $i\in A$. Define the common strategy space $\vec{\mathcal{S}}'=\times_{i\in A'}\bar{\mathcal{S}}$.
    \item Keep the weights of agents unchanged, i.e., $\vec{w}'=\vec{w}$.
    \item For each node $j\in N$, keep its value unchanged, that is, $v'_j=v_j$; For node $r_1$, set its value as $v'_{r_1}=(2T+1)M'$; For any node $r_i$, $i=2,\cdots,|A|$, set its value as $v'_{r_i}=2M'$.
\end{itemize}
It is obvious that the total computation time for constructing $\mathcal{I'}$ and splitting the nodes in $\mathcal{N'}$ is polynomial in $|N|$, $|A|$, $\sum_{i\in A}|\mathcal{S}_i|$, $\max_{i\in A}w_i$, and $\sum_{j\in N}v_j$.

By the construction of $\bar{\mathcal{S}}$, we know that for any strategy profile $\vec{S}'\in\vec{\mathcal{S}}'$ and any agent $i'\in A'$, there is exactly one $i\in A$, such that $S'_{i'}\in\bar{\mathcal{S}}_i$. We denote such correspondence as $i=\mu(i',\vec{S}')$. It follows that $i'$ attracts the node $r_{i}$ under $\vec{S}'$ if and only if $i=\mu(i',\vec{S}')$. Intuitively, each $i'$ in $\mathcal{I}'$ plays the role of agent $\mu(i',\vec{S}')$ in $\mathcal{I}$.

We say a strategy profile $\vec{S}'\in\vec{\mathcal{S}}'$ is perfectly-matched, if for all $i\in A$, there is exactly one $i'\in A'$ such that $\mu(i',\vec{S}')=i$, and $w_{i'}=w_i$. Observe that any perfectly-matched strategy profile $\vec{S}'\in\vec{\mathcal{S}}'$ induces a strategy profile $\vec{S}\in\vec{\mathcal{S}}$ in $\mathcal{I}$, such that for any $i'\in A'$ and $i=\mu(i',\vec{S}')$, it holds that $$S'_{i'}=S_{i}\cup\{r_i\}.$$
For any $j\in N$, it can be seen that $c(j,\vec{S}')=c(j,\vec{S})$. Therefore, for any agent $i'\in A'$, we have
$$U_{i'}^{\mathcal{I}'}(\vec{S}')=v'_{r_{\mu(i',\vec{S}')}}+\sum_{j\in N}v_{j}\frac{w_{i'}}{c(j,\vec{S}')}=v'_{r_{\mu(i',\vec{S}')}}+U_{\mu(i',\vec{S}')}^{\mathcal{I}}(\vec{S}).$$

Now we show that $\mathcal{I}'$ has a PNE if and only if $\mathcal{I}$ has a PNE.

\noindent\textbf{Sufficiency:} 

Suppose $\vec{S}^*\in\vec{\mathcal{S}}$ is a PNE of $\mathcal{I}$. Construct $\vec{S}^{*\prime}\in \vec{\mathcal{S}}'$ such that for each $i\in A'$, ${S}^{*\prime}_i=S^*_i\cup\{r_i\}$. We prove that $\vec{S}^{*\prime}$ is also a PNE of $\mathcal{I}'$.

Note that since $\mu(i',\vec{S}^{*\prime})=i'$ for all $i'\in A'$, we know that $\vec{S}^{*\prime}$ is perfectly-matched and $\vec{S}^*$ is the strategy profile in $\mathcal{I}$ induced by $\vec{S}^{*\prime}$. Therefore, we have that for any $i'\in A'$,
$$U_{i'}^{\mathcal{I}'}(\vec{S}^{*\prime})=v'_{r_{i'}}+U_{i'}^{\mathcal{I}}(\vec{S}^*).$$

For any agent $i'\in A'$ and any $s\in\bar{\mathcal{S}}$, we show that agent $i'$ cannot improve its utility by deviating to $s$. Let $\vec{S}'_{s}=(s,\vec{S}^{*\prime}_{-i'})$ denote the strategy profile replacing $i'$'s strategy by $s$ in $\vec{S}^{*\prime}$. If $r_{i'}\in s$, $\vec{S}'_{s}$ is still perfectly-matched and the strategy profile in $\mathcal{I}$ induced by $\vec{S}'_{s}$ is $\vec{S}_s=(s\setminus\{r_{i'}\},\vec{S}^*_{-i})$. Since $\vec{S}^*$ is a PNE, we have $U_{i'}^{\mathcal{I}}(\vec{S}^*)\geq U_{i'}^{\mathcal{I}}(\vec{S}_s)$ and therefore $U_{i'}^{\mathcal{I}'}(\vec{S}^{*\prime})\geq U_{i'}^{\mathcal{I}'}(\vec{S}'_{s})$. 

If $r_{i'}\notin s$, take $i\in A$ such that $r_i\in s$. Then, there are three cases:

(a) If $i'=1$ and $i\neq 1$, we have
$$U_{i'}^{\mathcal{I}'}(\vec{S}'_s)\leq 2M'+M<(2T+1)M'\leq U_{i'}^{\mathcal{I}'}(\vec{S}^{*\prime});$$

(b) If $i'\neq 1$ and $i=1$, we have $$U_{i'}^{\mathcal{I}'}(\vec{S}'_s)\leq \frac{1}{T+1}(2T+1)M'+M< (\frac{2T+1}{T+1}+\frac{1}{T+1})M'=2M'\leq U_{i'}^{\mathcal{I}'}(\vec{S}^{*\prime});$$

(c) If $i'\neq 1$ and $i\neq 1$, we have $$U_{i'}^{\mathcal{I}'}(\vec{S}'_s)\leq \frac{1}{1+1}2M'+M< 2M'\leq U_{i'}^{\mathcal{I}'}(\vec{S}^{*\prime}).$$
In summary, $U_{i'}^{\mathcal{I}'}(\vec{S}'_s)\leq U_{i'}^{\mathcal{I}'}(\vec{S}^{*\prime})$ holds in all cases. Therefore, by definition, we obtain that $\vec{S}^{*\prime}$ is a PNE of $\mathcal{I}'$.

\noindent\textbf{Necessity:} 

Suppose $\vec{S}^{*\prime}$ is a PNE of $\mathcal{I}'$, we construct a PNE $\vec{S}^*$ of $\mathcal{I}$. Firstly, we show that $\vec{S}^{*\prime}$ must be perfectly-matched. Suppose for the sake of contradiction, that $\vec{S}^{*\prime}$ is not perfectly-matched. Then at least one of the three cases below holds:

(a) There exists $i'_1,i'_2\in A'$, $i'_1\neq i'_2$, such that $\mu(i'_1,\vec{S}^{*\prime})=\mu(i'_2,\vec{S}^{*\prime})=i_1$ for some $i_1\in A$, and $w'_{i'_1}=w'_{i'_2}=w_{i_1}=1$. In this case, there must exist some $i_2\in A$ such that $c(r_{i_2},\vec{S}^{*\prime})=0$. Since the current utility of $i'_1$ is $$U_{i'_1}^{\mathcal{I}'}(\vec{S}^{*\prime})\leq 2M'\frac{1}{1+1}+M<2M',$$
$i'_1$ can deviate to any strategy attracting $r_{i_2}$ to improve its utility to at least $2M'$. This contradicts with the assumption that $\vec{S}^{*\prime}$ is a PNE.

(b) $T\geq 2$, and $\mu(1,\vec{S}^{*\prime})\neq 1$. In this case, if agent $1$ deviates to an arbitrary strategy in $\bar{\mathcal{S}}_1$, she will get a utility of at least $$v'_{r_1}\frac{T}{c(r_1,\vec{S}^{*\prime})+T}=\frac{(2T+1)T}{c(r_1,\vec{S}^{*\prime})+T}M'.$$
Since $\vec{S}^{*\prime}$ is a PNE, we have $$\frac{(2T+1)T}{c(r_1,\vec{S}^{*\prime})+T}M'\leq U_{1}^{\mathcal{I}'}(\vec{S}^{*\prime})\leq 2M'+M.$$
This implies that $$c(r_1,\vec{S}^{*\prime})\geq \frac{(2T+1)TM'}{2M'+M}-T=(\frac{(2T+1)(T+1)}{2(T+1)+1}-1)T.$$ Since $T\geq 2$, with a little calculation, we have 
$c(r_1,\vec{S}^{*\prime})>T$. Therefore, it holds that  $|\{i'\in A':\mu(i',\vec{S}^{*\prime})=1\}|=c(r_1,\vec{S}^{*\prime})\geq T+1$. There must exist some $i_1\in A$ such that $c(r_{i_1},\vec{S}^{*\prime})=0$. Take an arbitrary $i'_1\in A'$ such that $\mu(i'_1,\vec{S}^{*\prime})=1$, we get $$U_{i'_1}^{\mathcal{I}'}(\vec{S}^{*\prime})\leq (2T+1)M'\frac1{T+1}+M<2M'.$$ 
Therefore, $i'_1$ can deviate to any strategy attracting $r_{i_1}$ to improve the utility to at least $2M'$, which is a contradiction.

(c) $T\geq 2$, and there exists $i'_1\in A'\setminus\{1\}$ such that $\mu(i'_1,\vec{S}^{*\prime})=1$. In this case, there must exist some $i_1\in A$ such that $c(r_{i_1},\vec{S}^{*\prime})=0$. Since the current utility of $i'_1$ is 
$$U_{i'_1}^{\mathcal{I}'}(\vec{S}^{*\prime})\leq (2T+1)M'\frac1{T+1}+M<2M',$$ 
$i'_1$ can deviate to any strategy attracting $r_{i_1}$ to improve its utility to at least $2M'$, which is a contradiction.

In conclusion, we know that $\vec{S}^{*\prime}$ is perfectly-matched. Let $\vec{S}^*$ be the strategy profile in $\mathcal{I}$ induced by $\vec{S}^{*\prime}$, we prove that $\vec{S}^*$ is a PNE by contradiction. Suppose that there is $i\in A$ and $s\in\mathcal{S}_i$ such that $U_{i}^{\mathcal{I}}(\vec{S}_s)> U_{i}^{\mathcal{I}}(\vec{S}^*)$, where $\vec{S}_s=(s,\vec{S}^*_{-r_{i'}})$ is the strategy profile obtained by changing $i$'s strategy to $s$ in $\vec{S}^*$. 
Take $i'\in A'$ such that $\mu(i',\vec{S}^{*\prime})=i$ and let $\vec{S}'_{s}=(s\cup\{r_i\},\vec{S}^{*\prime}_{-i'})$ denote the strategy profile obtained by changing $i'$'s strategy to $s\cup\{r_i\}$ in $\vec{S}^{*\prime}$. Since both $\vec{S}^{*\prime}$ and $\vec{S}'_{s}$ are perfectly-matched, we obtain
$$U_{i'}^{\mathcal{I}'}(\vec{S}'_{s})=v'_{r_i}+U_{i}^{\mathcal{I}}(\vec{S}_{s})>v'_{r_i}+U_{i}^{\mathcal{I}}(\vec{S}^*)=U_{i'}^{\mathcal{I}'}(\vec{S}^{*\prime}),$$
which contradicts with the assumption that $\vec{S}^{*\prime}$ is a PNE of $\mathcal{I}$.
\end{proof}

With the help of technical Lemma \ref{lem:convert-to-symmetric-sandv}, we can modify Example \ref{exa:twoagent-noPNE} to obtain a counterexample showing that PNE may not exist even when there are only three agents, sharing a common strategy space, and only one agent is not weighted $1$.
\begin{theorem}
\label{thm:weighted-agent-noPNE}
There exists an instance of $(\vec{w})$-asymmetric CAG which has no pure Nash equilibrium, even if there are only three agents and only one agent is not weighted $1$.
\end{theorem}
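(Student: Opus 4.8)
The plan is to combine the concrete counterexample of \Cref{exa:twoagent-noPNE} with the conversion mechanism of \Cref{lem:convert-to-symmetric-sandv}. First I would verify that the instance $\mathcal{I}$ in \Cref{exa:twoagent-noPNE} is indeed a valid instance of the $(\vec{\mathcal{S}},\vec{w},\vec{v})$-asymmetric CAG and has no PNE. This reduces to confirming the four payoff entries in \Cref{tab:nonexistence-PNE-payoff}: since agent $3$ is dummy (its single strategy $\{q_1,q_4\}$ is fixed), the loads on $q_1$ and $q_4$ always include a $+1$ contribution from agent $3$, and one computes $U_1,U_2$ on each of the four profiles of the two active agents directly from $U_i(\vec{S})=\sum_{j\in S_i}\frac{w_i}{c(j,\vec{S})}v_j$. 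I would check the four deviation incentives claimed in the example: in $(s_{1,1},s_{2,1})$ agent $1$ prefers $s_{1,2}$ (utility rises from $7/3$ to $2.4$), in $(s_{1,2},s_{2,2})$ agent $1$ again prefers $s_{1,1}$, and symmetrically agent $2$ always wants to switch in the two off-diagonal profiles; hence the best-response graph is a $4$-cycle and no profile is a PNE.

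Second, I would apply \Cref{lem:convert-to-symmetric-sandv} to $\mathcal{I}$. The hypothesis of that lemma is exactly satisfied here: there is a single agent with weight $w_1=4\geq 1$ and all other agents have weight $1$ ($w_2=w_3=1$). The lemma then produces an instance $\mathcal{I}'$ of the $(\vec{w})$-asymmetric CAG — that is, with a common strategy space and unit node values — such that $\mathcal{I}$ has a PNE if and only if $\mathcal{I}'$ does, and moreover $A'=A$ and $\vec{w}'=\vec{w}$, so $\mathcal{I}'$ still has exactly three agents and exactly one of them (the former agent $1$, with weight $4$) is not weighted $1$. Since $\mathcal{I}$ has no PNE, neither does $\mathcal{I}'$, which is precisely the statement of the theorem. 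I would also note that the running time of the construction is polynomial in the (constant-size) parameters of $\mathcal{I}$, so $\mathcal{I}'$ itself has polynomial — indeed constant — size.

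I do not expect a serious obstacle, since the two ingredients are already in place; the only real work is the bookkeeping in the first step. The one point to be careful about is ensuring that the conversion in \Cref{lem:convert-to-symmetric-sandv} genuinely preserves the relevant structural promises — in particular that it leaves the agent set and weight vector untouched (so the "only one agent not weighted $1$" clause survives) and that splitting high-value nodes into unit-value groups plus padding each agent's strategies with a private block of fresh nodes does not accidentally create a PNE. But this is exactly what the lemma asserts, so I would simply invoke it. If one wanted a self-contained argument, the alternative would be to exhibit the converted instance explicitly: split $q_1$ and $q_4$ into two unit nodes each, and attach to agent $i$'s strategies a dedicated block of fresh unit nodes whose total value dwarfs the stakes in the gadget so that deviations outside the gadget are never profitable; then re-verify the $4$-cycle. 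I would present the short invocation of \Cref{lem:convert-to-symmetric-sandv} as the main proof and relegate any explicit reconstruction to a remark.
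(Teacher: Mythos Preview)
Your proposal is correct and matches the paper's own proof essentially verbatim: the paper simply invokes \Cref{exa:twoagent-noPNE} and applies \Cref{lem:convert-to-symmetric-sandv}, relying on the latter's guarantee that $A'=A$ and $\vec{w}'=\vec{w}$ so that the three-agent, one-heavy-weight structure survives. Your additional bookkeeping (verifying the payoff entries and the $4$-cycle explicitly) is more than the paper spells out, but the argument is identical.
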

\begin{proof}
    To prove this theorem, we only need to construct a ($\Vec{w}$)-asymmetric CAG which has no pure Nash equilibrium. Based on Example \ref{exa:twoagent-noPNE}, we can apply Lemma \ref{lem:convert-to-symmetric-sandv}. In this way, we can get the desirable instance. 
\end{proof}
The next question is, can we distinguish the instances of asymmetric CAG which have a PNE? We show that it is NP-complete to judge the existnece of PNE for any asymmetric CAG, as stated in Theorem \ref{thm:weighted-NP-hard}. 
\begin{theorem}\label{thm:weighted-NP-hard}
The decision problem asking whether a PNE exists for an instance $\mathcal{I}$ of $(\vec{w})$-asymmetric CAG is NP-complete, even if only one agent is not weighted $1$. As a corollary, deciding the existence of a PNE for an instance under $(\vec{\mathcal{S}},\vec{w},\vec{v})$-asymmetric CAG model is also NP-complete.
\end{theorem}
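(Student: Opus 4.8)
The plan is to prove NP-completeness by showing membership in NP and then NP-hardness via a reduction. Membership in NP is immediate: given a strategy profile $\vec{S}$, we can verify in polynomial time that no agent has a beneficial deviation, simply by computing $U_i(\vec{S})$ and $U_i(S_i',\vec{S}_{-i})$ for every agent $i$ and every $S_i'\in\mathcal{S}_i$, and checking the equilibrium inequalities; the number of such checks is $\sum_i|\mathcal{S}_i|$, which is polynomial in the input size. So the certificate is the PNE itself. The substance is NP-hardness.

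For NP-hardness I would reduce from a convenient NP-complete problem --- a natural candidate is a variant of SAT or a partition/packing-type problem --- and build an instance of $(\vec{w})$-asymmetric CAG in which a PNE exists if and only if the source instance is a ``yes'' instance. The key design idea, already foreshadowed in the text, is to use the no-PNE gadget of \Cref{exa:twoagent-noPNE} (equivalently \Cref{thm:weighted-agent-noPNE}) as a ``cyclic deviation'' gadget that is ``armed'' or ``disarmed'' depending on the choices of some controlling agents. Concretely, I would set up the weighted agent and its partner so that, whenever the controlling agents encode a satisfying assignment (or a valid solution of the source problem), the extra customers routed into the gadget's nodes change the load function enough to break the best-response cycle and create a stable configuration; when the controlling agents fail to encode a solution, the gadget retains its cyclic structure and no profile is a PNE. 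One must check carefully that the controlling agents themselves always have a best response available in either case, so that the only obstruction to a PNE is the gadget, and that the gadget does not accidentally stabilize for a ``bad'' assignment. The weights and node values should be chosen polynomially bounded (or at least with polynomial bit-length), so the whole construction is polynomial-time computable, consistent with the ``only one agent is not weighted $1$'' claim --- which means essentially all agents except the single heavy gadget agent have weight $1$, and the encoding of the SAT instance is done through strategy spaces and node values rather than through weights.

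Having established NP-completeness for $(\vec{w})$-asymmetric CAG with only one heavy agent, the corollary for the fully asymmetric $(\vec{\mathcal{S}},\vec{w},\vec{v})$-asymmetric CAG is essentially free: the $(\vec{w})$-asymmetric model is a special case of the fully asymmetric model (restricting $\vec{\mathcal{S}}$ to be symmetric and $\vec{v}$ to be unit), so hardness transfers downward, and membership in NP holds for the general model by the same certificate argument as above. Alternatively, one could reduce from the $(\vec{\mathcal{S}},\vec{w},\vec{v})$-asymmetric version of the construction and invoke \Cref{lem:convert-to-symmetric-sandv} to push it into the $(\vec{w})$-asymmetric form while preserving the existence of a PNE and keeping the construction polynomial.

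The main obstacle I anticipate is the gadget-control interface: making the cyclic-deviation gadget switch cleanly between ``has a stable state'' and ``has no stable state'' as a function of the SAT variables, while simultaneously ensuring that (i) the gadget agents never have a profitable deviation into the variable-encoding part of the instance, (ii) the variable/clause agents always have well-defined best responses that faithfully read off an assignment, and (iii) all parameters remain polynomially bounded. This requires a delicate numerical balancing of weights and node values in the $2\times 2$ subgame so that a small perturbation in load (coming from one extra unit-weight agent controlled by the clause structure) flips the sign of the relevant utility differences; getting these inequalities to hold robustly, and proving the ``only if'' direction (no PNE when the source instance is unsatisfiable, regardless of how the auxiliary agents behave), will be where most of the technical care goes.
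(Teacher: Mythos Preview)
Your high-level plan matches the paper's: certificate-based NP membership, a reduction that uses the no-PNE gadget of \Cref{exa:twoagent-noPNE} as a switch controlled by auxiliary agents encoding the source problem, and an appeal to \Cref{lem:convert-to-symmetric-sandv} to land in the $(\vec{w})$-asymmetric model with a single heavy agent. The paper reduces from 3D-\textsc{Matching} (one of your ``packing-type'' candidates), which fits the gadget cleanly because the matching agents' strategies overlap on a node if and only if no perfect matching exists.

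One point in your write-up is inverted relative to how the gadget actually behaves, and it is worth correcting before you flesh out the construction. In \Cref{exa:twoagent-noPNE}, the three-agent instance (with the dummy agent attracting $q_1,q_4$) has \emph{no} PNE, whereas removing that dummy yields a two-agent game that \emph{does} have a PNE (\Cref{tab:modified-payoff}). So the natural use of the gadget is: no extra load on $q_1,q_4$ $\Rightarrow$ stable; extra unit-weight agent on $q_1,q_4$ $\Rightarrow$ cyclic. Your description (``extra customers routed into the gadget's nodes \ldots\ break the best-response cycle and create a stable configuration'') is the reverse. Accordingly, the paper arranges that when a perfect matching exists the matching agents stay on disjoint hyperedge strategies and never touch the gadget nodes (PNE survives), whereas when no perfect matching exists some matching agent is forced, at equilibrium, onto a ``fail'' strategy that covers $q_1,q_4$ and destroys the gadget's PNE. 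This orientation also makes your anticipated ``only if'' direction tractable: one shows that in any PNE the matching agents cannot overlap and cannot all sit on fail strategies without one of them landing on $\{q_1^F,q_1,q_4\}$, so a PNE forces a perfect matching.
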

\begin{proof}
Firstly, it is easy to see that determining whether a PNE exists for any instance $\mathcal{I}$ under $(\vec{\mathcal{S}},\vec{w},\vec{v})$-asymmetric CAG model is in NP. 
We prove the NP-hardness by reduction from 3D-MATCHING problem. 

An instance of 3D-MATCHING is a tuple $(X,Y,Z,E)$, which consists of three disjoint vertex sets $X,Y,Z$ such that $|X|=|Y|=|Z|$, and a set of hyperedges $E\subseteq X\times Y\times Z$. The problem is to decide whether a perfect 3D-matching $M\subseteq E$ exists, with each vertex in $X,Y,Z$ appearing exactly once in $M$. It is well-known that 3D-MATCHING is NP-complete.

The reduction makes use of the instance $\mathcal{I}=(N,A,\vec{\mathcal{S}},\vec{w},\vec{v})$ in \Cref{exa:twoagent-noPNE}, which has no PNE. Recall that $N=\{q_1,q_2,q_3,q_4\}$ has four nodes with values $v_{q_1}=2,v_{q_2}=1,v_{q_3}=1,v_{q_4}=2$ and $A=\{1,2,3\}$ includes two active agents $1,2$ and one dummy agent $3$, weighted as $w_1=4,w_2=1,w_3=1$. We remove the dummy agent $3$ from $\mathcal{I}$, and denote the resulting instance as $\mathcal{I}'=(N,\{1,2\},(\mathcal{S}_1,\mathcal{S}_2),\vec{w}',\vec{v})$, where $w'_1=4$ and $w'_2=1$.

Observe that $\mathcal{I}'$ has two PNEs $(s_{1,1},s_{2,2})$ and $(s_{1,2},s_{2,1})$, which can be observed from the payoff matrix presented in \Cref{tab:modified-payoff} (Actually, the existence of PNE is also implied by \Cref{thm:exist asym two agent}, since there are only two agents in $\mathcal{I}'$). This indicates that $\mathcal{I}'$ can serve as a gadget converting the introduction of an extra agent on node $q_1,q_4$ (i.e., the dummy agent 3) into the non-existence of a PNE.

\begin{table}[htbp]
\centering
\begin{tabular}{|c|c|c|}
\hline
\diagbox{$S_1$}{$U_1,U_2$}{$S_2$}& $s_{2,1}$ & $s_{2,2}$\\\hline
$s_{1,1}$ &
(2.6,1.4) & (2.8,2.2) \\\hline
$s_{1,2}$ &
(2.8,2.2) & (2.6,1.4) \\\hline
\end{tabular}
\caption{The payoff matrix after removing the dummy agent}
\label{tab:modified-payoff}
\end{table}
Given an instance $(X,Y,Z,E)$ of 3D-MATCHING, suppose $|X|=|Y|=|Z|=n_V$. We first construct an instance $\mathcal{I}''$ of $(\vec{\mathcal{S}},\vec{w},\vec{v})$-asymmetric CAG using $\mathcal{I}'$ as a gadget, and then convert it to an instance $\mathcal{I}^*$ of $(\vec{w})$-asymmetric CAG by \Cref{lem:convert-to-symmetric-sandv}, such that for both $\mathcal{I}''$ and $\mathcal{I}^*$, a PNE exists if and only if a perfect 3D-matching exists. $\mathcal{I}''=(N'',A'',\vec{\mathcal{S}}'',\vec{w}'',\vec{v}'')$ is constructed as follows:
\begin{itemize}
    \item The node set is defined as $N''=N\cup \{q^V_j:j\in X\cup Y\cup Z\} \cup \{q^{F}_j:j=1,\cdots,n_V\}$.
    \item The agent set is defined as $A''=\{1,2\}\cup A^{match}$, where $A^{match}=\{3+1,\cdots,3+n_V\}$.
    \item Three new strategy spaces are defined as followings: $\mathcal{S}^{E}=\{\{q^V_x,q^V_y,q^V_z\}:(x,y,z)\in E\}$, $\mathcal{S}^{fail}=\{\{q^{F}_1,q_1,q_4\}\}\cup\{\{q^{F}_j\}:j=2,\cdots,n_V\}$ and $\mathcal{S}^{match}=\mathcal{S}^{E}\cup \mathcal{S}^{fail}$. For agents 1 and 2, keep the original strategy spaces, i.e., $\mathcal{S}_1''=\mathcal{S}_1$ and $\mathcal{S}_2''=\mathcal{S}_2$; For the new agent $i\in A^{match}$, let $\mathcal{S}_i''=\mathcal{S}^{match}$. 
    \item The weight of agent $1$ is still $w_1''=4$ and for all $i\in \{2\}\cup A^{match}$, they all have unit weight, $w_i''=1$.
    \item Keep $v''_{q_j}=v_{q_j}$ for each $q_j\in N=\{q_1,q_2,q_3,q_4\}$. Let $v_{q^V_j}''=10$ for all $q^V_j$ and $v_{q^{F}_j}''=26$ for all $q^{F}_j$.
\end{itemize}

Now we prove that $\mathcal{I}''$ has a PNE if and only if a perfect 3D-matching exists.
Intuitively, each node $q_{j}^{V}$ represents a vertex in  $X,Y,Z$, while each agent in $A^{match}$ tries to choose an hyperedge from $E$, represented by a strategy in $\mathcal{S}^{E}$. If a perfect 3D-matching exists, there is a strategy profile such that no two agents in $A^{match}$ cover a same node, which implies the existence of PNE. In contrast, if there is no perfect 3D-matching, there always exist two agents whose chosen hyperedges are overlapping. In this case, either of them can deviate to some strategy in $\mathcal{S}^{fail}$ to get higher utility. In consequence, $q_1,q_4$ will be attracted by an agent in $A^{match}$, which changes the payoff matrix between agents 1 and 2  in \Cref{tab:modified-payoff} into the one in \Cref{tab:nonexistence-PNE-payoff} and eliminates the existence of PNE.

\noindent\textbf{Sufficiency:}

Suppose there is a perfect 3D-matching $M\subseteq E$, i.e.,  $M=\{(x_1,y_1,z_1),\cdots,(x_{n_V},y_{n_V},z_{n_V})\}$. We show that a PNE $\vec{S}\in\vec{\mathcal{S}}''$ exists:
\begin{itemize}
    \item For agent 1 and 2, let $S_1=\{q_1,q_2\}$ and $S_2=\{q_2,q_4\}$. As previously discussed, both agent $1$ and $2$ have no incentive to deviate.
    \item For each agent $i\in A^{match}=\{3+1,\cdots,3+n_V\}$, let $S_i=\{q^V_{x_{i-3}},q^V_{y_{i-3}},q^V_{z_{i-3}}\}$. The utility of agent $i$ is $U_i^{\mathcal{I}''}(\vec{S})=30$. Note that the total value of any strategy in $\mathcal{S}^E$ is $30$ and the total value of any strategy in $\mathcal{S}^{fail}$ is no more than $v_{q^{F}_1}+v_{q_1}+v_{q_4}=26+2+2=30$. Therefore, no agent in $A^{match}$ wants to deviate.
\end{itemize}
Consequently, we have found that $\vec{S}$ is a PNE.

\noindent\textbf{Necessity:}

Suppose that there is a PNE $\vec{S}$ of instance $\mathcal{I}''$. We present some properties of $\vec{S}$. Firstly, every node in $\{q_j^V:j\in X\cup Y\cup Z\}$ is attracted by at most one agent in $A^{match}$. Otherwise, we can construct a contradiction: suppose that an agent $i\in A^{match}$ selects a strategy $S_i=\{q^V_x,q^V_y,q^V_z\}$ and at least one of $q^V_x,q^V_y,q^V_z$ is attracted by some other agents. Then, we have $U_i^{\mathcal{I}''}(\vec{S})\leq 10+10+10/2=25<26$. On the other hand, there are at least one node in $\{q_j^F:j=1,\cdots,n_V\}$ that is not attracted by any other agent, so agent $i$ can deviate to attract it and get a utility of at least $26$, which contradicts with that $\vec{S}$ is a PNE.

Secondly, every node in $\{q_j^{F}:j=1,\cdots,n_V\}$ is also attracted by at most one agent in $A^{match}$, because if an agent $i\in A^{match}$ is attracting some $q_j^{F}$ such that $q_j^{F}$ is also attracted by some other agents, then we have $U_i^{\mathcal{I}''}(\vec{S})\leq 26/2+2+2=17<26$. It means that agent $i$ can deviate to attract some $q_{j'}^{F}$ which is not attracted by any other agent, to improve her utility.

Thirdly, if there exist any agent $i\in A^{match}$ selecting strategies in $\mathcal{S}^{fail}$, then the node $q_1^F$ must be attracted by exactly one agent. We already know that $q_1^F$ cannot be attracted by more than one agents, and if $q_1^F$ is not attracted by any agent, using $\vec{S}_i = \{q_1^F,q_1,q_4\}$ will bring a utility more than 26, strictly better than any other strategy in $\mathcal{S}^{fail}$, which also leads to contradiction.

Now we know that, either all agents $i\in A^{match}$ select $S_i\in\mathcal{S}^E$, or there is exactly one agent $\hat{i}\in A^{match}$ selecting $S_{\hat{i}}=\{q_1^F,q_1,q_4\}$. In the latter case, since $\sum_{i\in A\setminus\{1,2\}:q_2\in S_i}w_i=1$, the two-player game between agent $1$ and $2$ has no PNE as discussed in \Cref{exa:twoagent-noPNE}. In other words, one of agent $1$ and $2$ has incentive to deviate from $\vec{S}$, contradicting with that $\vec{S}$ is a PNE. Therefore, only the former case can happen. In the former case, since every node in $\{q_j^V:j\in X\cup Y\cup Z\}$ is attracted by at most one agent in $A^{match}$, combining with that $|A^{match}|=n_V$, it is clear that $M=\{(x_i,y_i,z_i):S_{i+3}=\{q^V_{x_{i}},q^V_{y_{i}},q^V_{z_{i}}\}, \forall i=1,\cdots,n_V\}$ is a perfect 3D-matching.

In conclusion, $\mathcal{I}''$ has a PNE if and only if a perfect 3D-matching for the instance $(X,Y,Z,E)$ exists. Moreover, by \Cref{lem:convert-to-symmetric-sandv}, we can convert $\mathcal{I}''$ to an instance $\mathcal{I}^*$ of $(\vec{w})$-asymmetric CAG. This implies that $\mathcal{I}^*$ has a PNE if and only if a perfect 3D-matching for $(X,Y,Z,E)$ exists.  
\end{proof}

So far, we have shown that there exist instances which do not have a PNE and deciding whether the PNE exists is also NP-hard. However, if we focus on the scope of approximate PNE, we can make sure that a $O(log w_{\textnormal{max}})$-approximate PNE always exists for any $(\vec{\mathcal{S}},\vec{w},\vec{v})$-asymmetric CAG. 
\begin{theorem}
    For any instance of $(\vec{\mathcal{S}},\vec{w},\vec{v})$-asymmetric CAG, there exists a $O(\log w_{\textnormal{max}})$-approximate PNE, which also achieves the $O(\log W)$-approximately optimal social welfare, where $w_{\textnormal{max}}=\max_{i\in A}w_i$ is the maximum of agents' weights and $W=\sum_{i\in A}w_i$ is the sum of agents' weights.
\end{theorem}
\begin{proof}
We prove this theorem with the aid of a new potential function. Intuitively, when an agent deviates from her strategy to achieve at least $\alpha$ times her current utility, the value of the potential function will also be improved. After finite many deviations, we reach an $\alpha$-approximate PNE, since the joint strategy space is finite.

We define $\psi(x)=\ln(\max(e^{-1},x))$ and define the potential function as
$$\Psi(\vec{S})=\sum_{j\in N}v_j \psi(c(j,\vec{S})).$$ We first give a technical lemma:
\begin{lemma}
\label{lem:ln-approximate-potential}
For any $x\in \mathbb{Z}_{\geq 1}$, $y\in\mathbb{Z}_{\geq 0}$, it holds that
\begin{align*}
\frac{x}{x+y}\leq \psi(y+x)-\psi(y)\leq(\ln(1+x)+1)\frac{x}{x+y}.
\end{align*}
\end{lemma}
\begin{proof}
If $y=0$, the inequalities hold trivially, because 
$$\frac{x}{x}=1\leq \ln(x)-\ln(e^{-1})=\ln(x)+1\leq(\ln(1+x)+1)\frac{x}{x}.$$
If $y\geq 1$, by the definition of function $\psi(\cdot)$, we know $\psi(y+x)-\psi(y)=\ln(y+x)-\ln(y)$. For the first inequality, we have
$$\frac{x}{y+x} \leq \int_{y}^{y+x}\frac1tdt = \ln(y+x)-\ln(y)= \psi(y+x)-\psi(y),$$ which implies that the first inequality holds.
Then, we show that the second inequality holds. By the following a series of equation and inequalities
$$(1+\frac{x}{y})^{\frac{x+y}{x}}=(1+\frac{x}{y})^{\frac{y}{x}}(1+\frac{x}{y})\leq e(1+\frac{x}{y})\leq e(1+x),$$ 
and taking logarithm on both sides, we get $$\frac{x+y}{x}\ln(1+\frac{x}{y})\leq \ln(1+x)+1.$$ 
Consequently, we obtain
\begin{align*}
\psi(y+x)-\psi(y)=\ln(y+x)-\ln(y)=\ln(1+\frac{x}{y})\leq (\ln(1+x)+1)\frac{x}{x+y}.
\end{align*}
\end{proof}

Let $\alpha=\ln(1+w_{\textnormal{max}})+1=\Theta(\log w_{\textnormal{max}})$. We show that there is an $\alpha$-approximate PNE which achieves the $O(\log W)$-approximately optimal social welfare.

For any strategy profile $\vec{S}\in\vec{\mathcal{S}}$, if $\vec{S}$ is not an $\alpha$-approximate PNE, then by definition there is $i\in A$, such that $i$ can deviate to $S_i'\in\mathcal{S}_i$ and improve her utility with an $\alpha$ multiplicative factor. This implies that
\begin{align*}
&\alpha U_i(S_i,\vec{S}_{-i})< U_i(S_i',\vec{S}_{-i})\\ \Rightarrow & \alpha(\sum_{j\in S_i}v_j\frac{w_i}{w_i+c(j,\vec{S}_{-i})})-\sum_{j\in S_i'}v_j\frac{w_i}{w_i+c(j,\vec{S}_{-i})}< 0\\
 \Rightarrow & \alpha(\sum_{j\in S_i\setminus S_i'}v_j\frac{w_i}{w_i+c(j,\vec{S}_{-i})})-\sum_{j\in S_i'\setminus S_i}v_j\frac{w_i}{w_i+c(j,\vec{S}_{-i})}\nonumber\\
&< (1-\alpha)\sum_{j\in S_i\cap S_i'}v_j\frac{w_i}{w_i+c(j,\vec{S}_{-i})}\nonumber \leq 0.
\end{align*}
By \Cref{lem:ln-approximate-potential}, for any $j\in N$, we have 
\begin{align*}
\frac{w_i}{w_i+c(j,\vec{S}_{-i})} \leq \psi(c(j,\vec{S}_{-i})+w_i)-\psi(c(j,\vec{S}_{-i})) \\
\leq (\ln(1+w_i)+1)\frac{w_i}{w_i+c(j,\vec{S}_{-i})} \leq \alpha\frac{w_i}{w_i+c(j,\vec{S}_{-i})}.
\end{align*}
Then, we calculate the difference in the potential function value:
\begin{align*}
&\Psi(S_i',\vec{S}_{-i})-\Psi(S_i,\vec{S}_{-i})\\
=&\sum_{j\in S_i'\setminus S_i}v_j(\psi(c(j,\vec{S}_{-i})+w_i)-\psi(c(j,\vec{S}_{-i})))\nonumber\\
- & \sum_{j\in S_i\setminus S_i'}v_j(\psi(c(j,\vec{S}_{-i})+w_i)-\psi(c(j,\vec{S}_{-i})))\\
\geq& \sum_{j\in S_i'\setminus S_i}v_j\frac{w_i}{w_i+c(j,\vec{S}_{-i})}-\alpha(\sum_{j\in S_i\setminus S_i'}v_j\frac{w_i}{w_i+c(j,\vec{S}_{-i})}) >0.
\end{align*}
Therefore, the value of $\Psi$ increases after every such $\alpha$-improving deviation. Since the set of possible strategy profiles is finite, one can find an $\alpha$-approximate PNE after finite steps of $\alpha$-improving deviation.

Specially, there exists a strategy profile maximizing $\Psi(\vec{S})$. We formally define it as $\vec{S}^{\Psi*}\in\arg\max_{\vec{S}\in\vec{\mathcal{S}}}$ $\Psi(\vec{S})$ and know that there is no $\alpha$-improving deviation from $\vec{S}^{\Psi*}$, which means that $\vec{S}^{\Psi*}$ is an $\alpha$-approximate PNE. Now we show that $\vec{S}^{\Psi*}$ achieves the $O(\log W)$-approximately optimal social welfare.

For any $x\in\mathbb{Z}_{\geq 0}$, it's easy to check that $I_{>0}(x)\leq \psi(x)+1\leq (\ln(x)+1)I_{>0}(x)$, where $I_{>0}(x)$ is the indicator function of positive integers. 
Let $\vec{S}^*$ be the strategy profile with the optimal social welfare, then 
\begin{align*}
SW(\vec{S}^*)=\sum_{j\in N}v_jI_{>0}(c(j,\vec{S}^*))\leq \sum_{j\in N}v_j(\psi(c(j,\vec{S}^*))+1) .
\end{align*}
On the other hand, for the strategy profile $\vec{S}^{\Psi*}$, we have 
\begin{align*}
    &\sum_{j\in N}v_j(\psi(c(j,\vec{S}^{\Psi*}))+1) \leq 
    \sum_{j\in N}[v_j\ln(c(j,\vec{S}^{\Psi*})+1)  \cdot I_{>0}(c(j,\vec{S}^{\Psi*}))]\\  & \leq  \ln(W+1)SW(\vec{S}^{\Psi*}).
\end{align*}
Due to  
$$\sum_{j\in N}v_j(\psi(c(j,\vec{S}^*))+1)=\Psi(\vec{S}^*)+\sum_{j\in N}v_j\leq\Psi(\vec{S}^{\Psi*})+\sum_{j\in N}v_j=\sum_{j\in N}v_j(\psi(c(j,\vec{S}^{\Psi*}))+1),$$ we can combine the above two inequalities and obtain  $SW(\vec{S}^*) \leq \ln(W+1)SW(\vec{S}^{\Psi*})$.   
\end{proof}

As for the lower bound of approximation ratio of approximate PNE, although we know that there are some instances that do not exist PNE (even does not exist 10/9-approximate PNE from Example \ref{exa:twoagent-noPNE}), we still hope to find a lower bound as large as possible. In the following theorem, we technically construct an instance which does not exist a $(2-\epsilon)$-approximate PNE.
\begin{theorem}
    For any $0<\epsilon<1$, we can construct an instance $\mathcal{I}=(N,A,\Vec{\mathcal{S}},\Vec{w},\Vec{v})$ of $(\Vec{\mathcal{S}},\Vec{w},\Vec{v})$-asymmetric CAG which does not admit a $(2-\epsilon)$-approximate PNE.
\end{theorem}
\begin{proof}
    In this instance, there are three groups of agents. The first group has $k$ agents $1,2,...k$, and their weights are $p,p^2,...,p^k$ respectively. The second group has $l$ agents $k+1,k+2,...,k+l$, and their weights are all 1. The last group has 2 agents $k+l+1,k+l+2$, and their weights are both $p^k$. 

    On the other hand, there are $2k+2l$ nodes labeled as $N=\{q_1,\overline{q}_1,q_2,\overline{q}_2,...,q_{k+l},\overline{q}_{k+l}\}$. Let $v_{q_i}=v_{\overline{q}_i}=x_i$ for $i=1,2,...,k$ and let $v_{q_i}=v_{\overline{q}_i}=x_0$ for $i=k+1,k+2,...,k+l$.
    
    The attraction ranges and strategy spaces can be seen in Figure \ref{fig:fig1}. For $i=2,3,...,k$, the strategy space of the agent $i$ is $\mathcal{S}_i=\{s_{i,1},s_{i,2}\}$, where $s_{i,1}=\{q_{i-1},q_i\}$ and $s_{i,2}=\{\overline{q}_{i-1},\overline{q}_i\}$. For agent $1$, the strategy space is $\mathcal{S}_1=\{s_{1,1},s_{1,2}\}$, where $s_{1,1}=\{q_1,q_{k+1},q_{k+2},...,q_{k+l}\}, s_{1,2}=\{\overline{q}_1,\overline{q}_{k+1},\overline{q}_{k+2},...,\overline{q}_{k+l}\}$. For $i=k+1,k+2,...,k+l$, the strategy space of the agent $i$ is $\mathcal{S}_i=\{s_{i,1},s_{i,2}\}$, where $s_{i,1}=\{q_i,\overline{q}_k\}, s_{i,2}=\{\overline{q}_i,q_k\}$. The last two are dummy agents, and $\mathcal{S}_{k+l+1}={q_{k+1}},\mathcal{S}_{k+l+2}={\overline{q}_{k+1}}$.

    Now we start to prove that the instance identified above does not admit a $(2-\epsilon)$-approximate PNE. As shown in Figure \ref{fig:fig1}, we provide the intuition first. When agent $k$ selects $s_{k,1}=\{q_{k-1},q_k\}$, agent $k-1$ will select $s_{k-1,2}=\{\overline{q}_{k-2},\overline{q}_{k-1}\}$ to obtain the full value of $\overline{q}_{k-1}$. Then agent $k-2$ will select $s_{k-2,1}=\{q_{k-3},q_{k-2}\}$. Keep selecting like this, agent $2$ will select $s_{2,2}=\{\overline{q}_{2},\overline{q}_{1}\}$ if $k$ is an odd number. Then agent $1$ will select $s_{1,1}=\{q_1,q_{k+1},q_{k+2},...,q_{k+l}\}$ to obtain the full value of $q_1$. Next, $q_i$ will select $s_{i,2}=\{\overline{q}_i,q_k\}$ for $i=k+1, k+2, ..., k+l$. At this time, agent $k$ will select $s_{k,2}=\{\overline{q}_{k-1},\overline{q}_{k}\}$ since $q_k$ is selected by agent $k+1, k+2, ..., k+l$. Similarly, all other agents will change their selection one by one, which suggests that there is no $(2-\epsilon)$-approximate PNE in the game.

     \begin{figure}[h]
		\begin{center}
			\begin{tabular}{c}						\includegraphics[width=12cm,height=4.5cm]{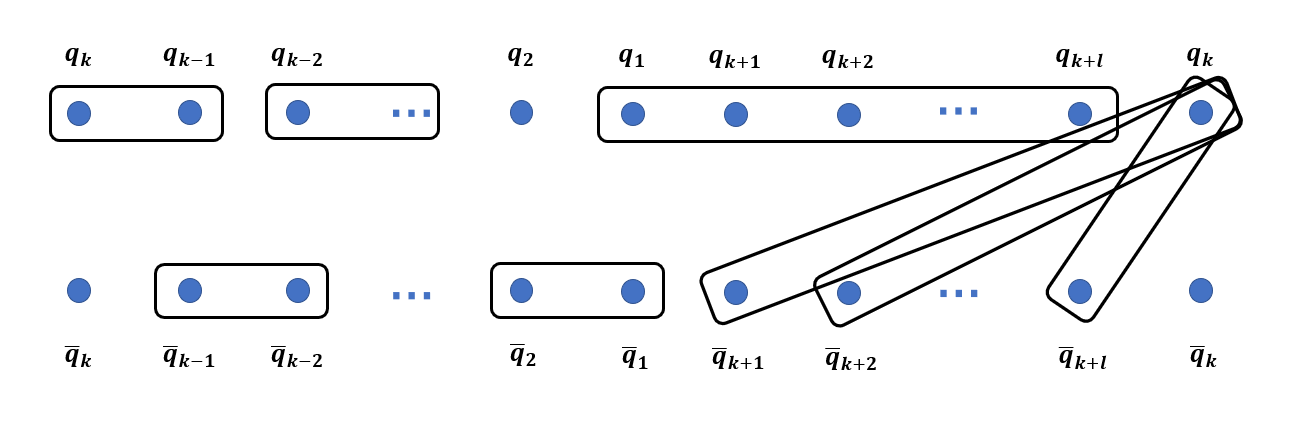}
			\end{tabular}
			\end{center}
			\caption 
			{The strategies that agents will select in the game.} 
                \label{fig:fig1}
			\end{figure} 

    Firstly, for $i=3,4, ...,k$, if agent $i$ selects $s_{i,1}=\{q_{i-1},q_i\}$, agent $i-1$ will select $s_{i-1,2}=\{\overline{q}_{i-2},\overline{q}_{i-1}\}$ no matter what others choose. Otherwise, if agent $i-1$ selects $s_{i-1,1}=\{q_{i-2},q_{i-1}\}$, the maximum utility is $x_{i-2}+x_{i-1}\cdot\frac{p^{i-1}}{p^{i-1}+p^i}$. On the other hand, the minimum utility of agent $i-1$ to select $s_{i-1,2}$ is $x_{i-2}\cdot\frac{p^{i-1}}{p^{i-1}+p^{i-2}}+x_{i-1}$. Therefore, we only need
    \begin{align} \label{eqn1}
       (2-\epsilon) \cdot (x_{i-2}+x_{i-1}\cdot\frac{p^{i-1}}{p^{i-1}+p^i})\leq x_{i-2}\cdot\frac{p^{i-1}}{p^{i-1}+p^{i-2}}+x_{i-1}
    \end{align}
    to ensure the deviation from $s_{i-1,1}$ to $s_{i-1,2}$ is $(2-\epsilon)$-improving.

    Next, if agent $2$ selects $s_{2,2}=\{\overline{q}_{2},\overline{q}_{1}\}$, agent $1$ will select $s_{1,1}=\{q_1,q_{k+1},q_{k+2},...,q_{k+l}\}$ no matter what others choose. Otherwise, if agent $1$ selects $s_{1,2}=\{\overline{q}_1,\overline{q}_{k+1},\overline{q}_{k+2},...,\overline{q}_{k+l}\}$, the maximum utility is $x_1\cdot\frac{p}{p^2+p}+l\cdot x_0$. On the other hand, the minimum utility of agent $1$ to select $s_{1,1}$ is $x_1+l\cdot\frac{p}{p+1}\cdot x_0$. Therefore, we only need 
     \begin{align} \label{eqn2}
       (2-\epsilon) \cdot (x_1\cdot\frac{p}{p^2+p}+l\cdot x_0)\leq x_1+l\cdot\frac{p}{p+1}\cdot x_0
    \end{align}
    to ensure the deviation from $s_{1,2}$ to $s_{1,1}$ is $(2-\epsilon)$-improving.

    Next, for $i=k+1, k+2, ..., k+l$, if agent $1$ selects $s_{1,1}=\{q_1,q_{k+1},q_{k+2},...,q_{k+l}\}$, agent $i$ will select $s_{i,2}=\{\overline{q}_i,q_k\}$ no matter what others choose. Otherwise, if agent $i$ selects $s_{i,1}=\{q_i,\overline{q}_k\}$, the maximum utility is $x_0\cdot\frac{1}{p+1}+x_k\cdot\frac{1}{p^k+1}$. On the other hand, the minimum utility of agent $i$ to select $s_{i,2}$ is $x_0+x_k\cdot\frac{1}{2p^k+l}$. Therefore, we only need
    \begin{align} \label{eqn3}
        (2-\epsilon) \cdot (x_0\cdot\frac{1}{p+1}+x_k\cdot\frac{1}{p^k+1}) \leq x_0+x_k\cdot\frac{1}{2p^k+l}
    \end{align}
    to ensure the deviation from $s_{i,1}$ to $s_{i,2}$ is $(2-\epsilon)$-improving.

    Ultimately, if agent $i$ selects $s_{i,1}=\{q_i,\overline{q}_k\}$ for each $i=k+1, k+2, ..., k+l$, agent $k$ will select $s_{k,1}=\{q_k,q_{k-1}\}$ no matter what others choose. Otherwise, if agent $k$ selects $s_{k,2}=\{\overline{q}_k,\overline{q}_{k-1}\}$, the maximum utility is $x_k\cdot\frac{p^k}{2p^k+l}+x_{k-1}$. On the other hand, the maximum utility of agent $k$ to select $s_{k,1}=\{q_k,q_{k-1}\}$ is $x_k\cdot\frac{p^k}{2p^k}+x_{k-1}\frac{p^k}{p^k+p^{k-1}}$. Therefore, we only need
    \begin{align} \label{eqn4}
        (2-\epsilon) \cdot (x_k\cdot\frac{p^k}{2p^k+l}+x_{k-1})\leq x_k\cdot\frac{p^k}{2p^k}+x_{k-1}\frac{p^k}{p^k+p^{k-1}}
    \end{align}
    to ensure the deviation from $s_{k,2}$ to $s_{k,1}$ is $(2-\epsilon)$-improving.

    Combining Inequality \ref{eqn1}-\ref{eqn4}, we have
    \begin{align} \label{eqn5}
        x_{i-2}\leq\frac{p-1+\epsilon}{(1-\epsilon)p+2-\epsilon}\cdot x_{i-1},\quad i=3, 4, ..., k
    \end{align}
    \begin{align} \label{eqn6}
        x_0\leq\frac{p-1+\epsilon}{l[(1-\epsilon)p+2-\epsilon]}\cdot x_1,
    \end{align}
    \begin{align} \label{eqn7}
        x_k \leq \frac{(p-1+\epsilon)(p^k+1)(2p^k+l)}{(p+1)[(2-\epsilon)(2p^k+l)-p^k-1]}\cdot x_0,
    \end{align}
    \begin{align} \label{eqn8}
        x_{k-1}\leq\frac{(p+1)[(2\epsilon-2)p^k+l]}{2(2p^k+l)[(1-\epsilon)p+2-\epsilon]}\cdot x_{k}.
    \end{align}

    Then we set $x_0, x_1, ..., x_k$ to let Inequality \ref{eqn5}, \ref{eqn6}, \ref{eqn7} equal. To prove that the instance does not admit a $(2-\epsilon)$-approximate PNE, we only need to prove
    \begin{align} \label{eqn9}
        [\frac{p-1+\epsilon}{(1-\epsilon)p+2-\epsilon}]^k\cdot\frac{(p^k+1)[(2\epsilon-2)p^k+l]}{2l[(2-\epsilon)(2p^k+l)-p^k-1]}\geq1
    \end{align}
    to ensure there is no $(2-\epsilon)$-approximate PNE in the game. Specifically, we let $p \geq \epsilon+\frac{6}{\epsilon}-5$, let $k\geq \log_{\frac{1}{1-\frac{\epsilon}{2}}}88$ and $l=4p^k$. Then the left side of Inequality \ref{eqn9} is greater than
    \begin{align*}
        (\frac{1}{1-\frac{\epsilon}{2}})^k\cdot\frac{(\epsilon+1)(p^k+1)}{4[(11-6\epsilon)p^k-1]}\geq1.
    \end{align*}
    Therefore, the instance we constructed does not admit a $(2-\epsilon)$-approximate PNE.
\end{proof}

Lastly, we discuss the PoA of asymmetric CAG. One can observe that the proof of upper bound $2$ on PoA in \Cref{lem:up_poa} actually doesn't require any symmetry of agents and nodes. Therefore, as a corollary, over all instances of asymmetric CAG admitting PNEs, we have $\text{PoA}=2$. Nevertheless, we also know that approximate PNE always exists for suitable approximation ratios. If we consider the PoA defined on the range of $\alpha$-approximate PNE, we can get a general conclusion as an extension of Theorem \ref{lem:up_poa}. 
\begin{theorem}\label{thm:general PoA}
    For any instance of $(\vec{\mathcal{S}},\vec{w},\vec{v})$-asymmetric CAG, the PoA for $\alpha$-approximate PNE is upper bounded by $1+\alpha$.
\end{theorem}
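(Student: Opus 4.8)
The plan is to run a smoothness-style argument directly at the level of node coverage, exploiting the fact that in an $\alpha$-approximate PNE a unilateral deviation to the optimal strategy can be blocked only by a multiplicative factor $\alpha$. Fix an instance admitting an $\alpha$-approximate PNE $\vec{S}$, and let $\vec{S}^*$ be a welfare-optimal profile. First I would split the set of nodes covered by $\vec{S}^*$ into those already covered by $\vec{S}$ and the remainder $O=\{j\in N: c(j,\vec{S}^*)>0 \text{ and } c(j,\vec{S})=0\}$. Since the social welfare of a profile is exactly the total value of its covered nodes, this partition immediately yields $\text{SW}(\vec{S}^*)\le \text{SW}(\vec{S})+\sum_{j\in O}v_j$, so the whole task reduces to showing $\sum_{j\in O}v_j\le \alpha\,\text{SW}(\vec{S})$.

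The key step is to charge the lost value to the equilibrium utilities. For each agent $i$, apply the $\alpha$-approximate equilibrium condition with the hypothetical deviation $S_i^*$, giving $U_i(S_i^*,\vec{S}_{-i})\le \alpha U_i(\vec{S})$. On the left-hand side, for every $j\in S_i^*$ the load satisfies $c(j,\vec{S}_{-i})\le c(j,\vec{S})$, hence $\tfrac{w_i}{w_i+c(j,\vec{S}_{-i})}\ge \tfrac{w_i}{w_i+c(j,\vec{S})}$, so $U_i(S_i^*,\vec{S}_{-i})\ge \sum_{j\in S_i^*}\tfrac{w_i}{w_i+c(j,\vec{S})}v_j$. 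Summing over all $i\in A$ and exchanging the order of summation gives
\[\alpha\,\text{SW}(\vec{S})=\alpha\sum_{i\in A}U_i(\vec{S})\ \ge\ \sum_{j\in N}v_j\sum_{i\in A:\,j\in S_i^*}\frac{w_i}{w_i+c(j,\vec{S})}.\]

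To finish, I restrict the right-hand sum to $j\in O$. For such $j$ we have $c(j,\vec{S})=0$, so every term $\tfrac{w_i}{w_i+c(j,\vec{S})}$ equals $1$, and since $j$ is covered under $\vec{S}^*$ the inner sum is at least $1$; therefore $\alpha\,\text{SW}(\vec{S})\ge \sum_{j\in O}v_j$. Combining with the decomposition from the first paragraph yields $\text{SW}(\vec{S}^*)\le (1+\alpha)\,\text{SW}(\vec{S})$, and taking the supremum over instances and approximate equilibria bounds the PoA by $1+\alpha$.

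I do not expect a genuine obstacle here; the argument is short once the right decomposition is chosen. The two points that need care are the direction of the load monotonicity $c(j,\vec{S}_{-i})\le c(j,\vec{S})$ — which is what makes the equilibrium inequality point the useful way — and the observation that the nodes contributing to $O$ are exactly the ones whose equilibrium load vanishes, so there the allocated fraction is a full $1$ rather than a proportion that could shrink to zero. Specializing to $\alpha=1$ recovers the $\text{PoA}\le 2$ bound of \Cref{lem:up_poa} for exact PNE.
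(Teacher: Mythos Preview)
Your argument is correct. Both your proof and the paper's are smoothness-style arguments hinging on the same two ingredients: the $\alpha$-approximate equilibrium inequality for the deviation $S_i\to S_i^*$, summed over agents, together with the monotonicity $c(j,\vec{S}_{-i})\le c(j,\vec{S})$. The routes diverge in how the optimal welfare enters. The paper introduces an auxiliary profile $\vec{S}^{+}=(S_1^*\cup S_1^{\alpha},\ldots,S_m^*\cup S_m^{\alpha})$ with strategies possibly outside $\mathcal{S}_i$, observes $\mathrm{SW}(\vec{S}^*)\le \mathrm{SW}(\vec{S}^{+})$, and then bounds $\mathrm{SW}(\vec{S}^{+})$ by splitting each agent's utility on the union into the $S_i^*$ part (controlled by $\alpha U_i(\vec{S}^{\alpha})$) and the $S_i^{\alpha}\setminus S_i^*$ part (at most $U_i(\vec{S}^{\alpha})$). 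You instead decompose $\mathrm{SW}(\vec{S}^*)$ directly by coverage status under $\vec{S}$ and charge only the uncovered set $O$ against the deviation inequality, exploiting that the proportional share on a node with zero load equals $1$. Your version is a bit more elementary in that it never needs to speak of strategies outside the actual strategy space; the paper's version keeps slightly more of the deviation bound (it does not throw away the terms with $j\notin O$), though this extra tightness is not needed for the stated $1+\alpha$ bound.
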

\begin{proof}
Suppose that $\vec{S}^{\alpha}$ is an $\alpha$-approximate PNE for some $\alpha\geq 1$ and $\vec{S}^*$ is the strategy profile with the optimal social welfare. By the definition of $\alpha$-approximate PNE, for any agent $i\in A$, we have 
\begin{align*}
U_i(S^*_i,\vec{S}^{\alpha}_{-i})\leq \alpha U_i(\vec{S}^{\alpha}).
\end{align*}

We construct a strategy profile as a medium to show this theorem. Assume that each agent $i$ uses the strategy $S^*_i\cup S^{\alpha}_i$ (although it is possibly not in $\mathcal{S}_i$) and define the medium strategy profile as $\vec{S}^{+}=(S^*_1\cup S^{\alpha}_1,\cdots,S^*_m\cup S^{\alpha}_m)$. By the definition of social welfare, i.e., $SW(\vec{S})=\sum_{j\in \bigcup_{i=1}^m S_i}v_j$ and $\bigcup_{i=1}^m S^*_i\subseteq \bigcup_{i=1}^m (S^*_i\cup S^{\alpha}_i)$, it follows that 
\begin{align*}
SW(\vec{S}^*)\leq \sum_{j\in \bigcup_{i=1}^m S^*_i}v_j \leq \sum_{j\in \bigcup_{i=1}^m (S^*_i\cup S^{\alpha}_i)}v_j =SW(\vec{S}^{+}).
\end{align*}

On the other hand, \begin{align}
SW(\vec{S}^{+})&=\sum_{i=1}^m U_i(S^*_i\cup S^{\alpha}_i,\vec{S}^{+}_{-i})\nonumber\\
&\leq \sum_{i=1}^m U_i(S^*_i\cup S^{\alpha}_i,\vec{S}^{\alpha}_{-i})\label{alpha-appx ineq 1}\\
&=\sum_{i=1}^m U_i(S^*_i,\vec{S}^{\alpha}_{-i})+
\sum_{i=1}^m U_i(S^{\alpha}_i \backslash S^*_i,\vec{S}^{\alpha}_{-i})\nonumber \\
&\leq \sum_{i=1}^m U_i(S^*_i,\vec{S}^{\alpha}_{-i})+
\sum_{i=1}^m U_i(S^{\alpha}_i,\vec{S}^{\alpha}_{-i})\nonumber \\
&\leq \sum_{i=1}^m(1+\alpha)U_i(\vec{S}^{\alpha})\label{alpha-appx ineq 2}\\
&=(1+\alpha)SW(\vec{S}^{\alpha}).\nonumber
\end{align}
The inequality (\ref{alpha-appx ineq 1}) is due to that when the strategies of agents except $i$ do not increase, the utility of agent $i$ will not decrease. The inequality (\ref{alpha-appx ineq 2}) is based on the definition of $\alpha$-approximate PNE. Thus, we have $SW(\vec{S}^*)\leq SW(\vec{S}^+)\leq (1+\alpha)SW(\vec{S}^{\alpha})$. 
\end{proof}







\section{The Sequential Game}
\label{Sec:sequential}

In reality, 
the agents may take actions sequentially in most of the situations.  
In these scenarios, the agents who act later may be advantaged, since they can see the actions of the former ones and optimize their own strategies. In this section, we study the CAG in sequential setting. 

In the sequential CAG, an instance is still defined as the tuple $(N,A,\vec{\mathcal{S}},\vec{w},\vec{v})$. Without loss of generality, we assume that the agents are labeled as $A=\{1,\cdots,m\}$ and move in the order of $1,\cdots,m$. In round $i\in\{1, 2, \cdots, m\}$, the agent $i$ observes the strategies selected by agents $1$ to $i-1$, denoted by $\vec{S}_{< i}=(S_1,\cdots,S_{i-1})\in \times_{i'=1}^{i-1}\mathcal{S}_{i'}$ and decides its own strategy $S_i=\sigma_i(\vec{S}_{<i})$. We call $\sigma_i:\times_{i'=1}^{i-1}\mathcal{S}_{i'}\to\mathcal{S}_i$ as the strategy list of agent $i$. Given the strategy lists profile $\vec{\sigma}=(\sigma_1,\cdots,\sigma_m)$, the outcome of $\vec{\sigma}$ is the strategy profile formed in the end, formally defined as $\vec{\alpha}(\vec{\sigma})=(\alpha_1(\vec{\sigma}),\cdots,\alpha_m(\vec{\sigma}))\in\vec{\mathcal{S}}$, such that $\alpha_1(\vec{\sigma})=\sigma_1(\emptyset)$ and $\alpha_i(\vec{\sigma})=\sigma_i(\alpha_1(\vec{\sigma}),\cdots,\alpha_{i-1}(\vec{\sigma}))$ for any $i \in \{2, \cdots, m\}$.

When any prefix $\vec{S}_{<i}$ is fixed for some $i\in A$, a subgame of agents $i,\cdots, m$ is induced naturally, and the outcome in this subgame is denoted by $\vec{\alpha}(\vec{S}_{<i},\sigma_{i}, \cdots, \sigma_m)\in\vec{\mathcal{S}}$. We introduce the subgame perfect equilibrium (SPE), defined as follows.
\begin{definition}
A strategy lists profile $\vec{\sigma}$ is a subgame perfect equilibrium, if and only if for any agent $i\in A$, any prefix $\vec{S}_{<i}\in \times_{i'=1}^{i-1}\mathcal{S}_{i'}$ and any $S'_i\in\mathcal{S}_i$, it holds that
$$U_i\left(\vec{\alpha}(\vec{S}_{<i},\sigma_{i}, \cdots, \sigma_m)\right)\geq U_i\left(\vec{\alpha}\left((\vec{S}_{<i}, S'_{i}), \sigma_{i+1}, \cdots, \sigma_m\right)\right).$$
Let $\text{SPE}(\mathcal{I})$ denote the set of all subgame perfect equilibria of instance $\mathcal{I}$.
\end{definition}

It is well-known that the SPE always exists and can be found by backward induction: when $\sigma_{i+1},\cdots,\sigma_{m}$ is given, agent $i$ can select $\sigma(\vec{S}_{<i})\in\arg\max_{S_i\in\mathcal{S}_i}U_i$ $\left(\vec{\alpha}(\vec{S}_{<i},S_i,\sigma_{i+1}, \cdots, \sigma_m)\right)$ for each $\vec{S}_{<i}\in \times_{i'=1}^{i-1}\mathcal{S}_{i'}$.
Now we come to discuss the computational complexity problem of finding an SPE. Since the description length of an SPE is exponentially large, we only care about finding an SPE outcome. However, this is still computationally difficult, as we can prove that a decision version of it is PSPACE-hard. 
\begin{theorem}\label{thm:sequential pspace hard}
Given an instance $\mathcal{I}$ of sequential CAG, an agent $i\in A$ and an rational number $X$, the problem to decide whether there exists an SPE $\vec{\sigma}\in \emph{SPE}(\mathcal{I})$ such that $U_i(\vec{\alpha}(\vec{\sigma}))\geq X$ is PSPACE-hard.
\end{theorem}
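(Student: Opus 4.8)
To prove PSPACE-hardness I would reduce from a canonical PSPACE-complete game, such as \textsc{True Quantified Boolean Formula} (TQBF) viewed as a two-player game, or more conveniently from a formula-game like $G_{\text{avoid}}$/$\textsc{QSat}$ where two players alternately set the truth values of variables $x_1,\dots,x_k$ and one player wins iff the resulting assignment satisfies a given 3-CNF (or 3-DNF) formula $\phi$. The natural correspondence is: agent $i$ in the sequential CAG plays the role of the player who sets variable $x_i$; the odd-indexed agents are the ``existential'' player and the even-indexed the ``universal'' player, matching the alternation of quantifiers. The target value $X$ and the designated agent $i$ (I would take $i=m$, the last mover, or a fresh final ``scoring'' agent) encode the question ``does the existential player have a winning strategy'', which is exactly the SPE-outcome question, since backward induction on the CAG game tree mirrors the evaluation of the quantified formula.

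\textbf{Key steps, in order.} First I would set up the variable gadget: for each variable $x_i$ create two topics $T_i^{\text{true}}, T_i^{\text{false}}$ available only to agent $i$, so that agent $i$'s single move is literally a choice of a truth value; the attraction ranges of these two topics are disjoint except insofar as they touch the clause gadgets. Second, a clause gadget for each clause $C$ of $\phi$: a node (or small cluster of nodes) $q_C$ that is contained in the attraction range of topic $T_i^{\text{true}}$ (resp.\ $T_i^{\text{false}}$) precisely when the literal $x_i$ (resp.\ $\lnot x_i$) appears in $C$, so that $q_C$ is covered iff $C$ is satisfied by the chosen assignment. Third, and this is the delicate part, I would introduce a scoring mechanism so that some agent's utility directly reads off ``$\phi$ is satisfied''. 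One clean way: add a final agent $m{+}1$ (still in a symmetric-looking strategy space if we want the ``even in symmetric settings'' strengthening) whose only meaningful choice is whether to ``commit'' to the formula being satisfied; give her high-value private nodes so that her utility equals a large constant plus a bonus that is collectable only if \emph{every} clause node $q_C$ is already covered. Then set $X$ equal to that large constant plus the bonus. Fourth, I must handle incentive-compatibility of the earlier agents: each variable-agent $i$ needs large private unit-value (or high-value) nodes $P_i^{\text{true}}, P_i^{\text{false}}$ attached to her two topics so that she is indifferent between her two moves on her own account (her private payoff is the same either way), forcing her strategy in any SPE to be driven entirely by the downstream formula-game effect — this is how we make the CAG subgame-perfect play simulate optimal play in the quantified formula game. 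Since the paper's model has each agent with her \emph{own} utility (not zero-sum), the reduction must be from a two-player \emph{zero-sum} or at least fully-adversarial formula game; I would use the fact that with the indifference gadget, an agent for the universal player, among moves that are equal for her privately, will — since backward induction fixes \emph{some} optimal continuation — not be forced to help the existential player, and by choosing the formula game appropriately (e.g.\ one where the adversary strictly prefers to falsify, achievable by giving the universal-player agents a tiny extra reward that is collectable exactly when a clause goes uncovered) we can make the adversary genuinely adversarial while keeping all payoffs polynomially bounded. Fifth, verify membership and complexity: the SPE-outcome decision problem is in PSPACE because the game tree has polynomial depth $m$ and bounded branching, so it can be evaluated by a polynomial-space alternating recursion; and the construction above is clearly polynomial-time. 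Finally, argue correctness in both directions: an SPE outcome achieving $U_{m+1}\ge X$ exists iff the existential player wins the formula game iff the QBF is true.

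\textbf{Main obstacle.} The crux is the \emph{zero-sum-to-general-game gap}: the formula game is adversarial, but CAG agents are self-interested, not spiteful, so I must engineer the payoff structure (via the private indifference nodes and small tie-breaking rewards pointing the ``universal'' agents toward falsifying clauses, and the ``existential'' agents toward satisfying them) so that every agent's \emph{privately optimal} play in every subgame coincides with adversarial/cooperative play in the formula game, and crucially that this holds for \emph{every} SPE (since the theorem quantifies over existence of an SPE with a given outcome). Managing the proportional allocation rule so that these small tie-breaking incentives are strict but do not accidentally make some agent prefer to abandon her intended topic — i.e.\ keeping all the indifferences and strict preferences consistent across the whole tree — is where the careful bookkeeping lies; the fix is to make the private-node values dominate all clause-node values, and the tie-breaking rewards in turn dominated by nothing but still strictly positive, so the lexicographic structure ``private payoff $\gg$ formula outcome $\gg$ tie-break'' is enforced by magnitude separation, all within polynomial size because the tree depth is only $m$.
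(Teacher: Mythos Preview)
Your high-level plan --- reduce from TQBF, one agent per quantified variable with a binary choice encoding the truth value, then read off the formula's value from some later agent's utility --- is exactly the paper's approach. You also correctly identify the crux: since CAG agents are self-interested rather than adversarial, one must engineer the payoffs so that existential agents genuinely want $\phi$ to come out true and universal agents genuinely want it false.

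The gap is in your proposed fix for that crux. You suggest making variable agents privately indifferent and then adding a ``tiny extra reward that is collectable exactly when a clause goes uncovered'' for universal agents. But in the CAG model an agent's utility depends only on the loads of the nodes \emph{in her own chosen strategy}; there is no direct way for agent $i$'s payoff to depend on whether some clause node she does \emph{not} attract is covered. If instead you put clause nodes into the variable agents' strategies (your clause-gadget idea), then (i) the two strategies $T_i^{\text{true}}, T_i^{\text{false}}$ will generally contain different numbers of clause nodes, breaking the indifference you need, and (ii) a universal agent who attracts a clause node herself contributes to that clause being covered, which works against the incentive you want. Your ``lexicographic magnitude separation'' handles the \emph{sizes} of the rewards but does not supply the missing \emph{structural} link between a variable agent's payoff and the downstream verdict. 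Relatedly, relying on exact indifference is dangerous here: the theorem asks whether \emph{some} SPE achieves $U_i\ge X$, so if variable agents are truly indifferent there may be many SPEs and the equivalence with the QBF value can collapse.

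The paper resolves this with a single elegant device: two ``team'' nodes $q^{\exists}$ and $q^{\forall}$. Every existential variable-agent's strategies both contain $q^{\exists}$; every universal variable-agent's strategies both contain $q^{\forall}$. After the variable agents, an adversarial agent $n{+}1$ picks one clause (by leaving exactly one clause node uncovered), and then a final verifying agent $n{+}2$ has two kinds of strategies: if the picked clause is satisfied she plays a strategy that includes $q^{\forall}$, otherwise her best strategy includes $q^{\exists}$. Thus the verdict of the formula game changes the load on exactly one of $q^{\exists},q^{\forall}$ by one, which strictly shifts the utility of \emph{every} existential agent one way and every universal agent the other way. All preferences become strict, all SPEs give the same outcome, and the reduction goes through with unit-value nodes and a symmetric instance. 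The two-stage ``clause picker then verifier'' pair replaces your single scoring agent and avoids having to check all clauses simultaneously.
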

\begin{proof}
We give a reduction from TQBF \cite{LS80}, which is a PSPACE-complete problem. The TQBF problem asks the value of a fully quantified boolean formula $$\Psi=Q_1x_1Q_2x_2\cdots Q_nx_n\psi(x_1,\cdots,x_n),$$
where each $Q_i$ is a quantifier $\exists$ or $\forall$, and each $x_i\in\{0,1\}$ is a boolean varaiable, and $\psi(x_1,\cdots,x_n)$ is a boolean formula. 

For simplicity, we use $\vec{x}$ to denote $(x_1,\cdots,x_n)$. Without loss of generality, we assume that $n=2n'+1$ for some $n'\in\mathbb{Z}_{\geq 1}$, that $Q_1=Q_3=\cdots=Q_n=\exists$, $Q_2=Q_4=\cdots=Q_{n-1}=\forall$, and that $\psi(\vec{x})$ is written as a 3-CNF $\wedge_{i=1}^{n^C}(y_{i,1}\vee y_{i,2} \vee y_{i,3})$, where each $C_i(\vec{x})=(y_{i,1}\vee y_{i,2} \vee y_{i,3})$ is a clause, and each $y_{i,j}$ is a literal in the form of $x_{t_{i,j}}$ or $\neg x_{t_{i,j}}$.

We construct an instance $\mathcal{I}=(N,A,\vec{\mathcal{S}},\vec{w},\vec{v})$ of sequential CAG  as follows:
\begin{itemize}
    \item Define the node set $N=\{q^{\exists},q^{\forall}\}\cup N^{clause}\cup N^{literal}\cup N^{dummy}$, where $N^{clause}=\{q^{C}_i:i=1,\cdots,n^C\}$ corresponds with the clauses, $N^{literal}=\{q^{x_1},q^{\neg x_1},\cdots,q^{x_n},q^{\neg x_n}\}$ corresponds with the literals, and $N^{dummy}=\{q^D_0,q^D_1,q^D_2,q^D_3\}$.
    \item Define the agent set $A=\{1,\cdots,m\}$ with $m=n+5$.
    \item Construct strategy spaces $\vec{\mathcal{S}}$ as follows:
\begin{itemize}
    \item For agent $i=1,3,\cdots,n$, define $\mathcal{S}_i=\{\{q^{\exists},q^{x_i}\},\{q^{\exists},q^{\neg x_i}\}\}$;
    \item For agent $i=2,4,\cdots,n-1$, define $\mathcal{S}_i=\{\{q^{\forall},q^{x_i}\},\{q^{\forall},q^{\neg x_i}\}\}$;
    \item For agent $n+1$, define $\mathcal{S}_{n+1}=\{s_{n+1,j}:j=1,\cdots,n^C\}$, where $s_{n+1,j}=\{q^{\forall},q^C_1,q^C_2,\cdots,q^C_{n^C}\}\setminus\{q^C_j\}$;
    \item For agent $n+2$, define $\mathcal{S}_{n+2}=\{s^{T}_{n+2,j,k}:j=1,\cdots,n^C,k=1,2,3\}\cup\{s^{F}_{n+2}\}$, where $s^{T}_{n+2,j,k}=\{q^{\forall},q^C_j,q^{\neg y_{j,k}}\}$, $s^{F}_{n+2}=\{q^{\exists},q^D_0,q^D_1,q^D_2,q^D_3\}$.
    \item For each agent $i=n+3,n+4,n+5$, let $i$ be a dummy agent with $\mathcal{S}_i=\{\{q^D_1,q^D_2,q^D_3\}\}$.
\end{itemize}
    \item All agents and nodes have unit weights and unit values respectively, i.e., $w_i=1$ for all $i\in A$ and $v_j=1$ for all $j\in N$.
\end{itemize}

Note that the construction of $\mathcal{I}$ is polynomial-time computable. Intuitively, each agent $i=1,\cdots,n$ chooses the value of the variable $x_i$, where attracting $q^{x_i}$ indicates setting $x_i=1$, and attracting $q^{\neg x_i}$ indicates setting $x_i=0$. The agent $n+1$ tries to pick a false clause $C_j$ by selecting the strategy $s_{n+1,j}$. If the picked clause $C_j$ is true, agent $n+2$ will choose some $s^{T}_{n+2,j,k}$ where the literal $y_{j,k}$ is true, otherwise it will choose $s^{F}_{n+2}$. We will see that agents $1,3,\cdots,n$ try to prevent agent $n+2$ from choosing $s^F_{n+2}$, while agents $2,4,\cdots,n-1$ and $n+1$ try to prevent agent $n+2$ from choosing any $s^T_{n+2,j,k}$.

Let $\vec{\sigma}$ be an SPE of $\mathcal{I}$.
We first discuss the equilibrium strategy list $\sigma_{n+2}$ of agent $n+2$. Given any prefix $\vec{S}_{<n+2}\in\times_{i=1}^{n+1}\mathcal{S}_{i}$, an assignment $\vec{x}\in\{0,1\}^n$ and a clause $C_j$ are induced. The suffix $\vec{S}_{n+3,\cdots,n+5}$ is trivially fixed as they are dummy players. We calculate the utility of agent $n+2$, given its strategy:

(a) If $S_{n+2}=s^{F}_{n+2}$ and $\vec{S}=(\vec{S}_{<n+2},s^{F}_{n+2},\vec{S}_{n+3,\cdots,n+5})$, we have  $c(q^{\exists},\vec{S})=n'+2$ and the utility of agent $n+2$ is   $U_{n+2}(\vec{S})=\frac1{n'+2}+1+\frac14+\frac14+\frac14=\frac74+\frac1{n'+2};$

(b) If $S_{n+2}=s^{T}_{n+2,j',k}$, where $j'\neq j$, 
we know $q^C_{j'}$ is also covered by agent $n+1$, and $c(q^{\forall},\vec{S})=n'+2$. The utility of agent $n+2$ is $U_{n+2}(\vec{S})\leq \frac1{n'+2}+\frac12+1=\frac32+\frac1{n'+2};$

(c.1) If $S_{n+2}=s^{T}_{n+2,j,k}$ and $y_{j,k}=1$, 
we have $c(q^C_j,\vec{S})=1$, $c(q^{\neg y_{j,k}},\vec{S})=1$, and $c(q^{\forall},\vec{S})=n'+2$. The utility of agent $n+2$ can be calculated as $U_{n+2}(\vec{S})\leq \frac1{n'+2}+1+1=2+\frac1{n'+2}$;

(c.2) If $S_{n+2}=s^{T}_{n+2,j,k}$ and $y_{j,k}=0$, 
we have $c(q^C_j,\vec{S})=1$, $c(q^{\neg y_{j,k}},\vec{S})=2$, and $c(q^{\forall},\vec{S})=n'+2$. In this case, The utility of agent $n+2$ can be presented as $U_{n+2}(\vec{S})\leq \frac1{n'+2}+1+\frac12=\frac32+\frac1{n'+2}$;

By comparing these cases, we can observe that agent $n+2$ gets strictly better utility in case (a) than case (b) and case (c.2), which implies that only case (a) or case (c.1) will happen in the SPE outcome. Furthermore, when $C_j(\vec{x})=0$, we have $y_{j,1}=y_{j,2}=y_{j,3}=0$, which suggests that case (c.1) is impossible in this condition. By the definition of SPE, agent $n+2$ goes for case (a) to maximize its utility and we have $\sigma_{n+2}(\vec{S}_{<n+2})=s^{F}_{n+2}$. When $C_j(\vec{x})=1$, it holds $y_{j,1}\vee y_{j,2}\vee y_{j,3}=1$, so case (c.1) is possible for agent $n+2$. Therefore, we have $\sigma_{n+2}(\vec{S}_{<n+2})=s^{T}_{n+2,j,k}$, for some $k\in\{1,2,3\}$, such that $y_{j,k}=1$.

Recall that $c(q^{\exists},\vec{S})=n'+2$ and $c(q^{\forall},\vec{S})=n'+1$ in case (a), while the corresponding values $c(q^{\exists},\vec{S})=n'+1$ and $c(q^{\forall},\vec{S})=n'+2$ in case (c.1). Then we calculate the utilities of other agents: 

\begin{itemize}
    \item For agent $i=1,3,\cdots,n$, $U_{i}(\vec{S})=\frac1{n'+2}+1$ in case (a) and $U_{i}(\vec{S})=\frac1{n'+1}+1$ in case (c.1);
    \item For agent $i=2,4,\cdots,n-1$, $U_{i}(\vec{S})=\frac1{n'+1}+1$ in case (a) and $U_{i}(\vec{S})=\frac1{n'+2}+1$ in case (c.1);
    \item For agent $n+1$, $U_{n+1}(\vec{S})=\frac1{n'+1}+n^C-1$ in case (a)  and $U_{n+1}(\vec{S})=\frac1{n'+2}+n^C-1$ in case (c.1).
\end{itemize}

Let $\Psi_i(x_1,\cdots,x_{i-1})$ denote $Q_ix_iQ_{i+1}x_{i+1}\cdots Q_nx_n\psi(x_1,\cdots,x_n)$ for $i=1,2,\cdots,n$, and $\Psi_{n+1}(x_1,\cdots,x_n)=\psi(x_1,\cdots,x_n)$. Now we use backward induction to prove that $U_1(\vec{S}(\vec{\sigma}))=\frac1{n'+1}+1$ if $\Psi=1$, and $U_1(\vec{S}(\vec{\sigma}))=\frac1{n'+2}+1$ if $\Psi=0$.

For agent $n+1$ and any prefix $\vec{S}_{<n+1}$, let $\vec{x}$ denote the induced assignment. If $\psi(\vec{x})=1$, for any choice of $\sigma_{n+1}(\vec{S})=s_{n+1,j}$, it holds that $C_j(\vec{x})=1$ and the agent $n+2$ will always choose case (c.1). If $\psi(\vec{x})=0$, we have $\sigma_{n+1}(\vec{S})=s_{n+1,j}$ for some $j$ such that $C_j(\vec{x})=0$, and agent $n+2$ chooses case (a).

For agents $i=n,n-1,\cdots,1$ and any prefix $\vec{S}_{<i}$, let $\vec{x}_{<i}$ denote the induced assignment of $x_1,\cdots,x_{i-1}$. We assume that case (c.1) happens if $\Psi_{i+1}(x_1,\cdots,x_i)=1$, and case (a) happens if $\Psi_{i+1}(x_1,\cdots,x_i)=0$. Now we consider agent $i$, if $i$ is odd, agent $i$ gets a higher utility in case (c.1) than in case (a). Consequently, agent $i$ tries to set $x_i$ so that $\Psi_{i+1}(x_1,\cdots,x_i)=1$ and this is possible if and only if $(\exists x_i\Psi_{i+1}(x_1,\cdots,x_i))=1$, which is equivalent with $\Psi_{i}(x_1,\cdots,x_{i-1})=1$. If $i$ is even, agent $i$ gets a higher utility in case (a) than in case (c.1). Therefore, agent $i$ tries to set $x_i$ so that $\Psi_{i+1}(x_1,\cdots,x_i)=0$ and this is possible if and only if $(\forall x_i\Psi_{i+1}(x_1,\cdots,x_i))=0$, which is equivalent to $\Psi_{i}(x_1,\cdots,x_{i-1})=0$. To conclude, we get that case (c.1) happens if $\Psi_{i}(x_1,\cdots,x_{i-1})=1$, and (a) happens otherwise.

By induction, we know that case (c.1) happens if and only if $\Psi_{1}(\emptyset)=1$. It follows that $U_1(\vec{S}(\vec{\sigma}))\geq\frac1{n'+1}+1$ if and only if $\Psi=1$ . Therefore, we prove that the problem is a reduction from TQBF, which suggests that it is  PSPACE-hard.
\end{proof}
Even though it is PSPACE-hard to compute an SPE, we are still curious about the efficiency of the SPE. Similar to the concept of PoA, for the sequential CAG, we define the sequential price of anarchy (sPoA) (\cite{LST12}) as
$$\text{sPoA}=\sup_{\mathcal{I}}\max_{\vec{\sigma}\in \text{SPE}(\mathcal{I})}\frac{\text{SW}(\vec{S}^{*})}{\text{SW}(\vec{\alpha}(\vec{\sigma}))},$$
where $\vec{S}^{*}$ is the optimal strategy profile in term of the social welfare. In this section, we discuss the tight sPoA for two-agent setting and provide a lower bound of sPoA for general $n$-agent setting. 
\begin{theorem}\label{thm:sPoA two agents}
For any instance $\mathcal{I}$ of sequential CAG with two symmetric agents, i.e., $|A|=2$ and $w_1=w_2=1$, the sequential price of anarchy is $\frac{3}{2}$. When $|A|>2$, the sequential price of anarchy is not less than $2-1/|A|$.
\end{theorem}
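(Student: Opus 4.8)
## Proof Proposal for Theorem~\ref{thm:sPoA two agents}

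The plan is to handle the two halves of the statement separately. For the upper bound $\mathrm{sPoA}\le 3/2$ in the two-agent symmetric case, I would argue directly from the structure of the backward-induction outcome. Fix an instance $\mathcal{I}$ with $A=\{1,2\}$ and $w_1=w_2=1$, and let $\vec\sigma$ be any SPE with outcome $\vec\alpha(\vec\sigma)=(S_1,S_2)$; let $\vec S^*=(S_1^*,S_2^*)$ be a social-welfare optimum, with $\mathrm{OPT}=\mathrm{SW}(\vec S^*)=\sum_{j\in S_1^*\cup S_2^*} v_j$. The key observation is that in an SPE agent~2 best-responds to $S_1$, so $U_2(S_1,S_2)\ge U_2(S_1,S_2^*)$; and agent~1, anticipating agent~2's best response, does at least as well as it would by playing $S_1^*$. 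I would first lower-bound $U_2(S_1,S_2)$: since agent~2 could deviate to $S_2^*$, it collects at least the value of the nodes in $S_2^*$ not shared with $S_1$, plus half the shared ones, so $U_2(S_1,S_2)\ge \sum_{j\in S_2^*\setminus S_1} v_j + \tfrac12\sum_{j\in S_2^*\cap S_1} v_j \ge \tfrac12 \sum_{j\in S_2^*} v_j$. Similarly I would lower-bound $U_1$ by considering agent~1's deviation to $S_1^*$ and the worst-case response of agent~2, getting $U_1(S_1,S_2)\ge \tfrac12\sum_{j\in S_1^*} v_j$ (the factor $\tfrac12$ is the worst sharing loss with a single opponent). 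Then $\mathrm{SW}(S_1,S_2)=U_1+U_2\ge \tfrac12\big(\sum_{j\in S_1^*}v_j+\sum_{j\in S_2^*}v_j\big)\ge \tfrac12\,\mathrm{OPT}$, which only gives $2$; to sharpen this to $3/2$ I would additionally use that the set of nodes $S_1\cup S_2$ actually covered in the SPE satisfies $\mathrm{SW}(S_1,S_2)=\sum_{j\in S_1\cup S_2}v_j$, and combine the covering viewpoint with the best-response inequalities more carefully — e.g.\ splitting $\mathrm{OPT}$ into the part covered by $\{S_1,S_2\}$ and the part not covered, and showing an uncovered optimal node would give a profitable deviation for whichever agent. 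This refinement is the main obstacle: getting from the easy bound $2$ down to the tight $3/2$ requires exploiting the sequential (leader) structure, not just the Nash conditions.

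For the matching lower bound $\mathrm{sPoA}\ge 3/2$ when $|A|=2$, I would exhibit a small explicit instance: take topics whose attraction ranges force agent~1 (the leader) into a strategy that overlaps agent~2's eventual best response on a node, so that the SPE covers total value $2$ while the optimum covers $3$. Concretely, three unit-value nodes $\{a,b,c\}$ with topics $s_1=\{a\}$, $s_2=\{a,b\}$, $s_3=\{b,c\}$ (or similar), chosen so the leader, foreseeing the follower's response, cannot induce the full cover; a short case check on the two leader choices yields SW $=2$ in the SPE against OPT $=3$. I would present the payoff table for the induced $2$-choice leader game and verify the backward-induction outcome.

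For the general-$n$ lower bound $\mathrm{sPoA}\ge 2-1/|A|$, I would generalize the gadget: build an instance with $m=|A|$ agents and a family of topics so that the backward-induction outcome ``wastes'' the later agents — e.g.\ a popular topic covering one high-value node that all agents are driven toward, together with $m$ disjoint singleton topics forming the optimum. Arrange the attraction ranges and node values so that in the SPE every agent prefers to pile onto the shared node (each getting $1/k$ of a value-$m$ node is still better than its available private alternative given what earlier agents did), giving $\mathrm{SW}^{\mathrm{SPE}} \approx m$ via one node of value $\approx m$, while the optimum spreads out to cover total value $\approx 2m-1$; the ratio tends to $2-1/m$. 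The delicate point here is making the incentives consistent along the whole backward induction — I would verify inductively that for each $i$, given the earlier agents' choices, agent~$i$'s SPE move is the ``bad'' one, which is where the precise numerical choice of node values enters. I expect this inductive consistency check (rather than the arithmetic of the ratio) to be the main technical work in the lower-bound half.
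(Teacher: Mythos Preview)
Your lower-bound construction for $|A|>2$ is essentially the paper's (the paper uses $m$ unit-value nodes in the shared strategy rather than one value-$m$ node, but the arithmetic and the tie-breaking verification are the same). However, there are two genuine gaps.

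\textbf{Upper bound $3/2$.} You correctly observe that the two naive best-response bounds $U_1\ge\tfrac12\sum_{j\in S_1^*}v_j$ and $U_2\ge\tfrac12\sum_{j\in S_2^*}v_j$ only give $\mathrm{sPoA}\le 2$, and you flag the sharpening to $3/2$ as ``the main obstacle'' without supplying it. The vague plan of ``splitting OPT into covered and uncovered parts'' does not by itself close the gap: an uncovered optimal node only tells you some agent could gain by deviating, which is exactly the information already encoded in the two inequalities above. The paper's argument is structurally different: it works through the potential function $\Phi$ and, crucially, introduces the \emph{hypothetical follower response} $S_2'=\sigma_2(S_1^*)$ to the optimal leader move. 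Three potential inequalities are chained --- one from $U_2(S_1^*,S_2')\ge U_2(S_1^*,S_2^*)$, one from the leader condition $U_1(S_1^{SPE},S_2^{SPE})\ge U_1(S_1^*,S_2')$, and one from $U_2(S_1^{SPE},S_2^{SPE})\ge U_2(S_1^{SPE},S_2')$ --- together with the elementary bound $\Phi(S_1^{SPE},S_2')-\Phi(S_1^{SPE},\emptyset)\ge \Phi(\emptyset,S_2')-\tfrac12|S_1^{SPE}|$. The extra $\tfrac12|S_1^{SPE}|$ term is what turns the bound from $2$ into $3/2$; your outline never produces such a term because it never compares $S_2'$ against $S_1^{SPE}$.

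\textbf{Two-agent lower bound.} Your concrete instance $s_1=\{a\},\,s_2=\{a,b\},\,s_3=\{b,c\}$ does \emph{not} achieve ratio $3/2$: a quick check shows that for every leader move the unique follower best response yields $\mathrm{SW}=3=\mathrm{OPT}$. The paper's tight example relies on \emph{ties}: with $\mathcal{S}=\{\{q_1,q_2\},\{q_3\}\}$, after agent~1 plays $\{q_1,q_2\}$ agent~2 is indifferent between the two strategies (utility $1$ either way), so there is an SPE in which agent~2 adversarially also plays $\{q_1,q_2\}$, giving $\mathrm{SW}=2$ versus $\mathrm{OPT}=3$. Any two-agent instance hitting $3/2$ must exploit tie-breaking in this way; your proposed gadget has strict best responses everywhere and therefore cannot work.
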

\begin{proof}
Let $\vec{\sigma}$ be an SPE and $\vec{S}^{SPE}=(S_1^{SPE},S_2^{SPE})$ be the outcome of $\vec{\sigma}$. For any $\vec{S}\in\vec{\mathcal{S}}$, we prove that $\frac32 \text{SW}(\vec{S}^{SPE})\geq \text{SW}(\vec{S})$. 

First, for any $S_1\in\mathcal{S}_1$ and $S_2\in\mathcal{S}_2$, recall the potential function defined in Section \ref{Sec:basic} and observe that $U_1(S_1,S_2)=\Phi(S_1,S_2)-\Phi(\emptyset,S_2)$ and $U_2(S_1,S_2)=\Phi(S_1,S_2)-\Phi(S_1,\emptyset)$. 
For any $\vec{S}=(S_1,S_2)\in\vec{\mathcal{S}}$, let $S'_2=\sigma_2(S_1)$. Since $\vec{\sigma}$ is an SPE, by the properties of SPE, we can get the following inequalities:
\begin{align}
\Phi(S_1,S'_2)&\geq \Phi(S_1,S_2) \label{ineq:SPOA1}\\
\Phi(S_1^{SPE},S_2^{SPE})-\Phi(\emptyset,S_2^{SPE})&\geq \Phi(S_1,S'_2)-\Phi(\emptyset,S'_2) \label{ineq:SPOA2}\\
\Phi(S_1^{SPE},S_2^{SPE})&\geq \Phi(S_1^{SPE},S'_2)\label{ineq:SPOA3}
\end{align}
where inequality (\ref{ineq:SPOA1}) is due to that $U_2(S_1,S'_2)\geq U_2(S_1,S_2)$. Inequality (\ref{ineq:SPOA2}) is from $U_1(S_1^{SPE},$ $S_2^{SPE})\geq U_1(S_1,S'_2)$. Inequality (\ref{ineq:SPOA3}) holds because $U_2(S_1^{SPE},S_2^{SPE})\geq U_2(S_1^{SPE},S'_2)$.

Now observe that $$\Phi(S_1^{SPE},S'_2)-\Phi(S_1^{SPE},\emptyset)=|S'_2\setminus S_1^{SPE}|+\frac12|S_1^{SPE}\cap S'_2|\geq |S'_2|-\frac12|S_1^{SPE}|=\Phi(\emptyset,S'_2)-\frac12|S_1^{SPE}|.$$ Adding this with inequality (\ref{ineq:SPOA2}) and (\ref{ineq:SPOA3}), we get 
\begin{align*}
     2\Phi(S_1^{SPE},S_2^{SPE})-\Phi(S_1^{SPE},\emptyset)-\Phi(\emptyset,S_2^{SPE}) 
     \geq \Phi(S_1,S_2')-\frac12|S_1^{SPE}|.
\end{align*}
Then we have 
\begin{align*}
&\frac32 \text{SW}(S_1^{SPE},S_2^{SPE}) \geq U_1(S_1^{SPE},S_2^{SPE})+U_2(S_1^{SPE},S_2^{SPE})+ \frac12|S_1^{SPE}| \\
=&2\Phi(S_1^{SPE},S_2^{SPE})-\Phi(S_1^{SPE},\emptyset)-\Phi(\emptyset,S_2^{SPE})+ \frac12|S_1^{SPE}| \\
\geq&\Phi(S_1,S_2')\geq \Phi(S_1,S_2)\geq \text{SW}(S_1,S_2).
\end{align*}
Therefore, when $\vec{S}=\vec{S}^*$, we obtain sPoA$(\mathcal{I})\leq 3/2$.

To show the tightness, we provide an example whose sPoA is exactly $3/2$. Consider a sequential symmetric CAG with two agent $A=\{1,2\}$, where $N=\{q_1,q_2,q_3\}$, $s_1=\{q_1,q_2\},s_2=\{q_3\}$ and $\mathcal{S}_1=\mathcal{S}_2=\{s_1,s_2\}$. Let $\sigma_1(\emptyset)=s_1$, $\sigma_2(s_1)=s_1$, $\sigma_2(s_2)=s_1$, one can check that $\vec{\sigma}$ is an SPE, with the outcome $\vec{S}^{SPE}=(s_1,s_1)$. However, we have $\text{SW}(\vec{S}^{SPE})=2$ while the optimal social welfare equals to $\text{SW}(s_1,s_2)=3$, which means sPoA$=3/2$.

When $|A|>2$, by extending the example above, we give a lower bound of $2-\frac1{|A|}$ on sPoA, which asymptotically approaches to $2$.
\begin{example}
For $m\geq 2$, consider a sequential symmetric CAG, where $N=\{q_1,\cdots,q_{2m-1}\}$, $A=\{1,\cdots,m\}$, and for any $i \in A$, $\mathcal{S}_i=\{\{q_1,\cdots,q_m\},\{q_{m+1}\},\cdots,\{q_{2m-1}\}\}$. One can check that there is an SPE outcome where all agents selects $\{q_1,\cdots,q_m\}$, generating a social welfare of $m$. However, the optimal social welfare is $|N|=2m-1$. Therefore, the sPoA is equal to $\frac{2m-1}{m}=2-\frac1{|A|}$. 
\end{example} 
\end{proof}
\section{Conclusions and Future Work}
\label{Sec:conclusion}
In this paper, we focus on the customer attraction game in three settings. In each setting, we study the existence of pure strategy equilibrium and hardness of finding an equilibrium. Concerned about loss by competition, we also give the (s)PoA of each case.

We propose some research directions: firstly, for the asymmetric static game, we can continue exploring the sufficient conditions to guarantee the existence of PNE. Secondly, 
the result on the sPoA of the sequential game for more tha two agents is still unknown. We conjecture that the sPoA is upper-bounded by 2 when the number of agents is more than two. In addition, the weighted sequential game is also a direction for future research.


\bibliographystyle{splncs04} 
\balance
\bibliography{main}

\begin{thebibliography}{10}
\providecommand{\url}[1]{\texttt{#1}}
\providecommand{\urlprefix}{URL }
\providecommand{\doi}[1]{https://doi.org/#1}

\bibitem{ACCG04}
Ahn, H.K., Cheng, S.W., Cheong, O., Golin, M., Van~Oostrum, R.: Competitive facility location: the voronoi game. Theoretical Computer Science  \textbf{310}(1-3),  457--467 (2004)

\bibitem{ADGM11}
Aland, S., Dumrauf, D., Gairing, M., Monien, B., Schoppmann, F.: Exact price of anarchy for polynomial congestion games. SIAM Journal on Computing  \textbf{40}(5),  1211--1233 (2011)

\bibitem{ADKTTR08}
Anshelevich, E., Dasgupta, A., Kleinberg, J., Tardos, {\'E}., Wexler, T., Roughgarden, T.: The price of stability for network design with fair cost allocation. SIAM Journal on Computing  \textbf{38}(4),  1602--1623 (2008)

\bibitem{ACLLW20}
Aziz, H., Chan, H., Lee, B., Li, B., Walsh, T.: Facility location problem with capacity constraints: Algorithmic and mechanism design perspectives. In: Proceedings of the AAAI Conference on Artificial Intelligence. vol.~34, pp. 1806--1813 (2020)

\bibitem{BBDS15}
Bandyapadhyay, S., Banik, A., Das, S., Sarkar, H.: Voronoi game on graphs. Theoretical Computer Science  \textbf{562},  270--282 (2015)

\bibitem{BT17}
Ben-Porat, O., Tennenholtz, M.: Shapley facility location games. In: International Conference on Web and Internet Economics. pp. 58--73. Springer (2017)

\bibitem{BGM23}
Bil{\`o}, V., Gourv{\`e}s, L., Monnot, J.: Project games. Theoretical Computer Science  \textbf{940},  97--111 (2023)

\bibitem{B10}
Brenner, S.: Location (hotelling) games and applications. Wiley Encyclopedia of Operations Research and Management Science  (2010)

\bibitem{BC15}
Brill, M., Conitzer, V.: Strategic voting and strategic candidacy. In: Proceedings of the AAAI conference on artificial intelligence. vol.~29 (2015)

\bibitem{CFLLW21}
Chan, H., Filos-Ratsikas, A., Li, B., Li, M., Wang, C.: Mechanism design for facility location problems: A survey. In: 30th International Joint Conference on Artificial Intelligence. pp. 4356--4365. International Joint Conferences on Artificial Intelligence Organization (2021)

\bibitem{CN09}
Chawla, S., Niu, F.: The price of anarchy in bertrand games. In: Proceedings of the 10th ACM conference on Electronic commerce. pp. 305--314 (2009)

\bibitem{CGMMP20}
Chen, C., Giessler, P., Mamageishvili, A., Mihal{\'a}k, M., Penna, P.: Sequential solutions in machine scheduling games. In: International Conference on Web and Internet Economics. pp. 309--322. Springer (2020)

\bibitem{CR09}
Chen, H.L., Roughgarden, T.: Network design with weighted players. Theory of Computing Systems  \textbf{45}(2),  302--324 (2009)

\bibitem{CK05}
Christodoulou, G., Koutsoupias, E.: On the price of anarchy and stability of correlated equilibria of linear congestion games. In: European Symposium on Algorithms. pp. 59--70. Springer (2005)

\bibitem{CP19}
Cohen, A., Peleg, D.: Hotelling games with random tolerance intervals. In: International Conference on Web and Internet Economics. pp. 114--128. Springer (2019)

\bibitem{CV07}
Czumaj, A., V{\"o}cking, B.: Tight bounds for worst-case equilibria. ACM Transactions on Algorithms (TALG)  \textbf{3}(1),  1--17 (2007)

\bibitem{JU14}
De~Jong, J., Uetz, M.: The sequential price of anarchy for atomic congestion games. In: International Conference on Web and Internet Economics. pp. 429--434. Springer (2014)

\bibitem{DEG22}
Deligkas, A., Eiben, E., Goldsmith, T.: Parameterized complexity of hotelling-downs with party nominees. IJCAI (2022)

\bibitem{D57}
Downs, A.: An economic theory of political action in a democracy. Journal of political economy  \textbf{65}(2),  135--150 (1957)

\bibitem{DT07}
D{\"u}rr, C., Thang, N.K.: Nash equilibria in voronoi games on graphs. In: European Symposium on Algorithms. pp. 17--28. Springer (2007)

\bibitem{E11}
Eiselt, H.A.: Equilibria in competitive location models. Foundations of location analysis pp. 139--162 (2011)

\bibitem{ELT93}
Eiselt, H.A., Laporte, G., Thisse, J.F.: Competitive location models: A framework and bibliography. Transportation science  \textbf{27}(1),  44--54 (1993)

\bibitem{FGLLM16}
Faliszewski, P., Gourv{\`e}s, L., Lang, J., Lesca, J., Monnot, J.: How hard is it for a party to nominate an election winner? In: Twenty-Fifth International Joint Conference on Artificial Intelligence, IJCAI 2016. pp. 257--263 (2016)

\bibitem{FFG16}
Feldman, M., Fiat, A., Golomb, I.: On voting and facility location. In: Proceedings of the 2016 ACM Conference on Economics and Computation. pp. 269--286 (2016)

\bibitem{FFO16}
Feldman, M., Fiat, A., Obraztsova, S.: Variations on the hotelling-downs model. In: Thirtieth AAAI Conference on Artificial Intelligence (2016)

\bibitem{FP21}
Fotakis, D., Patsilinakos, P.: Strategyproof facility location in perturbation stable instances. In: International conference on web and internet economics. pp. 95--112. Springer (2021)

\bibitem{G09}
Gairing, M.: Covering games: Approximation through non-cooperation. In: International Workshop on Internet and Network Economics. pp. 184--195. Springer (2009)

\bibitem{GKK20}
Gairing, M., Kollias, K., Kotsialou, G.: Existence and efficiency of equilibria for cost-sharing in generalized weighted congestion games. ACM Transactions on Economics and Computation (TEAC)  \textbf{8}(2),  1--28 (2020)

\bibitem{GLMT04}
Goemans, M., Li, L.E., Mirrokni, V.S., Thottan, M.: Market sharing games applied to content distribution in ad-hoc networks. In: Proceedings of the 5th ACM international symposium on Mobile ad hoc networking and computing. pp. 55--66 (2004)

\bibitem{HK12}
Harks, T., Klimm, M.: On the existence of pure nash equilibria in weighted congestion games. Mathematics of Operations Research  \textbf{37}(3),  419--436 (2012)

\bibitem{HLST21}
Harrenstein, P., Lisowski, G., Sridharan, R., Turrini, P.: A hotelling-downs framework for party nominees. In: Proceedings of the 20th International Conference on Autonomous Agents and MultiAgent Systems. pp. 593--601 (2021)

\bibitem{H29}
Hotelling, H.: Stability in competition. The Economic Journal  \textbf{39}(153),  41--57 (1929)

\bibitem{H11}
Huang, Z.: The mixed strategy equilibrium of the three-firm location game with discrete location choices. Economics Bulletin  \textbf{31}(3),  2109--2116 (2011)

\bibitem{IM21}
Iimura, T., von Mouche, P.: Discrete hotelling pure location games: potentials and equilibria. ESAIM. Proceedings and Surveys  \textbf{71}, ~163 (2021)

\bibitem{KR15}
Kollias, K., Roughgarden, T.: Restoring pure equilibria to weighted congestion games. ACM Transactions on Economics and Computation (TEAC)  \textbf{3}(4),  1--24 (2015)

\bibitem{KP99}
Koutsoupias, E., Papadimitriou, C.: Worst-case equilibria. In: Annual symposium on theoretical aspects of computer science. pp. 404--413. Springer (1999)

\bibitem{KLM21}
Krogmann, S., Lenzner, P., Molitor, L., Skopalik, A.: Two-stage facility location games with strategic clients and facilities. In: Proceedings of the Thirtieth International Joint Conference on Artificial Intelligence. pp. 292--298. ijcai.org (2021)

\bibitem{KLS23}
Krogmann, S., Lenzner, P., Skopalik, A.: Strategic facility location with clients that minimize total waiting time. In: Proceedings of the AAAI Conference on Artificial Intelligence. vol. 37(5), pp. 5714--5721 (2023)

\bibitem{LST12}
Leme, R.P., Syrgkanis, V., Tardos, {\'E}.: The curse of simultaneity. In: Proceedings of the 3rd Innovations in Theoretical Computer Science Conference. pp. 60--67 (2012)

\bibitem{LS80}
Lichtenstein, D., Sipser, M.: Go is polynomial-space hard. Journal of the ACM (JACM)  \textbf{27}(2),  393--401 (1980)

\bibitem{MMP08}
Mavronicolas, M., Monien, B., Papadopoulou, V.G., Schoppmann, F.: Voronoi games on cycle graphs. In: International Symposium on Mathematical Foundations of Computer Science. pp. 503--514. Springer (2008)

\bibitem{MS96}
Monderer, D., Shapley, L.S.: Potential games. Games and economic behavior  \textbf{14}(1),  124--143 (1996)

\bibitem{NRTV07}
Nisan, N., Roughgarden, T., Tardos, E., Vazirani, V.V.: Algorithmic game theory. Cambridge university press (2007)

\bibitem{NS16}
N{\'u}{\~n}ez, M., Scarsini, M.: Competing over a finite number of locations. Economic Theory Bulletin  \textbf{4}(2),  125--136 (2016)

\bibitem{OEPR15}
Obraztsova, S., Elkind, E., Polukarov, M., Rabinovich, Z.: Strategic candidacy games with lazy candidates. In: Proceedings of the Twenty-Fourth International Joint Conference on Artificial Intelligence, {IJCAI} 2015, Buenos Aires, Argentina, July 25-31, 2015. pp. 610--616. {AAAI} Press (2015)

\bibitem{PT13}
Procaccia, A.D., Tennenholtz, M.: Approximate mechanism design without money. ACM Transactions on Economics and Computation (TEAC)  \textbf{1}(4),  1--26 (2013)

\bibitem{R73}
Rosenthal, R.W.: A class of games possessing pure-strategy nash equilibria. International Journal of Game Theory  \textbf{2}(1),  65--67 (1973)

\bibitem{RT02}
Roughgarden, T., Tardos, {\'E}.: How bad is selfish routing? Journal of the ACM (JACM)  \textbf{49}(2),  236--259 (2002)

\bibitem{SS08}
Sengupta, A., Sengupta, K.: A hotelling--downs model of electoral competition with the option to quit. Games and Economic Behavior  \textbf{62}(2),  661--674 (2008)

\bibitem{SW17}
Shen, W., Wang, Z.: Hotelling-downs model with limited attraction. In: Proceedings of the 16th Conference on Autonomous Agents and MultiAgent Systems. pp. 660--668 (2017)

\bibitem{S61}
Stevens, B.H.: An application of game theory to a problem in location strategy. In: Papers of the Regional Science Association. vol.~7, pp. 143--157. Springer (1961)

\bibitem{V02}
Vetta, A.: Nash equilibria in competitive societies, with applications to facility location, traffic routing and auctions. In: The 43rd Annual IEEE Symposium on Foundations of Computer Science, 2002. Proceedings. pp. 416--425. IEEE (2002)

\end{thebibliography}








\end{document}